\newcommand{\enumSpaceTS}{-5.5pt}
\newcommand{\enumSpacePTS}{0pt}
\newcommand{\enumSpacePS}{0pt}
\newcommand{\enumSpaceIS}{2pt}
\newcommand{\plotHeight}{1.57in}
\renewcommand{\@algocf@capt@plain}{above}
\newtheoremstyle{mystyle}{1.5ex}{}{\itshape}{}{\bfseries}{.}{1ex}{}
\theoremstyle{mystyle} \newtheorem{clm}{Claim}
\theoremstyle{mystyle} 
\theoremstyle{mystyle} \newtheorem{lem}{Lemma}
\theoremstyle{mystyle} 
\theoremstyle{mystyle} \newtheorem{thm}{Theorem}
\theoremstyle{mystyle} \newtheorem{ass}{Assumption}
\theoremstyle{mystyle} \newtheorem{defn}{Definition}
\declaretheoremstyle[%
  spaceabove=-6pt,%
  spacebelow=0ex,%
  headfont=\normalfont\itshape,%
  postheadspace=1em,%
  qed=\qedsymbol%
]{prfstyle} 
\declaretheorem[name={Proof},style=prfstyle,unnumbered,
]{prf}
\DeclarePairedDelimiter{\ceil}{\lceil}{\rceil}
\newcommand{\E}{\mathbb{E}} 
\renewcommand{\P}{\mathbb{P}} 
\newcommand{\R}{\mathbb{R}} 
\newcommand{\N}{\mathbb{N}} 
\newcommand{\Z}{\mathbb{Z}} 
\newcommand{\trans}{\mathsf{T}}
\renewcommand{\i}{\mathbf{i}}
\title{Personalized PageRank dimensionality and algorithmic implications}
\author{Daniel Vial, Vijay Subramanian \\ University of Michigan \\ \{dvial,vgsubram\}@umich.edu }
\date{\today}
\begin{document}

\begin{titlingpage}

\maketitle

\begin{abstract}
Many systems, including the Internet, social networks, and the power grid, can be represented as graphs. When analyzing graphs, it is often useful to compute scores describing the relative importance or distance between nodes. One example is Personalized PageRank (PPR), which assigns to each node $v$ a vector whose $i$-th entry describes the importance of the $i$-th node from the perspective of $v$. PPR has proven useful in many applications, such as recommending who users should follow on social networks (if this $i$-th entry is large, $v$ may be interested in following the $i$-th user). Unfortunately, computing $n$ such PPR vectors (where $n$ is the number of nodes) is infeasible for many graphs of interest.

In this work, we argue that the situation is not so dire. Our main result shows that the dimensionality of the set of PPR vectors scales sublinearly in $n$ with high probability, for a certain class of random graphs and for a notion of dimensionality similar to rank. Put differently, we argue that the effective dimension of this set is much less than $n$, despite the fact that the matrix containing these vectors has rank $n$. Furthermore, we show this dimensionality measure relates closely to the complexity of a PPR estimation scheme that was proposed (but not analyzed) by Jeh and Widom. This allows us to argue that accurately estimating all $n$ PPR vectors amounts to computing a vanishing fraction of the $n^2$ vector elements (when the technical assumptions of our main result are satisfied). Finally, we demonstrate empirically that similar conclusions hold when considering real-world networks, despite the assumptions of our theory not holding.
\end{abstract}

\end{titlingpage}

\section{Introduction} \label{secIntro}

Many natural and man-made systems can be represented as graphs, sets of objects (called nodes) and pairwise relations between these objects (called edges). These include the brain, which contains neurons (nodes) that exchange signals through chemical pathways (edges), the Internet, which contains websites (nodes) that are connected via hyperlinks (edges), etc. To study graphs, researchers in diverse domains have used Personalized PageRank (PPR) \cite{gleich2015pagerank}. Informally, PPR assigns to each node $v$ a vector $\pi_v$, where $\pi_v(w)$ describes the importance of $w$ from the perspective of $v$. PPR has proven useful in many practical and graph theoretic applications. Examples include recommending who a user should follow on Twitter \cite{gupta2013wtf} (user $v$ may wish to follow user $w$ if $\pi_v(w)$ is large), and partitioning graphs locally around a seed node \cite{andersen2006local} (the set of nodes $w$ with large $\pi_v(w)$ can be viewed as a community surrounding $v$). Unfortunately, computing all $n$ PPR vectors (where $n$ is the number of nodes) is infeasible for the massive graphs encountered in practice.

In this work, we argue that all $n$ PPR vectors can be accurately estimated by computing only a vanishing fraction of the $n^2$ vector elements, with high probability and for a certain class of random graphs. This arises as a consequence of our main (structural) result, which shows that the dimensionality of the set of PPR vectors scales sublinearly in $n$ with high probability, for the same class of random graphs and for a notion of dimensionality somewhat similar to matrix rank. We note that the estimation scheme considered was first proposed by Jeh and Widom in \cite{jeh2003scaling} without a formal analysis, so another contribution of our paper is to address this lacuna.

The paper is organized as follows. We begin in Section \ref{secPrelim} with preliminary definitions. Section \ref{secRelated} discusses related work. In Section \ref{secMain}, we state our main result. We then discuss algorithmic implications and present empirical results in Section \ref{secAlgorithms}. Finally, we close in Section \ref{secConclusions}.

\section{Preliminaries} \label{secPrelim}

We begin by defining the main ingredients of the paper. Most notation is standard or defined as needed, but we note the following is often used: for $x \in \R^n$ and $J \subset \{1,2,\ldots,n\}$, $x(J) \triangleq \sum_{j \in J} x(j)$, $e_J \in \{ 0,1 \}^n$ satisfies $e_J(j) = 1(j \in J)$ (where $1(\cdot)$ is the indicator function), and $e_j \triangleq e_{ \{ j \} }$. 

\subsection{Directed configuration model (DCM)} \label{secDcmDescription}

We consider a random graph model called the directed configuration model (DCM). For the DCM, we are given realizations of random sequences $\mathbf{N}_n = \{ N_v \}_{v \in V_n}$ and $\mathbf{D}_n = \{ D_v \}_{v \in V_n}$ satisfying $N_v, D_v \in \N\ \forall\ v \in V_n$ and $\sum_{v \in V_n} N_v = \sum_{v \in V_n} D_v \triangleq L_n$ (we assume $V_n = \{1,2,\ldots,n\}$ for simplicity).\footnote{More specifically, we would like $\mathbf{N}_n \sim f_{\textrm{in}}$ \emph{i.i.d.}\ and $\mathbf{D}_n \sim f_{\textrm{out}}$ \emph{i.i.d.}\ for given distributions $f_{\textrm{in}}$ and $f_{\textrm{out}}$, but this does not guarantee $\sum_{v \in V_n} N_v = \sum_{v \in V_n} D_v$. For this reason, the authors of \cite{chen2013directed} provide an method to generate these sequences such that $\sum_{v \in V_n} N_v = \sum_{v \in V_n} D_v$, and $\forall\ i,j \in \N$, $(N_1,\ldots,N_i,D_1,\ldots,D_j) \Rightarrow (\bar{N}_1,\ldots,\bar{N}_i,\bar{D}_1,\ldots,\bar{D}_j)$, where $\{ \bar{N}_l \}_{l=1}^i \sim f_{\textrm{in}}$ \emph{i.i.d.},  $\{ \bar{D}_l \}_{l=1}^j \sim f_{\textrm{out}}$ \emph{i.i.d.}, and $\Rightarrow$ denotes convergence in distribution.} Our goal is to construct a directed graph $G_n = (V_n, E_n)$, such that $v \in V_n$ has in- and out-degree $N_v$ and $D_v$, respectively. For this, we first assign $N_v$ incoming half-edges and $D_v$ outgoing half-edges to each $v \in V_n$; we call these half-edges \textit{instubs} and \textit{outstubs}, respectively. We then randomly pair half-edges in a breadth-first search fashion that proceeds as follows:
\begin{enumerate}[topsep=\enumSpaceTS,partopsep=\enumSpacePTS,parsep=\enumSpacePS,itemsep=\enumSpaceIS]
\item Choose $s \sim V_n$ uniformly. For each of the $D_s$ outstubs assigned to $s$, sample an instub uniformly from the set of all instubs (resampling if the sampled instub has already been paired), and pair the outstub and instub to form a directed edge out of $s$.
\item Let $A_1 = \{ v \in V_n \setminus \{ s \} : \textrm{an outstub of $s$ was paired with an instub of $v$} \}$. For each $v \in A_1$, pair the $D_v$ outstubs assigned to $v$ using the method that $s$'s outstubs were paired in Step 1.
\item Continue iteratively until all half-edges have been paired. Namely, during the $(m+1)$-th iteration we pair the oustubs of all $v \in A_{m}$, where %
$A_{m}$ are nodes at distance $m$ from $s$ (those $v$ for which a path of length $m$ from $s$ to $v$ exists, but no shorter path from $s$ to $v$ exists).
\end{enumerate}

We define this procedure formally in Appendix \ref{secCoupling}. For now, the important points to remember are that the initial node $s$ is chosen uniformly at random from $V_n$, and that, at the end of the $m$-th iteration, the $m$-step neighborhood out of $s$ has been constructed.  We emphasize the resulting graph will be a multi-graph in general, i.e.\ it will contain self-loops (edges $v \rightarrow v$ for $v \in V_n$) and multi-edges (more than one edge from $v \in V_n$ to $w \in V_n$). In \cite{chen2013directed}, the authors provide conditions under which a simple graph results with positive probability as $n \rightarrow \infty$, but these are stronger than the conditions we require to prove our main result. Hence, we assume $G_n$ is a multi-graph.

\subsection{Personalized PageRank (PPR)} \label{secPprDefinition}

To define PPR, we require some notation. First, let $M$ denote the adjacency matrix for some realization of the DCM, i.e.\ $M(i,j) \in \{ 0,1,\ldots,D_i \}$ is the number of directed edges from $i$ to $j$ ($i,j \in V_n$). Next, let $P$ be the row stochastic matrix with $P(i,j) = M(i,j) / D_i\ \forall\ i,j \in V_n$. Finally, let $\alpha_n \in (0,1)$, and let $1_n$ denote the length-$n$ vector of ones. We then have the following.
\begin{defn} \label{defnPpr}
For $v \in V_n$, the PPR row vector $\pi_v$ is the stationary distribution of the Markov chain with transition matrix $P_v = (1-\alpha_n) P + \alpha_n 1_n e_v^{\trans}$.
\end{defn}
Note that $M$, $P$, $P_v$, and $\pi_v$ all depend on $n$. However, to avoid cumbersome notation, we do not explicitly denote this, and the dependence on $n$ will be clear from context.

The Markov chain described in Definition \ref{defnPpr} has the following dynamics: follow a uniform random walk with probability $(1-\alpha_n)$, and jump to $v$ with probability $\alpha_n$. This motivates an interpretation of PPR as a \textit{centrality measure} of the nodes $V_n$ from the \textit{perspective} of $v$. To see this, let $\{X_i \}_{i=0}^{\infty}$ denote the Markov chain with transition matrix $P$. Then one can show (see Appendix \ref{appRenewalReward})
\begin{equation} 
\pi_v(u) = \alpha_n \E_{G_n} \left[ \textstyle \sum_{i=0}^{L-1} 1 ( X_i = u )  \middle| X_0 = v \right]\ \forall\ u \in V_n  ,
\end{equation}
where $L \sim \textrm{geometric}(\alpha_n)$, and where $\E_{G_n}[\cdot]$ denotes expectation with some realization of the DCM $G_n$ held fixed. Hence, $\pi_v(u)$ is large when $u$ is frequently visited (a notion of centrality) on $\textrm{geometric}(\alpha_n)$-length walks beginning at $v$ (a notion of $v$'s perspective).

We note the typical definition of PPR assumes $\alpha_n$ is constant; in contrast, we take $\alpha_n = O ( \frac{1}{\log n} )$. We argue in Section \ref{secChoiceOfAlpha} that this is appropriate when considering the asymptotic behavior of PPR on the DCM. Specifically, we argue that the size of the set of nodes that are important to $v$ grows with the graph, but grows slowly enough that a notion of $v$'s perspective remains, when $\alpha_n = O ( \frac{1}{\log n} )$. (In contrast, this set has constant size when $\alpha_n$ is constant.) Additionally, the spectral gap of $P_v$ is lower bounded by $\alpha_n$, so $\alpha_n \rightarrow 0$ as $n \rightarrow \infty$ results in this lower bound vanishing asymptotically. We note a line of work by Boldi \textit{et al.} \cite{boldi2005pagerank,boldi2009pagerank} analyzed the limit of PPR as $\alpha_n \rightarrow 0$ for a fixed graph $G_n$; in contrast, we fix a value of $\alpha_n$ for each $G_n$.

Finally, we emphasize the distinction between PPR and the more commonly known notion of PageRank, which we refer to as global PageRank. In short, global PageRank is the average of all PPR vectors, i.e.\ $\frac{1}{n} \sum_{v \in V_n} \pi_v$. Hence, global PageRank is a centrality measure from the perspective of a uniform node. More generally, given a distribution $\sigma_n$ on $V_n$, the PPR corresponding to $\sigma_n$ is $\mathrm{PPR}(\sigma_n) \triangleq \E_{\sigma_n} [ \pi_{\mathcal{V}} ]$, where the random variable $\mathcal{V}$ has $\sigma_n$ as its distribution.

\subsection{PPR dimensionality and algorithmic implications} \label{secMainMeas}

Our main goal is to investigate the dimensionality of the set of PPR vectors, $\{ \pi_v  \}_{v \in V_n}$. A standard measure of the dimension of such a set is the size of its largest linearly independent subset. However, $\forall\ n \in \N$, $\{ \pi_v  \}_{v \in V_n}$ is a linearly independent set itself \footnote{To see why, first suppose $( I - (1-\alpha_n) P )$ is not invertible. Then $( I - (1-\alpha_n) P ) \xi = 0$ for some $\xi \neq 0$, so $\xi / (1-\alpha_n) = P \xi$. But, by the Perron-Frobenius theorem, $P$ cannot have eigenvalue $1/(1-\alpha_n) > 1$, since it is row stochastic. Hence, $( I - (1-\alpha_n) P )$ is invertible, so by \eqref{eqPprPowerIter}, the matrix with rows $\{ \pi_v  \}_{v \in V_n}$ is invertible as well.}, so we will instead consider a different notion of dimensionality. This notion is motivated by the following observation: given vectors $\{ x_i \}$, the size of a linearly independent subset of $\{ x_i \}$ can be bounded by $| X' \cup X'' |$, where $X' \subset \{ x_i \}$ and $X'' = \{ x_i \notin X' : x_i \textrm{ is not a linear combination of } X' \}$. We will relax this slightly, by only including in $X''$ those $x_i \notin X'$ that are not ``close'' to a linear combination of $X'$. In particular, given $\epsilon > 0$, our notion of dimensionality is $\Delta_n(\epsilon)$, where
\begin{gather}
\Delta_n(\epsilon) = \min_{K_n \subset V_n} \left( | K_n | + | \{ v \in V_n \setminus K_n : B_v(\epsilon) \textrm{ holds} \} | \right) , \label{eqDimMeasure} \\
B_v (\epsilon) = \left\{ \min_{ \{ \beta_v(k)  \}_{k \in K_n} \subset \R }  \left\| \pi_v - \left( \alpha_n e_v^{\trans} + \textstyle \sum_{k  \in K_n} \beta_v(k) \pi_k \right) \right\|_1 \geq \epsilon \right\} \label{eqApproxFails} .
\end{gather}
Note we can also interpret \eqref{eqDimMeasure} algorithmically: if $\{ \pi_k \}_{k \in K_n}$ is known, $\pi_v$ can be accurately estimated by computing $\{ \beta_v(k) \}_{k \in K_n}$, when $v \in V_n \setminus K_n$ and $B_v(\epsilon)$ fails. Hence, \eqref{eqDimMeasure} is the number of vectors that must be computed to ensure all vectors are accurately estimated (see Section \ref{secAlgorithms}). We note $\alpha_n e_v^{\trans}$ is included in \eqref{eqApproxFails} because it is a known component of $\pi_v$; indeed, by Definition \ref{defnPpr},
\begin{equation}\label{eqPprPowerIter}
\pi_v = (1-\alpha_n) \pi_v P + \alpha_n e_v^{\trans} \Rightarrow \pi_v = \alpha_n e_v^{\trans} ( I - (1-\alpha_n) P )^{-1} = \alpha_n e_v^{\trans} + \alpha_n e_v^{\trans} \textstyle \sum_{i=1}^{\infty} (1-\alpha_n)^i P^i .
\end{equation}

For ease of analysis, we will upper bound $\Delta_n(\epsilon)$ by choosing $K_n$ solely based on the degree sequence. Specifically, let $\psi_n : \N^n \times \N^n \rightarrow \{0,1\}^n$, define $\mathbf{U}_n = \{ U_v : v \in V_n \} = \psi_n( \mathbf{N}_n, \mathbf{D}_n )$, and let $K_n = \{ v \in V_n : U_v = 0 \}$. For $\epsilon > 0$, we then define
\begin{equation}\label{eqDimMeasureRelax}
\Delta_{\psi_n}(\epsilon) = | K_n | + | \{ v \in V_n \setminus K_n : B_v(\epsilon) \textrm{ holds} \} | ,
\end{equation}
where the subscript $\psi_n$ indicates that the right side depends on $( \mathbf{N}_n, \mathbf{D}_n )$ through $\psi_n$. Our main result, Theorem \ref{thmSublinear}, shows that $\Delta_{\psi_n}(\epsilon)$ scales sublinearly in $n$ with high probability, under certain assumptions on the degree sequence and for a particular choice of $\psi_n$. In other words, though $\{ \pi_v  \}_{v \in V_n}$ is a linearly independent set (for every finite $n$), our notion of dimensionality suggests the effective dimension is (asymptotically) much smaller. 

We note that, in addition to bounding $\Delta_n(\epsilon)$ by $\Delta_{\psi_n}(\epsilon)$, we will  later bound $| \{ v \in V_n \setminus K_n : B_v(\epsilon) \textrm{ holds} \} |$ by choosing a specific $\{ \beta_v(k) \}_{k \in K_n}$, which is not necessarily the solution of the optimization problem in \eqref{eqApproxFails}. Hence, the exact solution of \eqref{eqDimMeasure} remains an open question. Furthermore, in light of the preceding algorithmic interpretation of \eqref{eqDimMeasure}, another open problem is  to solve \eqref{eqDimMeasure} while ensuring $\{ \beta_v(k) \}_{k \in K_n}$ can be efficiently computed when $v \in V_n \setminus K_n$ and $B_v(\epsilon)$ fails.

Finally, recall $(\mathbf{N}_n, \mathbf{D}_n)$ is a random sequence; hence, with $\psi_n$ fixed, $(\mathbf{N}_n, \mathbf{D}_n, \mathbf{U}_n)$ is a random sequence as well. Towards proving our main result, intermediate results will be established with $(\mathbf{N}_n, \mathbf{D}_n, \mathbf{U}_n)$ held fixed, after which conditional expectation with respect to $(\mathbf{N}_n, \mathbf{D}_n, \mathbf{U}_n)$ will be taken. This motivates the following definitions: $\E_n [ \cdot ] = \E [ \cdot | \mathbf{N}_n, \mathbf{D}_n, \mathbf{U}_n], \P_n [ \cdot ] = \P [ \cdot | \mathbf{N}_n, \mathbf{D}_n, \mathbf{U}_n]$.

\section{Related work} \label{secRelated}

Before proceeding to our results, we comment on relationships to prior work. We focus on \cite{jeh2003scaling} and \cite{chen2017generalized}, the papers most closely related to our own.

In \cite{jeh2003scaling}, Jeh and Widom propose a scheme for estimating all PPR vectors, $\{ \pi_v \}_{v \in V_n}$. The scheme relies crucially on the Hubs Theorem in \cite{jeh2003scaling}, which states that the PPR vector $\pi_v, v \in V_n \setminus K_n$, can be written as a linear combination of $\{ \pi_k \}_{k \in K_n}$ and another vector. The Hubs Theorem is central to our results as well; an alternative formulation appears as Lemma \ref{lemLinearCombo} here. We discuss the algorithm of Jeh and Widom in more detail in Section \ref{secAlgorithms}.

Unfortunately, the authors of \cite{jeh2003scaling} present no analysis of their scheme. Hence, it is unclear how $K_n$ should be chosen and how large it must be to guarantee accurate estimation. Our work addresses this shortcoming. Specifically, as discussed briefly in the introduction and in more detail in Section \ref{secAlgorithms}, our dimensionality measure \eqref{eqDimMeasureRelax} relates to the complexity of this scheme.

In \cite{chen2017generalized}, Chen, Litvak, and Olvera-Cravioto consider the limiting value of $\textrm{PPR}(\sigma_n)$ as $\sigma_n$ weakly converges to probability distribution $\sigma$. Specifically, they show that the PPR value of a uniformly chosen node is given by the solution of a recursive distributional equation (RDE) \cite{aldous2005survey}. They also show (roughly) that PPR values follow a power law when in-degrees follow a power law, establishing the so-called ``power law hypothesis.'' Similar results were later established for a family of inhomogeneous directed graphs in \cite{lee2017pagerank}. On the other hand, \cite{chen2017generalized} was preceded by \cite{chen2014pagerank}, where the power law hypothesis was established for global PageRank; further back, the hypothesis was studied under more restrictive assumptions in \cite{litvak2007degree,volkovich2010asymptotic,volkovich2007determining}.

While \cite{chen2014pagerank,chen2017generalized,lee2017pagerank,litvak2007degree,volkovich2010asymptotic,volkovich2007determining} share a goal of understanding the power law behavior of PPR on random graphs, our goal is to instead understand structural properties of the PPR vectors collectively, with the focus of this paper being dimensionality. Since dimensionality carries with it algorithmic implications, our work is perhaps more useful from a practical perspective when compared to this body of work. However, the analytical approaches of these works will be extremely useful to us. Specifically, the proof of our main result follows an approach similar to \cite{chen2017generalized}, and we use a modified version of Lemma 5.4 from \cite{chen2017generalized}, which appears as Lemma \ref{lemCouplingFailure} here.

In short, our work can be seen as an attempt to combine the strengths of \cite{jeh2003scaling}, which is entirely algorithmic, and \cite{chen2017generalized}, which is entirely analytical. Specifically, we leverage the analytical approach from \cite{chen2017generalized} to obtain guarantees on the algorithm from \cite{jeh2003scaling}.

More broadly, references for PageRank and PPR include \cite{page1999pagerank}, in which PageRank and PPR were first proposed, and \cite{haveliwala2002topic}, an early study of PPR (there called ``topic-sensitive'' PageRank). Beyond \cite{jeh2003scaling}, many other works have proposed efficient computation and estimation algorithms for PPR; a small sample includes those using linear algebraic techniques \cite{shin2015bear,tong2008random}, those using dynamic programming \cite{andersen2008local,andersen2006local}, and those using randomized schemes \cite{avrachenkov2007monte,lofgren2016personalized}. In addition to the body of work on the power law hypothesis, analysis of PPR on random graphs includes \cite{avrachenkov2015pagerank}. Here it is shown that, for undirected random graphs with a certain expansion property, $\textrm{PPR}(\sigma_n)$ can be well approximated (in the total variation norm) as a convex combination of $\sigma_n$ and the degree distribution.

The DCM was proposed and analyzed in \cite{chen2013directed} as an extension of the (undirected) configuration model, the development of which began in \cite{bender1978asymptotic,bollobas1980probabilistic,wormald1980some}. The configuration model (and variants) have been studied in detail; for example, \cite{van2005distances} considers graph diameter in this model, while \cite{molloy1995critical} studies the emergence of a giant component.

\section{Dimensionality analysis} \label{secMain}

In this section, we present our dimensionality analysis. We begin by defining our assumptions and proposing a specific choice of $\alpha_n$. We then state the result and comment on our assumptions.

\subsection{Assumptions on degree sequence}

To prove our main result, we require Assumption \ref{assDegSeq}, which states that certain empirical moments of the sequence $(\mathbf{N}_n, \mathbf{D}_n, \mathbf{U}_n)$ exist with high probability, and furthermore, converge to limits at a uniform rate. Since we follow the analytical approach of \cite{chen2017generalized}, this assumption is similar to the main assumption in that work. We offer more specific comments shortly.

\begin{ass} \label{assDegSeq}
We have $\P [ \Omega_n^C ] = O(n^{-\delta})$ for some $\delta \in (0,1)$, where $\Omega_n = \cap_{i=1}^6 \Omega_{n,i}$ and for some constants $\gamma, p \in (0,1)$ and $\eta_i, \zeta^*, \lambda^* \in (0, \infty)$,
\begin{equation}
\begin{split}
\Omega_{n,1} & = \left\{ \left| \frac{\sum_{h=1}^n N_h}{n} - \eta_1  \right| \leq n^{-\gamma}  \right\} , \\
\Omega_{n,2} & = \left\{ \left| \frac{\sum_{h=1}^n N_h D_h}{n} - \eta_2  \right| \leq n^{-\gamma}  \right\} , \\
\Omega_{n,3} & = \left\{ \left| \frac{\sum_{h=1}^n U_h N_h^2}{n} - \eta_3 \right| \leq n^{-\gamma}  \right\}  , \\
\end{split}
\quad\quad
\begin{split}
\Omega_{n,4} & = \left\{ \left| \frac{\sum_{h=1}^n U_h D_h}{\sum_{h=1}^n U_h} - \zeta^*  \right| \leq n^{-\gamma}  \right\} , \\
\Omega_{n,5} & = \left\{ \left| \frac{\sum_{h=1}^n U_h N_h}{\sum_{h=1}^n U_h} - \lambda^*  \right| \leq n^{-\gamma}  \right\} , \\
\Omega_{n,6} & = \left\{ \left| \frac{\sum_{h=1}^n U_h N_h}{\sum_{h=1}^n N_h} - p  \right| \leq n^{-\gamma}  \right\} .
\end{split}
\end{equation}
Furthermore, we have $\zeta \triangleq \eta_2 / \eta_1 > 1$, and we define $\lambda = \eta_3 / \eta_1$.
\end{ass}

The constants $\zeta$ and $p$ will appear in our main result, and both have simple interpretations: letting $v_n$ satisfy $\P [ v_n = v ] \propto N_v\ \forall\ v \in V_n, n \in \N$, it is straightforward to show $\lim_{n \rightarrow \infty} \E [ D_{v_n} | \Omega_n ] = \zeta$ and $\lim_{n \rightarrow \infty} \E [ U_{v_n} | \Omega_n ] = p$, i.e.\ $\zeta$ and $p$ give the limiting expected out-degree and the limiting probability of belonging to $V_n \setminus K_n$, respectively. (The other constants in Algorithm \ref{assDegSeq} will not appear in our main result, but they have similar interpretations.) 
We also remark that $\zeta > 1$ is not necessary to establish our results but, given this interpretation, is the more interesting case.

\subsection{Choice of $\alpha_n$} \label{secChoiceOfAlpha}

As mentioned in Section \ref{secPprDefinition}, we take $\alpha_n = O ( \frac{1}{\log n} )$ in this work. Having defined Assumption \ref{assDegSeq}, we choose a specific value of $\alpha_n$. For this, we first present the following claim.
\begin{clm} \label{clmChoiceOfAlpha}
Let $\tau \in (0,1)$ be a constant, and let $s \sim V_n \setminus K_n$ uniformly. For $l \in \N$, let $V_{n,s}(l)$ denote the $l$-step neighborhood out of $s$, i.e.\ $V_{n,s}(l) = \{ v \in V_n : \textrm{$\exists$ a path of at most $l$ steps from $s$ to $v$} \}$. If $\alpha_n = \rho \log(1/\tau) \log (\zeta) / \log (n)  = O \left(1 / \log n \right)$ for some $\rho > 1$, let $l = \ceil*{ \log(1/\tau) / \alpha_n }$. Then
\begin{equation}
\liminf_{n \rightarrow \infty} \pi_s \left( V_{n,s} \left( l \right) \right) \geq 1 - \tau\ a.s., \quad \E \left[ \left| V_{n,s} \left( l \right) \right| \middle| \Omega_n \right] = O \left( n^{1/\rho} \right) .
\end{equation}
If instead $\alpha_n = \alpha$ is a constant, let $l = \ceil*{  \log(\tau) / \log(1-\alpha)  }$. Then
\begin{equation}
\liminf_{n \rightarrow \infty} \pi_s \left( V_{n,s} \left( l \right) \right) \geq 1 - \tau\ a.s., \quad \E \left[ \left| V_{n,s} \left( l \right) \right| \middle| \Omega_n \right] = O ( 1 ) .
\end{equation}
\end{clm}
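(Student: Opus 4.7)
The plan is to handle the PPR-mass inequality and the expected neighborhood-size bound separately, since the former is a deterministic consequence of the power-series representation in \eqref{eqPprPowerIter}, while the latter requires a branching-process argument for the DCM exploration.

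For the PPR mass, I would use \eqref{eqPprPowerIter} to write $\pi_s = \alpha_n e_s^{\trans} \sum_{i=0}^{\infty} (1-\alpha_n)^i P^i$. Since $e_s^{\trans} P^i$ is supported on nodes reachable from $s$ in exactly $i$ steps, it lies in $V_{n,s}(l)$ whenever $i \leq l$. Summing over $V_{n,s}(l)$ therefore gives $\pi_s(V_{n,s}(l)) \geq \alpha_n \sum_{i=0}^{l} (1-\alpha_n)^i = 1 - (1-\alpha_n)^{l+1}$. In the constant-$\alpha_n$ case, the choice $l = \ceil{\log(\tau)/\log(1-\alpha)}$ makes $(1-\alpha)^{l+1} \leq \tau$ immediately. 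In the vanishing case, I would use $\log(1-\alpha_n) \leq -\alpha_n$ together with $l \geq \log(1/\tau)/\alpha_n$ to conclude $(1-\alpha_n)^{l+1} \leq \tau$ for all large $n$. Since this bound is deterministic once the graph is fixed, both the a.s.\ and liminf statements follow with no additional work.

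For the expected neighborhood size, let $Z_m$ denote the set of nodes at exact distance $m$ from $s$, so $|V_{n,s}(l)| = \sum_{m=0}^{l} |Z_m|$. My plan is to dominate the BFS exploration by a two-type branching process: the root has offspring mean $\E_n[D_s \mid s \in V_n \setminus K_n] = \sum_v U_v D_v / \sum_v U_v$, which converges to $\zeta^*$ by $\Omega_{n,4}$, while every non-root vertex arises by pairing an outstub from a depth-$(m-1)$ node with a uniformly chosen instub, making its label approximately size-biased by $N_v$ and its expected out-degree $\sum_v (N_v/L_n) D_v \to \eta_2/\eta_1 = \zeta$ on $\Omega_{n,1} \cap \Omega_{n,2}$. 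Iterating yields $\E_n[|Z_m|] \lesssim \zeta^{m-1} \zeta^*$ for $m \geq 1$, and summing the resulting geometric series gives $\E[|V_{n,s}(l)| \mid \Omega_n] = O(\zeta^l)$. Plugging in $l$: in the constant-$\alpha_n$ case $l$ is a constant, so $\zeta^l = O(1)$; in the vanishing case, $l \alpha_n \leq \log(1/\tau) + \alpha_n$, and the choice $\alpha_n = \rho \log(1/\tau) \log(\zeta)/\log n$ then gives $l \log \zeta \leq \log(n)/\rho + \log \zeta$, hence $\zeta^l = O(n^{1/\rho})$.

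The main obstacle is making the branching-process domination rigorous, since the DCM exploration is not literally a Galton-Watson tree: already-paired instubs cannot be reused, and two distinct outstubs may sample the same instub, producing \emph{collisions} that tie together the tree structure. These collisions should be rare as long as the explored neighborhood stays small compared to $L_n = \Theta(n)$, which is the regime here since $\zeta^l = O(n^{1/\rho})$ with $\rho > 1$. I would handle this either by invoking Lemma \ref{lemCouplingFailure}, a modification of Lemma~5.4 of \cite{chen2017generalized}, which bounds the probability that the BFS fails to couple with a tree up to a prescribed depth, or via a direct first-moment argument showing that the expected number of collisions among the first $T$ instub samples is $O(T^2/L_n)$, a negligible correction to the geometric recursion above. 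Combined with the uniform moment control that Assumption \ref{assDegSeq} provides on $\Omega_n$, this delivers the claimed $O(\zeta^l)$ bound and completes both cases.
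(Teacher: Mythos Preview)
Your proposal is correct and follows essentially the same approach as the paper: the PPR-mass bound via the power series \eqref{eqPprPowerIter} and the neighborhood-size bound via branching-process domination giving $\E[|V_{n,s}(l)|\mid\Omega_n]=O(\zeta^l)$ are exactly what the paper does in Appendix~\ref{appProofChoiceOfAlpha}.

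One simplification worth noting: you over-engineer the domination step. You worry about collisions and suggest invoking Lemma~\ref{lemCouplingFailure} or a first-moment collision count, but neither is needed for an \emph{upper} bound on $|V_{n,s}(l)|$. In the simultaneous construction (Algorithm~\ref{algSimulConstruction}), every time an outstub is paired with an instub---whether or not the instub's node is new to the graph---a node with the same attributes is added to the tree. Hence the number of graph nodes reached by depth $l$ is deterministically at most the number of tree nodes $\sum_{j=0}^{l+1}\hat Z_j$, and you only need the pure branching-process expectation $\E_n[\hat Z_j]=\zeta_n^*\zeta_n^{j-1}$ (which on $\Omega_n$ is $O(\zeta^{j-1})$). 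Collisions can only make the graph neighborhood smaller, so no correction or coupling-failure probability enters. The paper exploits exactly this observation, borrowing the bound directly from the argument preceding \eqref{eqNumberOfNodesInTree} in Appendix~\ref{appCouplingFailureE0Fm}.
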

\begin{prf}
See Appendix \ref{appProofChoiceOfAlpha}.
\end{prf}
Loosely speaking, Claim \ref{clmChoiceOfAlpha} states that, for both choices of $\alpha_n$, all but $\tau$ of $s$'s PPR concentrates on a small neighborhood surrounding $s$, for any $\tau > 0$. The difference is the size of this neighborhood: when $\alpha_n = O ( \frac{1}{\log n} )$, the neighborhood grows with the graph; when $\alpha_n$ is constant, the neighborhood has constant size. From the PPR interpretation of Section \ref{secPprDefinition}, this suggests that the number of nodes that are important to $s$ grows in the former case but remains fixed in the latter case. We believe the former case is more appropriate. Additionally, the growth of the important set of nodes remains sublinear in $n$ in the former case; intuitively, this says that a vanishing fraction of all nodes are important to $s$, i.e.\ a notion of $s$'s perspective remains. Finally, Claim \ref{clmChoiceOfAlpha} suggests that $\Delta_n(\epsilon)$ is necessarily linear when $\alpha_n$ is constant: since PPR vectors are supported on constant size sets in this case, we expect $K_n$ must be linear to cover a linear number of these sets.

\subsection{Main result} \label{secMainResultPresent}

We now turn to our main result, which relies on the following key lemma.
\begin{lem} \label{lemMainTailBound}
Given Assumption \ref{assDegSeq}, we have for $s \sim V_n \setminus K_n$ uniformly and for any $\epsilon > 0$,
\begin{align}
\P \left[ B_s(\epsilon) \middle| U_s = 1 \right] = O \left( n^{ -  c(\epsilon) } \right) , \quad c(\epsilon) \triangleq  \min \left\{ \delta ,  \tfrac{\log (1/p)}{ 2 \log ( \zeta / p ) } , \tfrac{((1-p) \epsilon )^2}{  2 \rho \log(1/\tau) \log \zeta } \right\} > 0 ,
\end{align}
where $\delta,p,\zeta$ are defined in Assumption \ref{assDegSeq} and $\rho, \tau$ are defined in Claim \ref{clmChoiceOfAlpha}.
\end{lem}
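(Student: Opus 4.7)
The strategy is to exhibit a specific feasible choice of coefficients $\{\beta_s^*(k)\}_{k \in K_n}$ in the optimization in \eqref{eqApproxFails} and bound the resulting residual in $\ell_1$. Invoking the hubs-style decomposition (Lemma \ref{lemLinearCombo}), I would write
$$ \pi_s = \alpha_n e_s^{\trans} + \sum_{k \in K_n} \beta_s^*(k) \pi_k + r_s , $$
where $\beta_s^*(k)$ arises from the (discounted) first-hitting probabilities of $K_n$ by a random walk started at $s$, and the residual $r_s$ is supported by the PPR contribution of walks that remain entirely in $V_n \setminus K_n$. Since $\beta_s^*$ is feasible in \eqref{eqApproxFails}, $\{B_s(\epsilon)\text{ holds}\} \subseteq \{\|r_s\|_1 \geq \epsilon\}$, so it suffices to bound $\P[\|r_s\|_1 \geq \epsilon \mid U_s = 1]$.

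Using the random-walk interpretation of PPR from Section \ref{secPprDefinition}, and letting $T_K = \inf\{i \geq 1 : X_i \in K_n\}$ denote the first hitting time of $K_n$, a direct calculation shows $\|r_s\|_1 = \P_{G_n}[T_K > L \mid X_0 = s]$ for $L \sim \textrm{geometric}(\alpha_n)$. I would set $\tau \in (0,1)$ proportional to $\epsilon$ and $l = \lceil \log(1/\tau)/\alpha_n \rceil$ as in Claim \ref{clmChoiceOfAlpha}; the contribution from $L > l$ is then bounded deterministically by $\tau$, reducing the problem to controlling the graph-level event $\P_{G_n}[T_K > l \mid X_0 = s] \geq \epsilon/2$.

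To analyze this walk event on the random graph, I would follow the branching-process coupling approach of \cite{chen2017generalized}: the first $l$ steps of the neighborhood exploration out of $s$ can be coupled with a size-biased branching process with mean offspring $\zeta$, each discovered node having its $U$-value drawn independently as Bernoulli$(p)$ (the correspondence with $p$ flowing from event $\Omega_{n,6}$). Lemma \ref{lemCouplingFailure} bounds the coupling-failure probability by $O(n^{-\log(1/p)/(2\log(\zeta/p))})$, producing the second term of $c(\epsilon)$. On the successfully coupled side, $\P_{G_n}[T_K > l \mid X_0 = s]$ becomes an $\alpha_n$-weighted sum over $l \approx \log n/(\rho \log \zeta)$ independent Bernoulli indicators, and a Hoeffding/Chernoff bound on its deviation from the mean $\approx \alpha_n/(1-p)$ produces the required tail $\exp(-((1-p)\epsilon)^2 \log n /(2\rho\log(1/\tau)\log\zeta))$, i.e.\ the third term of $c(\epsilon)$. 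A union bound over $\Omega_n^C$ (contributing $\delta$ via Assumption \ref{assDegSeq}), coupling failure, and Chernoff deviation then yields the claimed exponent $c(\epsilon)$.

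I expect the main difficulty to be the careful bookkeeping in the first step: writing the hubs decomposition in a form that reduces $\|r_s\|_1$ cleanly to a walk-probability statement, then matching constants in the Chernoff step to the specific denominator $2\rho\log(1/\tau)\log\zeta$ appearing in $c(\epsilon)$. The coupling argument itself is imported from \cite{chen2017generalized} via Lemma \ref{lemCouplingFailure}, and the high-probability moment estimates on the degree sequence are supplied directly by Assumption \ref{assDegSeq}.
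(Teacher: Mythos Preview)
Your overall architecture matches the paper's --- hubs decomposition with $\beta_s^*(k) = \tilde\pi_s(k)/(\alpha_n + (1-\alpha_n)\tilde\pi_s(K_n))$, reduction to a walk-in-$V_n \setminus K_n$ quantity, branching-process coupling, then concentration on the tree --- but there is a genuine gap in the horizon bookkeeping. You truncate the geometric variable at $l = \lceil \log(1/\tau)/\alpha_n \rceil \approx \log n / (\rho \log \zeta)$ and then invoke Lemma~\ref{lemCouplingFailure} at this same depth. But that lemma gives coupling failure $O(n^{-\delta} + \zeta^{m}/\sqrt{n})$ at depth $m$; with $m = l$ this is $n^{1/\rho - 1/2}$, which need not even vanish (only $\rho > 1$ is assumed). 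The rate $n^{-\log(1/p)/(2\log(\zeta/p))}$ you attribute to the lemma arises only after \emph{choosing} the shorter depth $m_n = \log n / (2\log(\zeta/p))$, and truncating $L$ at that shorter $m_n$ is useless since $(1-\alpha_n)^{m_n}$ is bounded away from zero ($\alpha_n m_n \to 0$).

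The paper's fix is to truncate the \emph{exploration} rather than $L$: Lemma~\ref{lemL1bound} gives
\[
\|r_s\|_1 \;\leq\; \alpha_n\bigl(\mu_s^{(m-1)}(V_n\setminus K_n)-1\bigr) \;+\; e_s^{\trans} (1-\alpha_n)^m \tilde P^m e_{V_n\setminus K_n},
\]
and the second ``tail'' term, once transferred to the tree, equals $X_m = \sum_{\i \in \hat A_m} U_{\i}\, \hat\mu_\phi(\i)$ with $\E_n[X_m] = ((1-\alpha_n)\hat p_n)^m$. Markov then controls it by $O(p^{m})$ --- the walk restricted to $V_n\setminus K_n$ survives each step with probability $\approx p$, which is precisely the decay mechanism your geometric-truncation route misses. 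Balancing $\zeta^{m}/\sqrt{n}$ against $p^{m}$ is what produces the middle constituent of $c(\epsilon)$. A secondary imprecision: the remaining tree quantity $\alpha_n \sum_{j<m} X_j$ is not a sum of independent Bernoullis but a bounded-increment martingale in $j$; the paper uses an Azuma-type argument (Claims~\ref{clmMartingale}--\ref{clmAzumaApplication}) whose variance proxy $\sum_j (c_j(1-\alpha_n)^j)^2 < \alpha_n/(1-\hat p_n)^2$ is what delivers the factor $(1-p)^2$ in the exponent.
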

The proof of Lemma \ref{lemMainTailBound} is lengthy; we outline it in Appendix \ref{secMainProof} and provide the details in Appendix \ref{appMainProofs}. At a high level, our approach is similar to \cite{chen2017generalized} and proceeds as follows:
\begin{enumerate}[topsep=\enumSpaceTS,partopsep=\enumSpacePTS,parsep=\enumSpacePS,itemsep=\enumSpaceIS]
\item Show that, for a certain choice of $\{ \beta_s(k) \}_{k \in K_n}$, the error term $\| \pi_s - ( \alpha_n e_s^{\trans} + \sum_{k \in K_n} \beta_s(k) \pi_k ) \|_1$ in $B_s(\epsilon)$ can be bounded by only examining the $m$-step neighborhood out of $s$. 
\item Argue that, conditioned on certain events not occurring during the first $m$ steps of the graph construction, this bound follows the same distribution as a quantity defined on a tree.
\item Bound the probability of these events occurring during the first $m$ iterations. 
\item Bound $\P [ B_s(\epsilon) | U_s = 1 ] $ conditioned on the events not occurring by analyzing the tree quantity.
\end{enumerate}

Before proceeding, we pause to state the choice of $\{ \beta_s(k) \}_{k \in K_n}$ from Step 1, which will be used in Section \ref{secAlgorithms}. First, for any realization of the DCM and for $v \in V_n \setminus K_n$, we define
\begin{equation}\label{eqModifiedMCtransition}
\tilde{P}_v = (1-\alpha_n) \tilde{P} + ( \alpha_n e_{V_n \setminus K_n} + e_{K_n} ) e_v^{\trans} , \quad \tilde{P}(i,j) = U_i P(i,j) , 
\end{equation} 
where $P$ is defined in Section \ref{secPprDefinition}. Note $\tilde{P}_v$ is the transition matrix of a Markov chain similar to that in Definition \ref{defnPpr}; however, upon reaching $K_n$, the random walker jumps back to $v$ with probability 1. Letting $\tilde{\pi}_v$ denote the stationary distribution of this chain, one can show (see Appendix \ref{secErrorMstep})
\begin{equation}\label{eqLinearCombo}
\pi_v(w) = \frac{\alpha_n U_w \tilde{\pi}_v(w) + \sum_{k \in K_n} \tilde{\pi}_v(k) \pi_k(w) }{\alpha_n + (1-\alpha_n) \tilde{\pi}_v(K_n) }\ \forall\ w \in V_n .
\end{equation}
We note \eqref{eqLinearCombo} is an alternate formulation of the Hubs Theorem. With \eqref{eqLinearCombo} in mind, we define
\begin{equation}\label{eqChoiceOfBetaV}
\beta_v(k) = \frac{ \tilde{\pi}_v(k) }{ \alpha_n + (1-\alpha_n) \tilde{\pi}_v(K_n)  }\ \forall\ k \in K_n ,
\end{equation}
and we take $\{ \beta_s(k) \}_{k \in K_n}$ as in \eqref{eqChoiceOfBetaV} in Step 1. We also note this provides another interpretation of Lemma \ref{lemMainTailBound}. Informally, since $\alpha_n \rightarrow 0$, \eqref{eqChoiceOfBetaV} implies $\sum_{k \in K_n} \beta_s(k) \approx 1$ for large $n$, so $\sum_{k \in K_n} \beta_s(k) \pi_k$ is nearly a convex combination. Hence, when $B_s(\epsilon)$ fails, $\pi_s - \alpha_n e_s^{\trans} \approx \pi_s$ is close to the convex hull of $\{ \pi_k \}_{k \in K_n}$, a small subset of the $n$-dimensional simplex to which $\pi_s$ belongs.

We now turn to the main result. First, note Lemma \ref{lemMainTailBound} will allow us to show the second summand in \eqref{eqDimMeasureRelax} is bounded (in expectation) by $O(n^{1-c(\epsilon)})$, which is sublinear. Hence, to ensure \eqref{eqDimMeasureRelax} is sublinear, it only remains to choose $\psi_n$ such that $|K_n|$ is sublinear as well. On the other hand, $\Omega_{n,6}$ in Assumption \ref{assDegSeq} requires $K_n$ to contain a constant fraction of all instubs, suggesting we should choose $K_n$ to be nodes with high in-degree. Together, these observations motivate our choice of $\psi_n$: for $\kappa \in (0,1)$ we define $\psi_{n,\kappa}$ as the function that chooses the $n^{\kappa}$ nodes of highest in-degree as $K_n$. Formally, $\psi_{n,\kappa}$ is the function that maps $(\mathbf{N}_n,\mathbf{D}_n)$ to $\mathbf{U}_n = \{ U_v : v \in V_n \}$ with $U_v = 1 ( \sigma^{-1}(v) > n^{\kappa} )$, where $\sigma : \{1,2,\ldots,n\} \rightarrow V_n $ is such that $N_{\sigma(1)} \geq N_{\sigma(2)} \geq \cdots \geq N_{\sigma(n)}$. 

With this in place, we present Theorem \ref{thmSublinear}. Together with Assumption \ref{assDegSeq}, it states the following: when certain moments of the degree sequence exist, and when a sublinear number of nodes contains a constant fraction of instubs, the dimension of the set of PPR vectors scales sublinearly. 
\begin{thm} \label{thmSublinear}
Assume $\exists\ \kappa \in (0,1)$ such that the sequence $\{ \mathbf{N}_n,\mathbf{D}_n,\mathbf{U}_n \}_{n \in \N}$ satisfies Assumption \ref{assDegSeq} when $\mathbf{U}_n = \psi_{n,\kappa} (\mathbf{N}_n,\mathbf{D}_n)\ \forall\ n \in \N$. Then $\forall\ \epsilon > 0$,
\begin{equation}
\E [ \Delta_{\psi_{n,\kappa}}(\epsilon) ]  = O \left( n^{ \max \{ \kappa, 1-c(\epsilon) \} } \right) ,
\end{equation}
where $c(\epsilon)$ is defined in Lemma \ref{lemMainTailBound}. As a consequence, $\forall\ \bar{c} \in ( \max \{ \kappa, 1-c(\epsilon) \} , 1 )$ and $C > 0$,
\begin{equation}
\P \Big[ \Delta_{\psi_{n,\beta}} ( \epsilon ) \geq C n^{\bar{c}} \Big] = O \left( n^{ \max \{ \kappa, 1-c(\epsilon) \} - \bar{c} } \right) .
\end{equation}
\end{thm}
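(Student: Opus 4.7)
The plan is to decompose $\Delta_{\psi_{n,\kappa}}(\epsilon) = |K_n| + |\{v \in V_n \setminus K_n : B_v(\epsilon) \text{ holds}\}|$, bound each summand in expectation, and then apply Markov's inequality for the tail bound. Since $\psi_{n,\kappa}$ deterministically picks $|K_n| = \Theta(n^\kappa)$ (the top $n^\kappa$ nodes by in-degree, modulo tie-breaking), the only nontrivial step is to bound the second summand, which is essentially a direct consequence of Lemma \ref{lemMainTailBound}.

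For the second summand, I would observe that for any realization of the degree sequence and the graph,
\begin{equation}
|\{v \in V_n \setminus K_n : B_v(\epsilon) \text{ holds}\}| = (n - |K_n|) \cdot \frac{1}{|V_n \setminus K_n|} \sum_{v \in V_n \setminus K_n} 1(B_v(\epsilon) \text{ holds}).
\end{equation}
Taking full expectation and using the tower property, the averaged indicator sum becomes exactly $\P[B_s(\epsilon) \mid U_s = 1]$ for $s$ drawn uniformly from $V_n \setminus K_n$, so Lemma \ref{lemMainTailBound} bounds the expected second summand by $(n - n^\kappa) \cdot O(n^{-c(\epsilon)}) = O(n^{1-c(\epsilon)})$. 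Combining with $|K_n| = \Theta(n^\kappa)$ immediately gives $\E[\Delta_{\psi_{n,\kappa}}(\epsilon)] = O\bigl(n^{\max\{\kappa, 1-c(\epsilon)\}}\bigr)$.

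The tail bound then follows from Markov's inequality: for any $\bar{c} \in (\max\{\kappa, 1-c(\epsilon)\}, 1)$ and $C > 0$,
\begin{equation}
\P\bigl[\Delta_{\psi_{n,\kappa}}(\epsilon) \geq C n^{\bar{c}}\bigr] \leq \frac{\E[\Delta_{\psi_{n,\kappa}}(\epsilon)]}{C n^{\bar{c}}} = O\bigl(n^{\max\{\kappa, 1-c(\epsilon)\} - \bar{c}}\bigr).
\end{equation}
There is no genuine obstacle in this step: all of the real difficulty is concentrated in Lemma \ref{lemMainTailBound}, which is assumed here. The only minor point to verify is that the averaged indicator in the first display reduces to precisely the probability quoted in the lemma, which follows directly from the definition of uniform sampling on $V_n \setminus K_n$ together with the tower property. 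I would expect the write-up to be only a few lines beyond what is sketched above.
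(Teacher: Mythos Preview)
Your proposal is correct and matches the paper's own proof essentially line for line: decompose $\Delta_{\psi_{n,\kappa}}(\epsilon)$ into $|K_n| = n^\kappa$ plus the count of $v \in V_n \setminus K_n$ for which $B_v(\epsilon)$ holds, reduce the expectation of the latter to $\P[B_s(\epsilon)\mid U_s=1]$ via averaging/exchangeability, invoke Lemma~\ref{lemMainTailBound}, and finish with Markov's inequality. The only cosmetic difference is that the paper passes through $n\,\P[B_s(\epsilon),U_s=1]\le n\,\P[B_s(\epsilon)\mid U_s=1]$ using exchangeability of the nodes, while you use that $|V_n\setminus K_n|=n-n^\kappa$ is deterministic; both are valid and yield the same bound.
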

\begin{prf}
See Appendix \ref{appProofOfThmSublinear}.
\end{prf}
To illustrate the theorem, we give an example in \eqref{eqToyExample}.  Here $\mathbf{U}_n = \psi_{n,\frac{1}{2}}(\mathbf{N}_n,\mathbf{D}_n)$ yields $(\mathbf{N}_n,\mathbf{D}_n,\mathbf{U}_n)$ satisfying Assumption \ref{assDegSeq}, i.e.\ the assumptions of Theorem \ref{thmSublinear} are satisfied with $\kappa = \frac{1}{2}$.
\begin{equation}\label{eqToyExample}
\mathbf{N}_n = ( \underbrace{ O(\sqrt{n}) , O(\sqrt{n}) , \ldots , O(\sqrt{n}) }_{\textrm{repeated $\sqrt{n}$ times}} , \underbrace{ O(1), O(1), \ldots , O(1) }_{\textrm{repeated $n - \sqrt{n}$ times}} ) , \quad \mathbf{D}_n = ( O(1), O(1), \ldots , O(1) ) .
\end{equation}

\subsection{Comments on assumptions} \label{secRemarksOnAssumption}

We begin with comments on $\Omega_{n,5}$ in Assumption \ref{assDegSeq}. First, note that, given $\Omega_{n,1}$ and $\Omega_{n,6}$, $\Omega_{n,5}$ implicitly requires $\frac{1}{n} \sum_{h=1}^n U_h$ to converge to a specific limit: indeed, assuming it converges,
\begin{equation}
\lambda^* = \lim_{n \rightarrow \infty} \frac{\sum_{h=1}^n U_h N_h}{\sum_{h=1}^n U_h} = \left( \lim_{n \rightarrow \infty} \frac{\sum_{h=1}^n U_h N_h}{\sum_{h=1}^n N_h} \right) \frac{ \lim_{n \rightarrow \infty} \frac{1}{n} \sum_{h=1}^n N_h}{\lim_{n \rightarrow \infty} \frac{1}{n} \sum_{h=1}^n U_h} = \frac{p \eta_1}{\lim_{n \rightarrow \infty} \frac{1}{n} \sum_{h=1}^n U_h} .
\end{equation}
With $|K_n|$ sublinear in Theorem \ref{thmSublinear}, $\frac{1}{n} \sum_{h=1}^n U_h \rightarrow 1$, so we require $\lambda^* = p \eta_1$.

We next argue $\Omega_{n,4}$ is not restrictive (at least in its own right). In fact, it is essentially implied by sublinearity of $|K_n|$ in Theorem \ref{thmSublinear} and $\Omega_{n,1}$, since then the fraction in $\Omega_{n,4}$ satisfies
\begin{equation}
\frac{\sum_{h=1}^n U_h D_h}{\sum_{h=1}^n U_h} \leq \frac{ \frac{1}{n} \sum_{h=1}^n D_h }{ \frac{1}{n} \sum_{h=1}^n U_h } \rightarrow \eta_1 \in (0,\infty) .
\end{equation}

Next, we note $\{ \Omega_{n,i} \}_{i=1}^3$ are similar to assumptions found in \cite{chen2017generalized} and are fairly standard given our approach, which leverages the fact that the random graph is asymptotically locally treelike \cite{bordenave2012lecture}. In fact, $\Omega_{n,3}$ is a weaker assumption than that required in \cite{chen2017generalized}, which is why (as mentioned in Section \ref{secRelated}) we use a modified version of one of their lemmas. See Appendix \ref{secCouplingFailure} for details.

Finally, $\Omega_{n,6}$ requires $\frac{1}{L_n} \sum_{v \in V_n} U_v N_v$ to converge to $p < 1$ with $|K_n|$ sublinear in Theorem \ref{thmSublinear}. We offer empirical evidence that this occurs for certain graphs of interest. Specifically, in Figure \ref{figFractionKn}, $\frac{1}{L_n} \sum_{v \in V_n} U_v N_v$ remains constant and strictly less than 1 as $n$ grows, for a variety of sublinear $|K_n|$ choices. For this plot, in-degrees were sampled from a power law distribution with exponent $2$, i.e.\ $\P [ N_v = x ] \propto x^{-2}$. This in-degree distribution is commonly seen in real graphs and has been studied extensively, e.g.\ \cite{barabasi1999emergence,clauset2009power}. As an example, Figure \ref{figDegreeHist} compares the histogram of these in-degrees with the in-degrees of the Twitter graph (available at \cite{lawDatasets} from WebGraph \cite{boldi2004webgraph}). The histograms are similar for most values of $x$; both are roughly linear with slopes $\approx -2$ over $x \in [10,5000]$. In short, a common model of in-degree distributions empirically satisfies $\Omega_{n,6}$ with $|K_n|$ sublinear.

\begin{figure}
\centering
\begin{subfigure}{0.49\textwidth}
\centering
\includegraphics[height=\plotHeight]{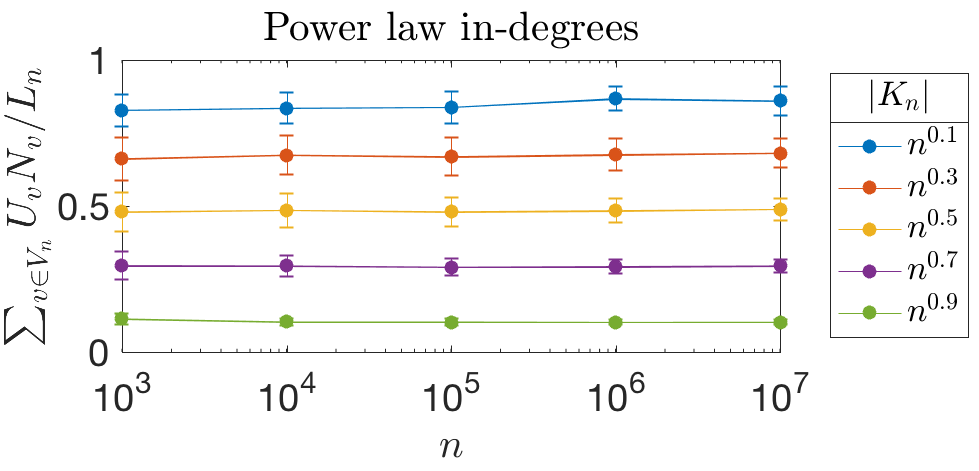}
\caption{For power law in-degrees, $V_n \setminus K_n$ contains a constant  fraction of instubs with $|K_n$ sublinear.} \label{figFractionKn}
\end{subfigure}%
\hspace{0.018\textwidth}%
\begin{subfigure}{0.49\textwidth}
\centering
\includegraphics[height=\plotHeight]{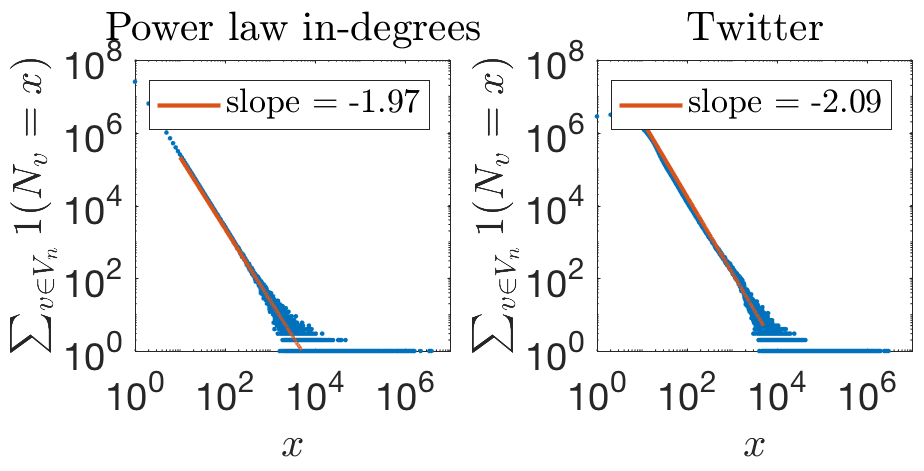}
\caption{The in-degrees for Fig.\ \ref{figFractionKn} are similar to in-degrees for Twitter graph from \cite{lawDatasets}. (Here $n \approx 4 \times 10^7$.)} \label{figDegreeHist}
\end{subfigure}
\caption{$\Omega_{n,6}$ is empirically satisfied with $|K_n|$ sublinear for power law in-degree distributions.} \label{figOmega6validation}
\end{figure}

\section{Algorithms and experiments} \label{secAlgorithms}

In this section, we use our dimensionality analysis to analyze the algorithm from \cite{jeh2003scaling} mentioned in Section \ref{secRelated}. We then present empirical results to complement our analysis.

\subsection{Algorithm to estimate $\{ \pi_v \}_{v \in V_n}$} \label{secSchemeToEstimate}

In Section 4.4.3 of \cite{jeh2003scaling}, Jeh and Widom propose the following algorithm to estimate $\{ \pi_v \}_{v \in V_n}$. First, compute $\{ \pi_k \}_{k \in K_n}$. Next, for $v \in V_n \setminus K_n$, compute \eqref{eqChoiceOfBetaV} and estimate $\pi_v$ as
\begin{equation}\label{eqHatPiV}
\hat{\pi}_v = \frac{ \sum_{k \in K_n} \tilde{\pi}_v(k) \pi_k }{ \alpha_n + (1-\alpha_n) \tilde{\pi}_v(K_n) } .
\end{equation}
The basic idea behind this scheme is that, from \eqref{eqLinearCombo}, $\hat{\pi}_v$ may be close to $\pi_v$; however, no formal analysis is provided. Here we show that our dimensionality result provides such an analysis. 

For this, letting $\epsilon > 0$ and using \eqref{eqLinearComboIntoL1NoAbsVal} from Appendix \ref{secProofLemL1bound}, it is straightforward to show
\begin{equation}\label{eqErrorBoundRuntime}
\left\| \pi_v - \left( \alpha_n e_v^{\trans} + \hat{\pi}_v \right) \right\|_1 < \epsilon \Leftrightarrow \tilde{\pi}_v(K_n) >  \frac{ \alpha_n ( 1 - ( \epsilon + \alpha_n ) ) }{   \epsilon + \alpha_n ( 2 - ( \epsilon + \alpha_n ) )   } .
\end{equation}
In other words, \eqref{eqErrorBoundRuntime} shows we can use $\{ \tilde{\pi}_v(k) \}_{k \in K_n}$ to compute the $l_1$ estimation error indirectly, i.e.\ without actually computing $\pi_v$. This suggests a new scheme, which proceeds as follows. First, compute $\{ \pi_k \}_{k \in K_n}$ (as in the existing scheme). Next, for $v \in V_n \setminus K_n$, compute $\{ \tilde{\pi}_v(k) \}_{k \in K_n}$. If \eqref{eqErrorBoundRuntime} holds, estimate $\pi_v$ as $\alpha_n e_v^{\trans} + \hat{\pi}_v$; else, compute $\pi_v$. 

Using this scheme, we either compute $\pi_v$ exactly, or we obtain an estimate within $\epsilon$ of $\pi_v$ (in the $l_1$ norm), $\forall\ v \in V_n$. The remaining question is the scheme's complexity, which we take to be the number of PPR values that are computed. First, for $k \in K_n$, $n$ such values ($\pi_k$) are computed. Next, for $v \in V_n \setminus K_n$, $|K_n|$ such values ($\{ \tilde{\pi}_v(k) \}_{k \in K_n}$) are computed. Finally, an additional $n$ such values ($\pi_v$) are computed for $v \in V_n \setminus K_n$ s.t.\ \eqref{eqErrorBoundRuntime} fails; by definition, this occurs for $\Delta_{\psi_{n,\kappa}}(\epsilon) - |K_n|$ such $v$ when $K_n$ is chosen by $\psi_{n,\kappa}$. Hence, the number of PPR values computed is
\begin{equation}
n |K_n| + |K_n| | V_n \setminus K_n | + n ( \Delta_{\psi_{n,\kappa}}(\epsilon) - |K_n| ) \leq 2 n \Delta_{\psi_{n,\kappa}}(\epsilon) = O ( n \Delta_{\psi_{n,\kappa}}(\epsilon) ) ,
\end{equation}
which is sub-quadratic with high probability when Theorem \ref{thmSublinear} applies. (We have assumed the computation of $\{ \tilde{\pi}_v(k) \}_{k \in K_n}$ is no more costly than the computation of PPR values on the original graph; this is because $\{ \tilde{\pi}_v(k) \}_{k \in K_n}$ are computed on a sparser graph.) Hence, all $n$ PPR vectors can be accurately estimated by computing a vanishing fraction of the $n^2$ vector elements.

Finally, we remark that this scheme can also be viewed as approximating $\Pi$, the matrix with $v$-th row $\Pi(v,:) = \pi_v$. To see this, let $\hat{\Pi}_{\epsilon}$ be the estimate of $\Pi$ from the scheme, i.e.\ $\hat{\Pi}_{\epsilon}(v,:) = \alpha_n e_v^{\trans} + \hat{\pi}_v$ if $v \in V_n \setminus K_n$ and \eqref{eqErrorBoundRuntime} holds, $\hat{\Pi}_{\epsilon}(v,:) = \pi_v$ otherwise. Then, by \eqref{eqErrorBoundRuntime}, $\| \Pi(v,:) - \hat{\Pi}_{\epsilon}(v,:) \|_1 < \epsilon\ \forall\ v \in V_n$, so $\| \Pi - \hat{\Pi}_{\epsilon}  \|_{\infty} < \epsilon$ (where $\| A \|_{\infty} = \max_{ \| x \|_{\infty} = 1 } \| A x \|_{\infty} = \max_i \sum_j | A(i,j) |$ is the $l_{\infty}$ norm of the matrix $A$). Hence, the scheme approximates $\Pi$ with bounded error in the $l_{\infty}$ norm.

\subsection{Empirical results}

We now demonstrate the performance of this algorithm using two datasets from the Stanford Network Analysis Platform (SNAP) \cite{snapnets}: soc-Pokec, a social network, and web-Google, a partial web graph (see Appendix \ref{appDatasets} for details). For both graphs, we choose the top $n^{0.8}$ nodes by in-degree as $K_n$ (i.e.\ $\mathbf{U}_n = \psi_{n,0.8}(\mathbf{N}_n,\mathbf{D}_n)$), set $\alpha_n = \frac{1}{\log n}$, and, $\forall\ v \in V_n \setminus K_n$, compute a bound on the error $\| \pi_v - ( \alpha_n e_v^{\trans} + \hat{\pi}_v ) \|_1$ using a power iteration scheme described in Appendix \ref{appEstErrBound}. Figure \ref{figAllErrPlotsHist} shows histograms of the error bound, while Figure \ref{figAllErrPlotsDim} shows our dimensionality measure. Note (as proven in Appendix \ref{appEstErrBound}), error is zero when $v \in V_{n,0}$, where
\begin{equation}\label{eqVn0defn}
V_{n,0} = \left\{ v \in V_n \setminus K_n : \not \exists\ (w,w') \in E_n \textrm{ s.t.\ } w = v, w' \in V_n \setminus K_n \right\} .
\end{equation}
(In words, the error is zero when no outgoing neighbors of $v$ belong to $V_n \setminus K_n$.) As a result, the spikes at $\epsilon = 0$ in Figure \ref{figAllErrPlotsHist} have height $|V_{n,0}|/n$, and $\Delta_{n,\kappa}(0) = |V_{n,0}|$ in Figure \ref{figAllErrPlotsDim}. Additionally, we show in Appendix \ref{appEstErrBound} that error is bounded by $(1-\alpha_n)$; hence, the spikes at right in Figure \ref{figAllErrPlotsHist}, and the ``dips'' at right in Figure \ref{figAllErrPlotsDim}, occur at $\epsilon = (1-\alpha_n$). Between these spikes, the soc-Pokec histogram quickly decay beyond $\epsilon \approx 0.3$; this corresponds to the dimensionality being nearly flat beyond $\epsilon \approx 0.3$ in Figure \ref{figAllErrPlotsDim}. (For web-Google, similar behavior occurs, though it is less pronounced). Finally, we highlight two points on Figure \ref{figAllErrPlotsDim}, $( \frac{1-\alpha_n}{3}, 0.09 )$ for soc-Pokec and $( \frac{1-\alpha_n}{3}, 0.15 )$ for web-Google. The soc-Pokec point, for example, shows that computing $9 \%$ of PPR vectors guarantees the $l_1$ estimation error for other PPR vectors is below $\frac{1-\alpha_n}{3}$ (i.e.\ the worst-case error is reduced by a factor of 3). See Appendix \ref{appAllErrorExp} for further empirical results for these datasets.

Figure \ref{figAllErrPlotsDim} also highlights another aspect of $\Delta_{n,\kappa}(\epsilon)$. Specifically, the discussion at the end of Section \ref{secSchemeToEstimate} and the steep decay in $\Delta_{n,\kappa}(\epsilon)$ Figure \ref{figAllErrPlotsDim} suggests that most of the ``energy'' of $\Pi$ is contained in a small number of dimensions, in the $l_{\infty}$ norm. Hence, $\Delta_{n,\kappa}(\epsilon)$ is roughly analogous to stable rank, a more common dimensionality measure that instead measures energy using singular values (namely, stable rank is $\sum_i \sigma_i^2 / \sigma_1^2$, where $\{ \sigma_i \}$ are the ordered singular values).

In Appendix \ref{appEstErrBound}, we also describe how the power iteration scheme allows us to compute a bound on the average error $\frac{1}{ |V_n \setminus K_n|} \sum_{v \in V_n \setminus K_n} \| \pi_v - ( \alpha_n e_v^{\trans} + \hat{\pi}_v ) \|_1$ indirectly (i.e., without actually computing the error for each $v$). Hence, we show the average error bound for a wider variety of SNAP datasets in Figure \ref{figRealAvgErrPlots}. Interestingly, the two social networks soc-LiveJournal1 and soc-Pokec have similar behavior, as do the two web graphs web-BerkStan and web-Stanford (web-Google is somewhat of an outlier; we believe its average error is lowest in part because its $|V_{n,0}|$ is largest). Finally, in Figure \ref{figSynAvgErrPlots}, we show the average error bound computed on a DCM with power law in-degrees. As suggested by Lemma \ref{lemMainTailBound}, average error shrinks as $n$ grows (despite $|K_n|/n$ shrinking as well); this is in part because, from Figure \ref{figFractionKn}, the fraction of instubs belonging to $K_n$ is constant.

\begin{figure}[t]
\centering
\begin{subfigure}{0.49\textwidth}
\centering
\includegraphics[height=\plotHeight]{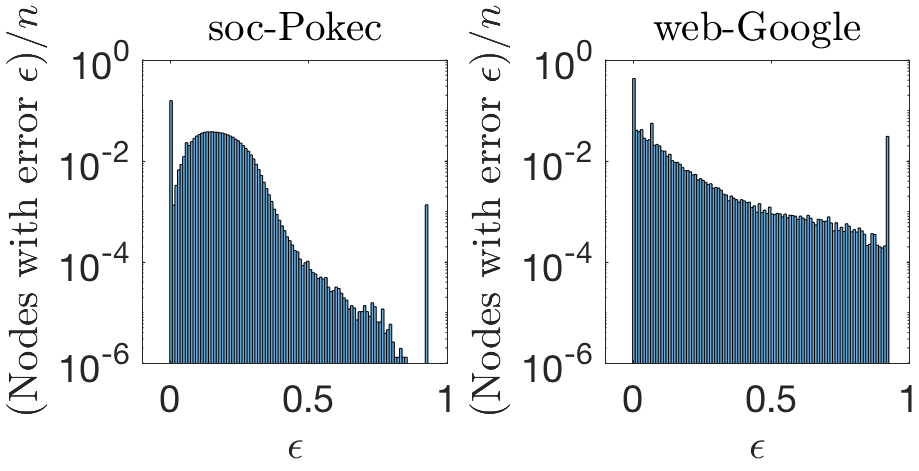}
\caption{(Normalized) histograms of the estimation error $\| \pi_v - ( \alpha_n e_v^{\trans} + \hat{\pi}_v ) \|_1$ $\forall\ v \in V_n \setminus K_n$.} \label{figAllErrPlotsHist}
\end{subfigure}%
\hspace{0.018\textwidth}%
\begin{subfigure}{0.49\textwidth}
\centering
\includegraphics[height=\plotHeight]{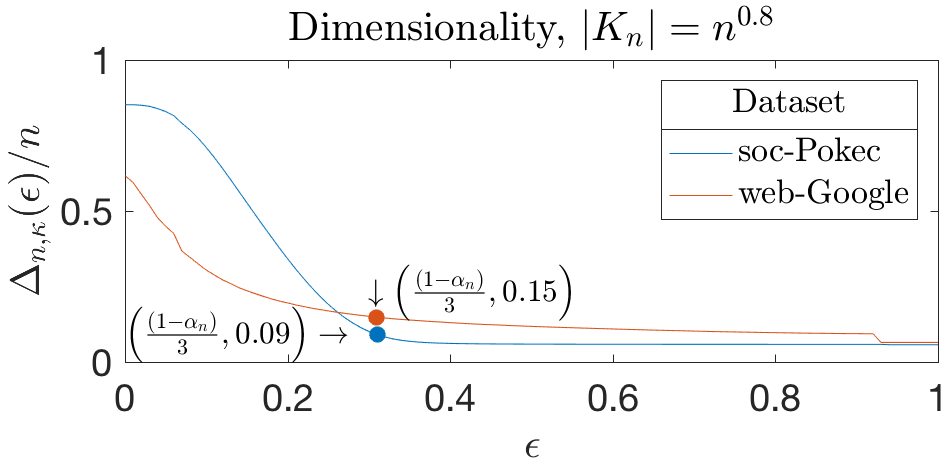}
\caption{Computing 9\% or 15\% of PPR vectors reduces worst-case estimation error by a factor of 3.} \label{figAllErrPlotsDim}
\end{subfigure}
\caption{Error and dimensionality for soc-Pokec (social network) and web-Google (web graph)} \label{figAllErrPlots}
\end{figure}

\begin{figure}[t]
\centering
\begin{subfigure}{0.49\textwidth}
\centering
\includegraphics[height=\plotHeight]{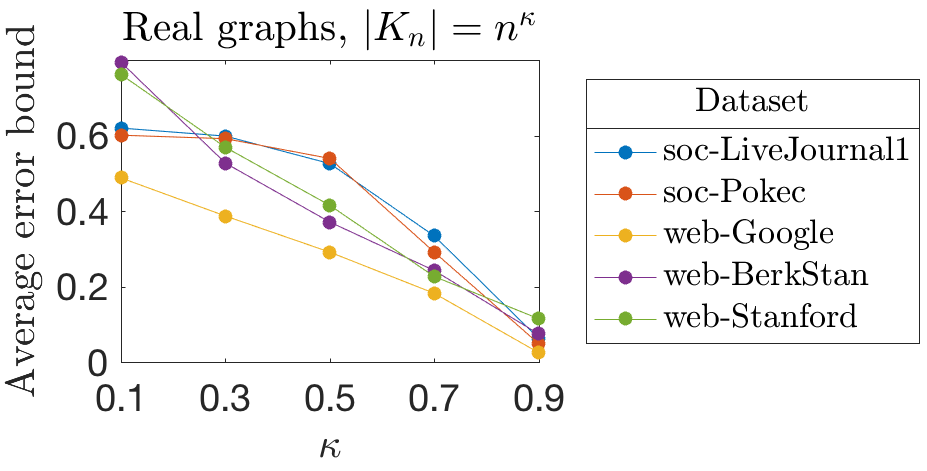}
\caption{Average error decreases as $|K_n|$ grows for a variety of social networks and web graphs.} \label{figRealAvgErrPlots}
\end{subfigure}%
\hspace{0.018\textwidth}%
\begin{subfigure}{0.49\textwidth}
\centering
\includegraphics[height=\plotHeight]{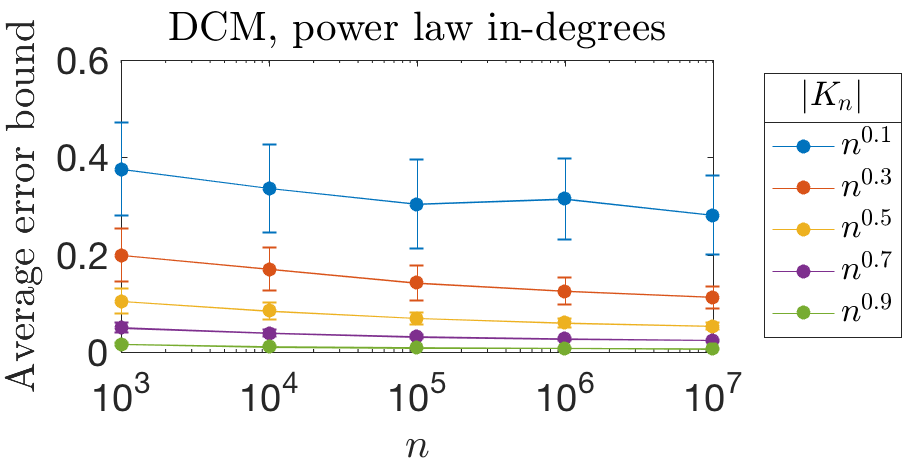}
\caption{For $s \sim V_n \setminus K_n$ uniformly on the DCM, error decreases as $n$ grows, despite $|K_n|/n$ decreasing.} \label{figSynAvgErrPlots}
\end{subfigure}%
\caption{Average error experiments for real and synthetic datasets} \label{figAvgErrPlots}
\end{figure}

\section{Conclusions} \label{secConclusions}

In this work, we argued (analytically for the DCM and empirically for other graphs) that the dimensionality of $\{ \pi_v \}_{v \in V_n}$ scales sublinearly in $n$. We also used our analysis to bound the complexity of the algorithm from \cite{jeh2003scaling}. Our analysis suggests several avenues for future work. First, the proof of Lemma \ref{lemMainTailBound} can be modified to analyze the tail of the $l_{\infty}$ error (this would essentially involve replacing Lemma \ref{lemTreeTailBound} with a tail bound on a maximum instead of a sum). Hence, bounding \textit{absolute} error for the estimate of $\pi_s(v)$ for any $v \in V_n$ is a straightforward extension; a more useful but less immediate analysis would involve bounding \textit{relative} error. Second, examining PPR dimensionality for other random graph models may be of interest. For example, several papers have analyzed PPR on preferential attachment models \cite{avrachenkov2006pagerank,garavaglia2018local}; we suspect a dimensionality analysis for such graphs would yield a message similar to our work ($K_n$ should contain nodes with highest in-degree). A more interesting class of graphs would be the stochastic block model; here it may be more beneficial to choose $K_n$ such that each community contains a nonempty subset of $K_n$.

\newpage

\appendix

\textbf{Note on the organization of appendices:} Appendix \ref{secMainProof} outlines the key ideas and intuition behind the proof of Lemma \ref{lemMainTailBound}, which contains the bulk of our technical analysis and itself requires five lemmas. The proofs of these lemmas are found in subsections of Appendix \ref{appMainProofs}, in the order their statements appear in Appendix \ref{secMainProof}. Shorter proofs (those of Claim \ref{clmChoiceOfAlpha} and Theorem \ref{thmSublinear}) are found in Appendix \ref{appOtherProofs}. Finally, Appendix \ref{appExpDetails} contains details on the experiments of Section \ref{secAlgorithms}.

\section{Lemma \ref{lemMainTailBound} proof outline} \label{secMainProof}

In this appendix, we outline the proof of Lemma \ref{lemMainTailBound}. Our approach follows the outline described in Section \ref{secMainResultPresent}. Specifically, we consider Steps 1-4 of the outline in Appendices \ref{secErrorMstep}-\ref{secBranchingTailBound}, respectively. In Appendix \ref{secMainBound}, we combine the results to prove the lemma.

\subsection{Error bound in $m$-step neighborhood (Step 1)} \label{secErrorMstep}

Our first goal is to bound the error term $\| \pi_s - ( \alpha_n e_s^{\trans} + \sum_{k \in K_n} \beta_s(k) \pi_k ) \|_1$ for a particular choice of $\{ \beta_s(k) \}_{k \in K_n}$. For this, we require an intermediate result; namely, \eqref{eqLinearCombo} from Section \ref{secMainResultPresent}, which we formalize as Lemma \ref{lemLinearCombo} here. Recall from Section \ref{secMainResultPresent} that $\tilde{\pi}_s$ is the stationary distribution of the Markov chain with transition matrix $\tilde{P}_s = (1-\alpha_n) \tilde{P} + ( \alpha_n e_{V_n \setminus K_n} + e_{K_n} ) e_s^{\trans}$, where $\tilde{P}$ satisfies $\tilde{P}(i,j) = U_i P(i,j)$.

As mentioned in Section \ref{secMain}, Lemma \ref{lemLinearCombo} is an alternate formulation of the Hubs Theorem from \cite{jeh2003scaling}. At a high level, both formulations view $\pi_s(v)$ as the probability of paths from $s$ to $v$ and partition these paths into those that avoid $K_n$ (which have probability proportional to $\tilde{\pi}_s(v)$) and those through $K_n$ (which have probability proportional to $\tilde{\pi}_s(k) \pi_k(v)$). The difference between the two formulations is that we explicitly construct a new Markov chain that does not include paths through $K_n$ (i.e.\ the chain with transition matrix $\tilde{P}_s$), while \cite{jeh2003scaling} does not. Our formulation admits a probabilistic proof; in contrast, the proof in \cite{jeh2003scaling} is linear algebraic.

\begin{lem} \label{lemLinearCombo}
Consider any realization of the DCM and assume $U_s = 1$. Then
\begin{equation}
\pi_s(v) = \frac{\alpha_n U_v \tilde{\pi}_s(v) + \sum_{k \in K_n} \tilde{\pi}_s(k) \pi_k(v) }{\alpha_n + (1-\alpha_n) \tilde{\pi}_s(K_n) }\ \forall\ v \in V_n .
\end{equation}
\end{lem}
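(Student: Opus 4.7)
The plan is to give a probabilistic proof using the renewal representation $\pi_s(v) = \alpha_n \E_s[\sum_{i=0}^{L-1} 1(X_i=v)]$ from the excerpt, where $L \sim \textrm{geometric}(\alpha_n)$ is independent of the $P$-walk $\{X_i\}$. Letting $T = \inf\{i \geq 1 : X_i \in K_n\}$ be the first hitting time of $K_n$ (well-defined and at least $1$ since $U_s = 1$), I would split the visit counter as
\[ \sum_{i=0}^{L-1} 1(X_i = v) = \sum_{i=0}^{\min(T,L)-1} 1(X_i = v) + 1(T<L) \sum_{i=T}^{L-1} 1(X_i = v), \]
with the first sum gathering visits before $K_n$ is reached and the second gathering visits afterward.

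For the post-$T$ piece, the strong Markov property at $T$ together with the memoryless property of $L$ (so that, conditional on $T<L$, the residual $L-T$ is again geometric$(\alpha_n)$ and independent of the pre-$T$ history) gives
\[ \alpha_n \E_s\!\left[ 1(T<L) \textstyle\sum_{i=T}^{L-1} 1(X_i=v) \right] = \sum_{k \in K_n} \P[T < L, X_T = k] \,\pi_k(v). \]

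To identify the remaining probabilities with $\tilde{\pi}_s$, I would use the Neumann-series expansion $\tilde{\pi}_s = c' \cdot e_s^{\trans} \sum_{i \geq 0} (1-\alpha_n)^i \tilde{P}^i$ with $c' = \alpha_n + (1-\alpha_n) \tilde{\pi}_s(K_n)$; this constant is forced by plugging the definition of $\tilde{P}_s$ into $\tilde{\pi}_s = \tilde{\pi}_s \tilde{P}_s$ and using $\tilde{\pi}_s(V_n \setminus K_n) + \tilde{\pi}_s(K_n) = 1$. Reading off the path-level meaning of $\tilde{P}^i(s,v)$, since $\tilde{P}(i,j) = U_i P(i,j)$ kills any path that leaves $V_n \setminus K_n$ before its final step, one gets $\tilde{P}^i(s,v) = \P[X_i = v, T>i]$ for $v \notin K_n$ and $\tilde{P}^i(s,v) = \P[X_i = v, T=i]$ for $v \in K_n$. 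Using $(1-\alpha_n)^i = \P[L>i]$ with $L$ independent of the walk, these summed expansions yield $\tilde{\pi}_s(v)/c' = \E_s[\sum_{i=0}^{\min(T,L)-1} 1(X_i = v)]$ for $v \notin K_n$ and $\tilde{\pi}_s(k)/c' = \P[T<L, X_T=k]$ for $k \in K_n$.

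Substituting back into the decomposition gives $\pi_s(v) = (\alpha_n \tilde{\pi}_s(v) + \sum_k \tilde{\pi}_s(k) \pi_k(v))/c'$ when $v \notin K_n$, and $\pi_s(v) = \sum_k \tilde{\pi}_s(k) \pi_k(v)/c'$ when $v \in K_n$ (the pre-$T$ sum vanishes since $X_i = v \in K_n$ with $i < T$ would contradict the definition of $T$); both cases collapse to the claimed formula via the factor $U_v = 1(v \notin K_n)$. The main obstacle I anticipate is the bookkeeping that identifies $\tilde{\pi}_s$ with path probabilities, particularly the $v \in K_n$ case (where $\tilde{P}^i(s,v)$ encodes first passage to $K_n$ rather than pre-hitting occupancy) and the verification that the denominator is exactly $c' = \alpha_n + (1-\alpha_n) \tilde{\pi}_s(K_n)$ rather than some ostensibly larger expression containing an extra $P$-return term.
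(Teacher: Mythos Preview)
Your proposal is correct and follows essentially the same renewal--reward strategy as the paper: represent $\pi_s(v)$ as $\alpha_n$ times the expected number of visits to $v$ in a geometric-length $P$-walk from $s$, split at the first hitting time of $K_n$, and handle the post-hitting piece via the strong Markov property together with memorylessness of $L$. The only substantive difference is in how $\tilde{\pi}_s$ is identified with path quantities. The paper explicitly builds the modified chain $\{\tilde{X}_i^s\}$, invokes its own renewal--reward representation $\tilde{\pi}_s(v) = \E[\tilde{\tau}_s(v)]/\E[\min\{\tilde{L}_s,\tilde{H}+1\}]$, and then argues that $\{\tilde{X}_i^s\}$ and $\{X_i^s\}$ coincide up to the hitting time of $K_n$; the normalization constant is recovered at the end by summing over $v$. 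You instead go straight to the Neumann series $\tilde{\pi}_s = c'\, e_s^{\trans}\sum_{i\ge 0}(1-\alpha_n)^i\tilde{P}^i$, obtain $c' = \alpha_n + (1-\alpha_n)\tilde{\pi}_s(K_n)$ directly from stationarity, and read off the path-level meaning of $\tilde{P}^i(s,v)$ from $\tilde{P} = \mathrm{diag}(U)P$. This is a cleaner bookkeeping device---it sidesteps the explicit coupling of the two chains and the separate computation of the renewal interval---and in fact anticipates the identity $\tilde{\pi}_s(v) = \alpha_n\mu_s(v)/(1-(1-\alpha_n)\mu_s(K_n))$ that the paper derives later (via Sherman--Morrison--Woodbury) in the proof of Lemma~\ref{lemL1bound}.
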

\begin{proof}
See Appendix \ref{secProofLemLinearCombo}.
\end{proof}

We next bound the error term using a particular $\{ \beta_s(k) \}_{k \in K_n}$; namely, that suggested by Lemma \ref{lemLinearCombo}. Our bound leverages the fact that the transition matrix $\tilde{P}_s$ is written as the sum of two matrices, one of which is rank one. This allows us to use the Sherman-Morrison-Woodbury formula (see e.g.\ Section 6.4 of \cite{laub2005matrix}) to derive a bound on the error term in terms of the row vector 
\begin{equation}\label{eqMuMDefn}
\mu_{s}^{(m)} = e_s^{\trans} \sum_{j=0}^m (1-\alpha_n)^j \tilde{P}^j ,
\end{equation}
which, as desired, only depends on the $m$ step neighborhood out of $s$.

\begin{lem} \label{lemL1bound}
Consider any realization of the DCM and assume $U_s = 1$. For all $k \in K_n$, let
\begin{equation}
\beta_{s}(k) = \frac{\tilde{\pi}_s(k)}{\alpha_n + (1-\alpha_n) \tilde{\pi}_s(K_n) } .
\end{equation}
Then for each $m \in \N$,
\begin{equation}
\left\| \pi_s - \left( \alpha_n e_s^{\trans} + \sum_{k \in K_n} \beta_s(k) \pi_k \right) \right\|_1 \leq \alpha_n \left( \mu_s^{(m-1)}(V_n \setminus K_n) - 1 \right)  + e_s^{\trans} (1-\alpha_n)^m \tilde{P}^m e_{V_n \setminus K_n} .
\end{equation}
\end{lem}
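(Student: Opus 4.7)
The plan is to start from the exact pointwise identity in Lemma \ref{lemLinearCombo}, reduce the $l_1$ norm in question to a single scalar quantity via the rank-one (Sherman-Morrison-Woodbury flavored) structure of $\tilde{P}_s$, and then control that scalar by truncating the Neumann series $(I-(1-\alpha_n)\tilde{P})^{-1}=\sum_{j\geq 0}(1-\alpha_n)^j\tilde{P}^j$ at step $m$.

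\textbf{Step 1 (pointwise identity and sign).} Set $c = \alpha_n + (1-\alpha_n)\tilde{\pi}_s(K_n)$. Combining Lemma \ref{lemLinearCombo} with the stated choice of $\beta_s(k)$ yields, for every $v \in V_n$,
\begin{equation}
\pi_s(v) - \alpha_n\, 1(v=s) - \sum_{k \in K_n} \beta_s(k)\, \pi_k(v) \;=\; \frac{\alpha_n\, U_v\, \tilde{\pi}_s(v)}{c} - \alpha_n\, 1(v = s).
\end{equation}
I would then argue this is nonnegative for every $v$: it vanishes on $K_n$ (since $U_v=0$ there), is manifestly nonnegative for $v \in V_n \setminus K_n$ with $v \neq s$, and for $v = s$ reduces to the inequality $\tilde{\pi}_s(s) \geq c$. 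The last follows by evaluating $\tilde{\pi}_s = \tilde{\pi}_s \tilde{P}_s$ at coordinate $s$ and noting that every summand except the one producing the constant $c$ (coming from the rank-one part of $\tilde{P}_s$) is nonnegative.

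\textbf{Step 2 (reduce to a scalar via the rank-one structure).} Because the pointwise difference is nonnegative everywhere, the $l_1$ norm equals its signed sum $\frac{\alpha_n \tilde{\pi}_s(V_n \setminus K_n)}{c} - \alpha_n$. Next, writing $\tilde{P}_s = (1-\alpha_n)\tilde{P} + u\, e_s^{\trans}$ with $u = \alpha_n e_{V_n \setminus K_n} + e_{K_n}$, stationarity $\tilde{\pi}_s = \tilde{\pi}_s \tilde{P}_s$ rearranges to $\tilde{\pi}_s(I - (1-\alpha_n)\tilde{P}) = (\tilde{\pi}_s u)\, e_s^{\trans} = c\, e_s^{\trans}$, so $\tilde{\pi}_s = c\, \mu_s^{(\infty)}$. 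Substituting, the $l_1$ norm equals $\alpha_n(\mu_s^{(\infty)}(V_n \setminus K_n) - 1)$, which is already nonnegative since the $j=0$ term of $\mu_s^{(\infty)}(V_n \setminus K_n)$ is $1$.

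\textbf{Step 3 and main obstacle.} Splitting $\mu_s^{(\infty)}(V_n\setminus K_n) - \mu_s^{(m-1)}(V_n\setminus K_n) = (1-\alpha_n)^m\, e_s^{\trans}\tilde{P}^m (I-(1-\alpha_n)\tilde{P})^{-1} e_{V_n\setminus K_n}$ reduces the proof to the entrywise bound $(I-(1-\alpha_n)\tilde{P})^{-1} e_{V_n\setminus K_n} \leq \alpha_n^{-1}\, e_{V_n\setminus K_n}$, which when multiplied by the $\alpha_n$ from Step 2 delivers exactly the tail term $(1-\alpha_n)^m e_s^{\trans}\tilde{P}^m e_{V_n \setminus K_n}$ stated in the lemma. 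This entrywise bound is the crux and I would verify it probabilistically. Since $\tilde{P}(k,\cdot) = 0$ for $k \in K_n$, the $k$-th row of $(I-(1-\alpha_n)\tilde{P})^{-1}$ equals $e_k^{\trans}$, so the left side is zero on $K_n$ and matches the right side. For $i \in V_n \setminus K_n$, unrolling gives $\tilde{P}^j(i,v) = \P_i[X_j = v,\ X_0,\ldots,X_{j-1} \in V_n \setminus K_n]$ for the $P$-chain $(X_j)_{j\geq 0}$, whence $\sum_{v\in V_n\setminus K_n}\tilde{P}^j(i,v) = \P_i[\tau > j]$ with $\tau$ the hitting time of $K_n$, and
\begin{equation}
\sum_{j=0}^\infty (1-\alpha_n)^j\, \P_i[\tau > j] \;=\; \E_i\!\left[ \frac{1 - (1-\alpha_n)^\tau}{\alpha_n} \right] \;\leq\; \frac{1}{\alpha_n},
\end{equation}
which is the desired bound. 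Combining Steps 1--3 completes the proof.
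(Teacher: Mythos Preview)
Your argument is correct and follows essentially the same three-step structure as the paper: use Lemma~\ref{lemLinearCombo} and nonnegativity to collapse the $l_1$ norm to $\alpha_n(\tilde{\pi}_s(V_n\setminus K_n)/c-1)$, exploit the rank-one structure of $\tilde{P}_s$ to rewrite this as $\alpha_n(\mu_s^{(\infty)}(V_n\setminus K_n)-1)$, then truncate the Neumann series and bound the tail via $\tilde{P}e_{V_n\setminus K_n}\le e_{V_n\setminus K_n}$ (equivalently, your hitting-time bound). Your execution is in fact a bit leaner than the paper's in two places---you get $\tilde{\pi}_s(s)\ge c$ in one line from stationarity rather than via the Neumann expansion, and you read off $\tilde{\pi}_s=c\,\mu_s^{(\infty)}$ directly from $\tilde{\pi}_s(I-(1-\alpha_n)\tilde{P})=c\,e_s^{\trans}$ rather than invoking Sherman--Morrison--Woodbury---but these are cosmetic simplifications of the same argument.
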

\begin{proof}
See Appendix \ref{secProofLemL1bound}.
\end{proof}

\subsection{Coupling with branching process (Step 2)} \label{secCoupling} 

Next, we show that the error bound in Lemma \ref{lemL1bound} follows the same distribution as a related quantity defined in terms of a branching process. Before presenting this result, we formally define the DCM construction and the branching process.

We begin with the DCM. As described in Section \ref{secDcmDescription}, the basic idea is to randomly pair outgoing half-edges (which we call \textit{outstubs}) with incoming half-edges (which we call \textit{instubs}) in a breadth-first search fashion.  We begin by sampling a node $s$ uniformly at random from $V_n$. In the first iteration, for each outstub belonging to $s$, we sample an instub uniformly (resampling if the sampled instub has already been paired), and we pair the outstub and instub. We allow the possibility that the sampled instub belongs to $s$ (in which case a self-loop is formed) or that multiple outstubs of $s$ are paired with instubs belonging to the same node (in which case multiple edges are formed between $s$ and that node).\footnote{Because of this, the resulting graph will in general be a multi-graph. We note the authors of \cite{chen2013directed} prove that a simple graph (no self-loops or multi-edges) results with positive probability as $n \rightarrow \infty$; however, this requires stronger assumptions on the degree sequence than Assumption \ref{assDegSeq}, which is all that we require to prove our main result. Specifically, guaranteeing that a simple graph emerges with positive probability as $n \rightarrow \infty$ requires empirical variances of the in- and out-degree sequences to converge.}

At the conclusion of the first iteration, we denote by $A_1$ the subset of $V_n \setminus \{ s \}$ containing those nodes that have had at least one instub paired with an outstub of $s$. In the second iteration, we pair all outstubs of all nodes in $A_1$ in the manner described previously. In general, we pair all outstubs of all nodes in $A_{m-1}$ during the $m$-th iteration, where $A_{m-1}$ is the set of nodes $v$ at distance $m-1$ from $s$. In other words, paths out of $s$ of length $m$ are constructed during the $m$-th iteration. When all outstubs of all nodes have been paired, the construction is complete. 

To facilitate the graph construction and the coupling argument, we define labels for each instub $e$ and for each node $v$, denoted $g(e)$ and $g(v)$. The instub label $g(e)$ is necessary because if $e$ is sampled for pairing, we must check whether $e$ has already been paired. Hence, we define
\begin{equation} \label{eqInstubLabels}
g(e) = \begin{cases} 1 , & \textrm{$e$ is currently unpaired} \\ 0, & \textrm{$e$ is currently paired} \end{cases} .
\end{equation}
The node label $g(v)$ is defined as
\begin{equation}\label{eqNodeLabels}
g(v) = \begin{cases} A, & \textrm{$v$ does not currently belong to graph} \\ B, & \textrm{$v$ belongs to graph, $U_v = 0$} \\ C,  & \textrm{$v$ belongs to graph, $U_v = 1$, all paths from $s$ to $v$ visit some $w \in V_n$ s.t.\ $U_w = 0$} \\  D, & \textrm{$v$ belongs to graph, $U_v = 1$, some path from $s$ to $v$ avoids all $w \in V_n$ s.t.\ $U_w = 0$} \end{cases} .
\end{equation}

To illustrate these node labels, we show a graph after three iterations of the construction in Figure \ref{figGraphLabelExample}. The node at the top of the figure is $s$. Circle and square nodes, respectively, depict those nodes $v$ with $U_v = 1$ and $U_v = 0$, respectively (i.e., those belonging to $V_n \setminus K_n$ and $K_n$, respectively). Short arrows depict half-edges (i.e.\ unpaired instubs and  outstubs), while longer arrows depict edges (i.e.\ instubs and outstubs that have been paired). Node labels are displayed on each node.

Node labels will be useful in the coupling argument to come. In particular, the term $\mu_s^{(m)} ( V_n \setminus K_n )$ in Lemma \ref{lemL1bound} only depends on the subgraph containing label $D$ nodes within $m$ steps of $s$ (depicted by orange dashed edges in Figure \ref{figGraphLabelExample} for $m = 3$). This observation follows since $\mu_s^{(m)}(v)$ (by definition) is nonzero if and only if there exists a path from $s$ to $v$ that avoids $K_n$. 

\begin{figure}
\centering
\includegraphics[height=1.8in]{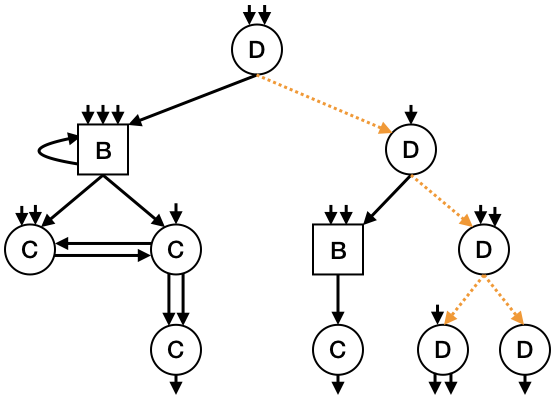}
\caption{Graph after three iterations; $\mu_s^{(3)}(V_n \setminus K_n)$ depends only on orange dashed subgraph}
\label{figGraphLabelExample}
\end{figure}

The formal graph construction is given in Algorithm \ref{algGraphConstruction}. Most notation has already been defined, but we use three additional pieces of notation in the algorithm: $I_n$ is the set of all instubs, $\{ (v',j) \}_{j=1}^{D_{v'}}$ is the set of outstubs belonging to $v' \in V_n$, and $\tau_G$ is a variable that tracks the first iteration at which certain events occur (these events relate to the coupling and will be discussed shortly). 

Before proceeding, we offer several comments to relate Algorithm \ref{algGraphConstruction} to the preceding (informal) description of the graph construction:
\begin{itemize}
\item In Line \ref{algInitS}-\ref{algInitOthers}, we initialize the algorithm. Namely, we sample the first node $s$, define the label $g(s)$ according to \eqref{eqNodeLabels}, and set $A_0 = \{ s \}$ (i.e.\ the only node at distance zero from $s$ is $s$ itself). We then set $g(e) = 1$ for all instubs $e$ (since no instubs have been paired) and $g(v) = A\ \forall\ v \neq s$ (since only $s$ belongs to the graph at this stage of the algorithm).
\item The remainder of the algorithm involves iterating over $m$ (outer for loop), iterating over nodes $v'$ at distance $m-1$ from $s$ (middle for loop), and iterating over outstubs belonging to $v'$ (inner for loop). For each such outstub, denoted $(v',j)$, the following steps occur:
\begin{itemize}
\item In Lines \ref{algBeginFindInstub}-\ref{algEndFindInstub}, we uniformly sample an instub $e$, resampling (if needed) until an unpaired instub is found. (Line \ref{algTauGresample} relates to the coupling argument and will be discussed shortly.)
\item After sampling an unpaired instub $e$, we pair $(v',j)$ with $e$ and set $g(e) = 0$ to reflect the fact that $e$ has been paired (Line \ref{algPair}). If the node $v$ to which $e$ belongs did not previously belong to the graph (i.e.\ if $g(v) = A$), then $v$ is at distance $m$ from $s$, so we add $v$ to $A_m$ (Line \ref{algAddToAm}). (Line \ref{algTauGdToCD} relates to the coupling and will be discussed shortly.)
\item In Lines \ref{algBeginLabelUpdate}-\ref{algEndLabelUpdate}, we update the label of $v$ according to \eqref{eqNodeLabels}. Note that, if $g(v') = D$ and $g(v) = C$, \eqref{eqNodeLabels} implies that a path from $s$ to $v$ avoiding $K_n$ did not exist before $(v',j)$ and $e$ were paired, but now such a path does exist. Hence, if some node $w$ s.t.\ $g(w) = C$ can be reached from $v$ while avoiding $K_n$, a path from $s$ to $w$ avoiding $K_n$ now exists as well. For this reason, we must change the label of such $w$ from $C$ to $D$ (Line \ref{algEndLabelUpdate}).
\end{itemize}
\item After these steps occur, if all instubs have been paired, the algorithm terminates (Line \ref{algTermination}).
\end{itemize}

\begin{algorithm}[t]
\DontPrintSemicolon
\caption{Graph Construction} \label{algGraphConstruction}

Choose $s$ from $V_n$ uniformly, set $g(s) = D$ if $U_s = 1$ and $g(s) = B$ if $U_s = 0$, set $A_0 = \{ s \}$ \label{algInitS}

Set $g(e) = 1\ \forall\ e \in I_n$, set $g(v) = A\ \forall\ v \in V_n \setminus \{ s \}$, set $\tau_G = \infty$ \label{algInitOthers}

\For{$m=1$ \KwTo $\infty$}{

Set $A_m = \emptyset$

\For{$v' \in A_{m-1}$}{

\For{$j = 1$ \KwTo $D_{v'}$}{ 

// find instub for pairing

Uniformly sample instub $e$ \label{algBeginFindInstub}

\lIf{$g(e) = 0, \tau_G = \infty$}{set $\tau_G= m$} \label{algTauGresample}

\While{$g(e) = 0$}{

Uniformly sample instub $e$

} \label{algEndFindInstub}

Pair $(v',j)$ with $e$, set $g(e) = 0$, denote instub node by $v$ \label{algPair}

\lIf{$g(v) = A$}{set $A_m = A_m \cup \{ v \}$} \label{algAddToAm}

\lIf{$g(v') = D, g(v) \in \{C,D\}, \tau_G = \infty$}{set $\tau_G = m$} \label{algTauGdToCD}

// update label

\lIf{$U_v = 0, g(v) = A$}{set $g(v) = B$} \label{algBeginLabelUpdate}
\lElseIf{$U_v = 1, g(v') = B, g(v) = A$}{set $g(v) = C$}
\lElseIf{$U_v = 1, g(v') \in \{C,D\}, g(v)= A$}{set $g(v) = g(v')$}
\lElseIf{$g(v') = D, g(v) = C$}{set $g(v) = D$, set $g(w) = D\ \forall\ w \in V_n$ s.t.\ $g(w) = C$ and $v \rightarrow w$ path avoiding all $w' \in V_n$ s.t.\ $U_{w'} = 0$ exists} \label{algEndLabelUpdate}

// termination

\lIf{$g(e') = 0\ \forall\ e' \in I_n$}{{\bf{return}}} \label{algTermination}

}

}

}

\end{algorithm}

Our next goal is to define a branching process and a quantity related to the error bound in Lemma \ref{lemL1bound}, so that this error bound can instead be analyzed on the tree resulting from the branching process. Before defining this tree construction, we offer some intuition, which helps explain $\tau_G$.

First, recall the error bound in Lemma \ref{lemL1bound} depends only on the $m$-step neighborhood out of $s$. Hence, a typical approach to analyzing the bound would be to argue that this neighborhood is treelike, and then to analyze the bound on a related tree. However, this is more than we require. To see this, we return to the example from Figure \ref{figGraphLabelExample}. As argued previously, the error bound only depends on the orange dashed subgraph. Hence, the related tree we construct will (roughly speaking) only contain this subgraph. Put differently, rather than require the entire $m$-step neighborhood to be treelike, we only require the $m$-step neighborhood of label $D$ nodes to be treelike.

This discussion also helps explain the variable $\tau_G$ in Algorithm \ref{algGraphConstruction}. Observe that we set $\tau_G = m$ if we pair an outstub of $v' \in A_{m-1}$ with an instub of $v$, where $g(v') = D$ and $g(v) \in \{ C,D \}$ (Line \ref{algTauGdToCD} in Algorithm \ref{algGraphConstruction}). As shown in Figure \ref{figTauGexamples}, these events (may) destroy the tree structure of the label $D$ subgraph. Additionally, we set $\tau_G = m$ if we sample an instub that has already been paired while attempting to pair an outstub of $v' \in A_{m-1}$ (Line \ref{algTauGresample} in Algorithm \ref{algGraphConstruction}). This is to ensure nodes have \textit{i.i.d.}\ attributes $(N_v, D_v, U_v)$, as nodes in the tree construction will have.

\begin{figure}
\centering
\begin{subfigure}{\textwidth}
  \centering
  \includegraphics[width=5in]{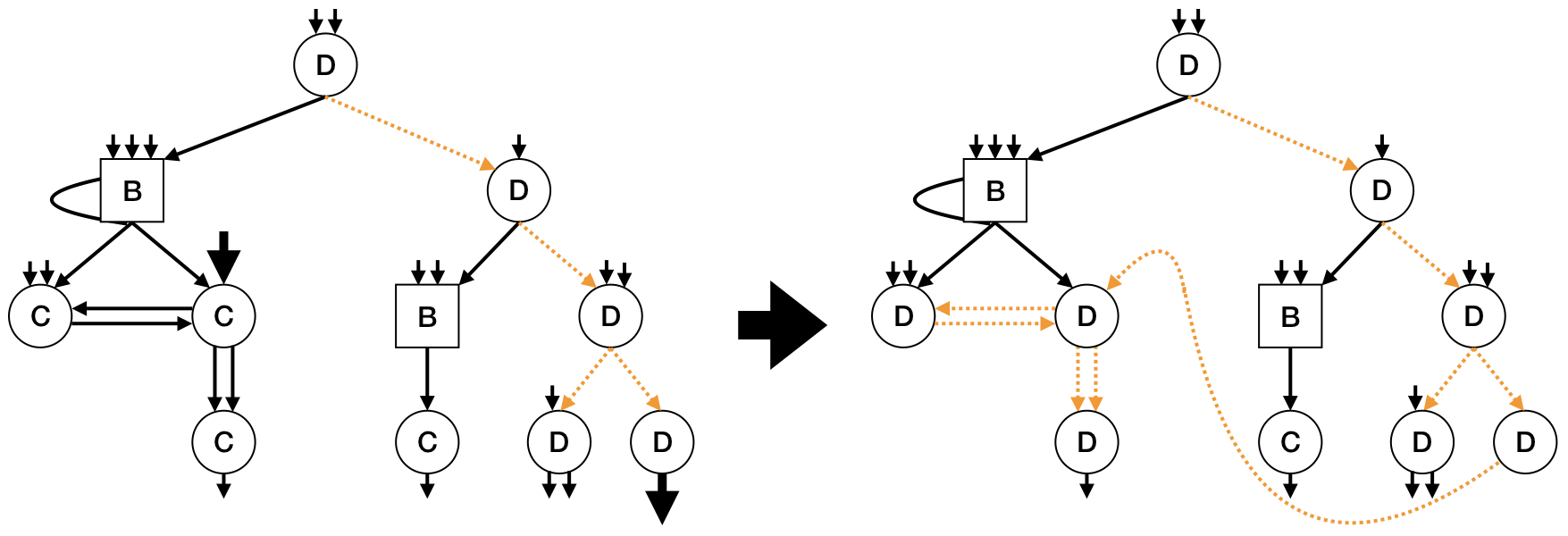}
  \caption{Enlarged outstub belongs to label $D$ node, enlarged instub belongs to label $C$ node}
  \label{figTauGexampleDC}
\end{subfigure}
\begin{subfigure}{\textwidth}
  \centering
  \includegraphics[width=5in]{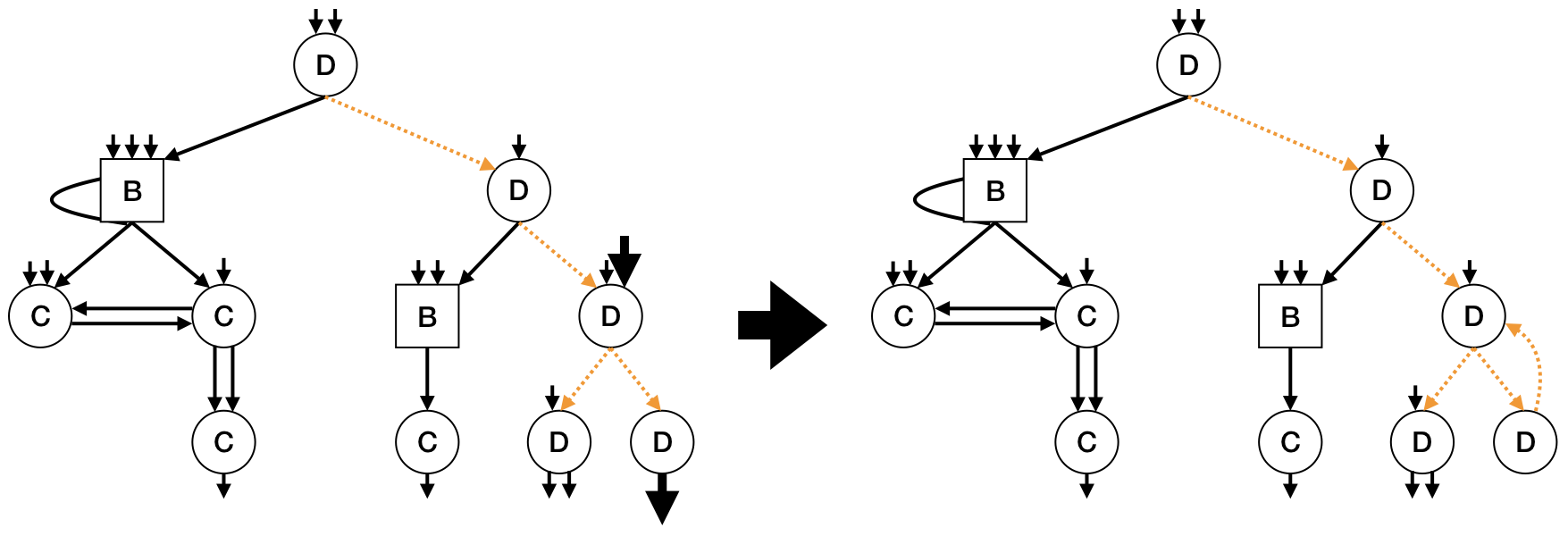}
  \caption{Enlarged outstub belongs to label $D$ node, enlarged instub belongs to label $D$ node}
  \label{figTauGexampleDD}
\end{subfigure}
\caption{If the enlarged instub is sampled for pairing with the enlarged outstub at left, then, after updating node labels, the orange dashed subgraph of label $D$ nodes at right is no longer a tree.}
\label{figTauGexamples}
\end{figure}

This intuition motivates our tree construction. In the tree construction, we begin with a root node denoted by $\phi$, and we assign attributes $(N_{\phi}, D_{\phi}, U_{\phi})$. Here $N_{\phi}$ is the number of instubs of $\phi$, all of  which will remain unpaired for the duration of the algorithm (so that the tree structure is maintained); $D_{\phi}$ is the number of offspring of $\phi$; and $U_{\phi} = 1$. To each offspring of $\phi$, denoted $1,2,\ldots,D_{\phi}$, we assign attributes $(N_i, D_i, U_i)$. Here $N_i$ denotes the number of instubs of $i$; one of these is paired with the $i$-th outstub of $\phi$, while the other $N_i-1$ remain unpaired (again, to preserve the tree structure).  Furthermore, unlike the root node, node $i$ receives $D_i$ offspring \textit{only if} $U_i = 1$; otherwise, the outstubs remain unpaired. This is explained by Figure \ref{figGraphLabelExample}, since only the orange dashed subgraph affects the quantity of interest.

The set of nodes $1,2,\ldots,D_{\phi}$ is denoted by $\hat{A}_1$. In general, we denote by $\hat{A}_m$ the $m$-th generation of the tree, i.e.\ the set of nodes at distance $m$ from the root node. The generic node in $\hat{A}_m, m > 1$ is denoted by $\i$, where $\i = (i_1, i_2, \ldots , i_m)$ is an ordered list of natural numbers that traces the unique path from $\phi$ to $\i$: specifically, this path is $\phi, i_1, (i_1,i_2), \ldots, \i$. The offspring of $\i$ (assuming $U_{\i} = 1$) are denoted by $\{ (\i,j) \}_{j=1}^{D_{\i}}$, where $(\i,j) = (i_1, i_2, \ldots , i_m, j)$ is the concatenation operation.

To assign attributes, we require two distributions: given $\{ N_h, D_h, U_h \}_{h=1}^n$, we define $f_n : \N \times \N \times \{0,1\} \rightarrow [0,1]$ and $f_n^* : \N \times \N \rightarrow [0,1]$ according to \eqref{eqTreeConstDist}. Note that $f_n$ is the distribution of node attributes for nodes sampled proportional to in-degree, whereas $f_n^*$ is the distribution of node attributes for nodes sampled uniformly at random from $V_n \setminus K_n$. Because non-root nodes are sampled proportional to in-degree in the graph construction (until an edge must be resampled, i.e.\ until we set $\tau_G = m$), non-root node attributes are sampled from $f_n$ in the tree construction. Similarly, since the first node is sampled uniformly from $V_n \setminus K_n$ in the case of interest of the graph construction, root node attributes attributes are sampled from $f_n^*$ in the tree construction.
\begin{equation} \label{eqTreeConstDist}
f_n(i,j,k) = \sum_{h=1}^n \frac{N_h}{L_n} 1( N_h = i , D_h = j , U_h = k ) , \quad f_n^*(i,j) = \frac{\sum_{h=1}^n U_h 1( N_h = i , D_h = j )}{\sum_{h=1}^n U_h} .
\end{equation}

The tree construction is given formally in Algorithm \ref{algTreeConstruction}. We denote by $\hat{G}_n = ( \hat{V}_n, \hat{E}_n )$ the resulting tree. Note the tree construction continues indefinitely, so the subscript $n$ does not refer to the number of nodes in the tree; rather, it refers to the length of the sequence $\{ N_h, D_h, U_h \}_{h=1}^n$ from which the distributions $f_n, f_n^*$ are defined. Finally, in Figure \ref{figTreeExample}, we show an example of the tree construction, which corresponds to the graph construction of Figures \ref{figGraphLabelExample} (i.e.\ the dashed orange subgraph has the same structure).

\begin{figure}
  \centering
  \includegraphics[height=2in]{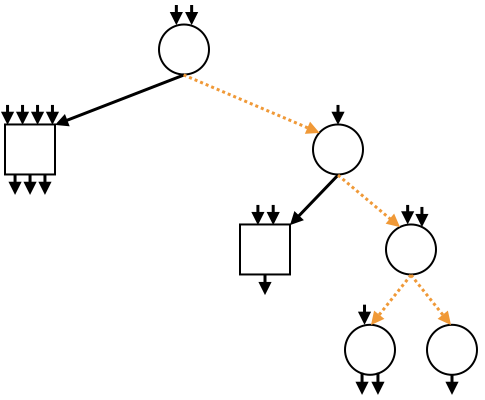}
  \caption{Branching process after three generations}
  \label{figTreeExample}
\end{figure}

\begin{algorithm}[t]
\DontPrintSemicolon
\caption{Tree Construction} \label{algTreeConstruction}

Draw root attributes $(N_{\phi}, D_{\phi}) \sim f_n^*$, set $U_{\phi} = 1$, set $\hat{A}_0 = \{ \phi \}$

\For{$m=1$ \KwTo $\infty$}{

Set $\hat{A}_m = \emptyset$

\For{$\i \in \hat{A}_{m-1}$}{

\If{$U_{\i} = 1$}{

\For{$j=1$ \KwTo $D_{\i}$}{

Add offspring $( \i ,j)$ to $\i$, let $(N_{( \i ,j)} , D_{( \i ,j)} , U_{( \i ,j)}) \sim f_n$, set $\hat{A}_m = \hat{A}_m \cup \{ ( \i ,j) \}$

}

}

}

}

\end{algorithm}

Having defined the tree construction, we now define the aforementioned quantity that follows the same distribution as the error bound in Lemma \ref{lemL1bound}. Specifically, we define $\hat{\mu}_{\phi}$ recursively as
\begin{equation}\label{eqMuHatRecursion}
\hat{\mu}_{\phi}(\phi) = 1, \quad \hat{\mu}_{\phi}( (\i,j) ) = \hat{\mu}_{\phi}(\i) \frac{(1-\alpha_n) U_{\i}}{D_{\i}}, (\i,j) \in \hat{A}_l, l > 0 ,
\end{equation}
where (by convention), $\i = \phi$ when $(\i,j) = i_1 \in \N$, i.e.\ when $(\i,j) \in \hat{A}_1$. Note that \eqref{eqMuHatRecursion} is the same as \eqref{eqMuMDefn} but computed on the tree $\hat{G}_n$; because there is a unique path from $\phi$ to $\i$ for each $\i \in \hat{V}_n$, this recursive definition is more convenient than the matrix definition.

We next state Lemma \ref{lemMuCoupling}, whose proof is deferred to Appendix \ref{secProofLemMuCoupling}. The proof essentially formalizes the intuition that we have presented: when $\tau_G > m$, the error bound from Lemma \ref{lemL1bound} is computed on a treelike subgraph and therefore follows the distribution of the analogous tree quantity.

\begin{lem} \label{lemMuCoupling}
For any $m \in \N$,
\begin{equation}
\mu_s^{(m)} ( V_n \setminus K_n ) | \{ \tau_G > m, U_s = 1 \} \stackrel{\mathcal{D}}{=} \sum_{j=0}^{m} \sum_{\i \in \hat{A}_j} U_{\i} \hat{\mu}_{\phi}(\i) ,
\end{equation}
where $\stackrel{\mathcal{D}}{=}$ denotes equality in distribution.
\end{lem}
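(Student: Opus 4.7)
The plan is to construct an explicit coupling between Algorithm \ref{algGraphConstruction} and Algorithm \ref{algTreeConstruction} under which, on the event $\{\tau_G > m, U_s = 1\}$, the label-$D$ subgraph of $G_n$ reachable from $s$ in at most $m$ steps is isomorphic (as a labeled tree, with matching in- and out-degrees) to the depth-$m$ truncation of $\hat{G}_n$. Because $\mu_s^{(m)}(V_n \setminus K_n)$ depends on $G_n$ only through this label-$D$ subgraph and its node attributes (any path contributing to $\tilde{P}^j(s,v)$ with $v \in V_n \setminus K_n$ must have $U_{v_k}=1$ at every internal node and is therefore supported on label-$D$ vertices), the distributional identity will follow from the structural match plus a verification that the tree-side expression reduces to the same sum-over-paths as the graph-side one.

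To set up the coupling, I would drive both algorithms from the same sequence of uniform random variables, pairing up each outstub-exploration in Algorithm \ref{algGraphConstruction} (processed in breadth-first order starting from $s$) with the corresponding offspring-generation step in Algorithm \ref{algTreeConstruction} (processed in the same order starting from $\phi$). The definition of $\tau_G$ directly encodes the two ways this correspondence can fail: Line \ref{algTauGresample} fires when a previously paired instub is resampled, and Line \ref{algTauGdToCD} fires when an outstub of a label-$D$ node lands on an existing label-$C$ or label-$D$ node (creating a cycle or multi-edge inside the label-$D$ subgraph). On $\{\tau_G > m\}$, neither event occurs during the first $m$ iterations, so every instub pairing selects a node $v$ with probability $N_v / L_n$ independently of the past, and the resulting attribute triples $(N_v, D_v, U_v)$ are therefore i.i.d.\ with distribution $f_n$, matching the offspring attribute distribution in Algorithm \ref{algTreeConstruction}. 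For the root, conditioning on $U_s = 1$ turns the uniform draw on $V_n$ into a size-1-biased draw on $V_n \setminus K_n$, whose attribute distribution is exactly $f_n^*$ by \eqref{eqTreeConstDist}. Outstubs of label-$B$ nodes in $G_n$ play no role in $\mu_s^{(m)}(V_n \setminus K_n)$ and are consistently absent from the tree, where offspring are generated only when $U_\i = 1$; this is why truncating the label-$B$ branches on both sides preserves the match.

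Given the coupling, I would close the proof by computing the two expressions explicitly. Because the label-$D$ subgraph is a tree (no multi-edges among label-$D$ nodes, unique path to each descendant), for any depth-$j$ label-$D$ descendant $v$ of $s$ with path $s = v_0 \to v_1 \to \cdots \to v_j = v$ we have $(\tilde{P}^j)(s,v) = \prod_{k=0}^{j-1} 1/D_{v_k}$, so that
\begin{equation}
\mu_s^{(m)}(V_n \setminus K_n) = \sum_{j=0}^{m} (1-\alpha_n)^j \sum_{v : v_j = v \text{ label-}D} \prod_{k=0}^{j-1} \frac{1}{D_{v_k}}.
\end{equation}
Unfolding the recursion \eqref{eqMuHatRecursion} gives $\hat{\mu}_\phi(\i) = (1-\alpha_n)^j \prod_{k=0}^{j-1} U_{\i_{\leq k}} / D_{\i_{\leq k}}$ for $\i \in \hat{A}_j$, where $\i_{\leq k}$ is the depth-$k$ ancestor of $\i$; since every proper ancestor automatically satisfies $U = 1$ (otherwise $\i$ would not exist in the tree), summing over $\i \in \hat{A}_j$ with $U_\i = 1$ and over $j \leq m$ gives the same expression under the isomorphism.

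The main obstacle I anticipate is not the formula-matching in the last step but rather the conditioning argument in the second paragraph: the event $\{\tau_G > m\}$ is a global event depending on the entire execution history, and one must verify that conditioning on it preserves the i.i.d.\ structure of the uniform instub draws rather than merely their marginal law. The cleanest way to handle this is to encode both algorithms as deterministic functions of an infinite i.i.d.\ sequence of $\mathrm{Uniform}[0,1]$ variables (one per outstub exploration), define $\tau_G$ explicitly in terms of this sequence, and verify by induction on $m$ that on $\{\tau_G > m\}$ the first $m$ iterations of Algorithm \ref{algGraphConstruction} produce the same label-$D$ subgraph as the depth-$m$ truncation of Algorithm \ref{algTreeConstruction}.
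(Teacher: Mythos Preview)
Your approach captures the right structural picture and is close in spirit to the paper's, but the paper organizes the argument differently. Rather than coupling Algorithms~\ref{algGraphConstruction} and~\ref{algTreeConstruction} directly, the paper introduces a third construction (Algorithm~\ref{algSimulConstruction}) that simultaneously builds a graph $H_n$ and a tree $\hat{H}_n$ with an explicit node map $\Phi$; crucially, $\hat{H}_n$ generates offspring for \emph{every} node regardless of $U_{\i}$, and adds a ``copy'' node to the tree whenever an outstub lands on an existing graph node. The lemma is then reduced to a chain of three equalities: (i) $\mu_s^{(m)}(V_n\setminus K_n)\mid\{\tau_G>m,U_s=1\}\stackrel{\mathcal D}{=}\nu_s^{(m)}(V_n\setminus K_n)\mid\{\tau_S>m\}$ (identical graph procedure under identically-defined stopping events); (ii) an almost-sure identity $\nu_s^{(m)}(V_n\setminus K_n)=\sum_{j\le m}\sum_{\i\in\hat A_j} U_{\i}\hat\nu_\phi(\i)$ on $\{\tau_S>m\}$, proved via two inductive claims showing that copy nodes contribute zero and that $\Phi$-mapped nodes contribute identically on both sides; and (iii) $\sum U_{\i}\hat\nu_\phi(\i)\mid\{\tau_S>m\}\stackrel{\mathcal D}{=}\sum U_{\i}\hat\mu_\phi(\i)$ unconditionally, arguing that the sum is insensitive to branches below $U_{\i}=0$ nodes, after which the truncated $\hat H_n$ matches the law of $\hat G_n$ from Algorithm~\ref{algTreeConstruction}. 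The payoff of this decomposition is that the conditioning-removal is isolated to the single explicit step~(iii).

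On the obstacle you flag: your proposed resolution does not actually close the gap. Encoding both constructions as deterministic functions of a shared i.i.d.\ sequence and verifying the structural match on $\{\tau_G>m\}$ yields the almost-sure equality $X=Y$ on $A$, hence $X\mid A\stackrel{\mathcal D}{=}Y\mid A$; but the lemma asserts $X\mid A\stackrel{\mathcal D}{=}Y$ with $Y$ \emph{unconditional}. You still owe an argument that conditioning on $A$ leaves the law of the tree quantity unchanged, and the induction you sketch does not supply one. This is exactly the content the paper packages into step~(iii), and it is why the paper's intermediate tree $\hat H_n$ is built to produce offspring for all nodes (so that its first $m$ generations have, on $\{\tau_S>m\}$, the same attribute law as the unconditional $\hat G_n$ once the irrelevant $U=0$ subtrees are pruned).
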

\begin{proof}
See Appendix \ref{secProofLemMuCoupling}.
\end{proof}

We can now explain the remainder of our approach to proving the lemma. Using Lemmas  \ref{lemL1bound} and \ref{lemMuCoupling}, and noting that $\sum_{\i \in \hat{A}_0} U_{\i} \hat{\mu}_{\phi}(\i) = U_{\phi} \hat{\mu}_{\phi}(\phi) = 1$, we have
\begin{align}
\P \left[ B_s(\epsilon) \middle| U_s = 1 \right] & \leq \P \left[ \alpha_n \left( \mu_s^{(m-1)}(V_n \setminus K_n) - 1 \right)  + e_s^{\trans} (1-\alpha)^m \tilde{P}^m e_{V_n \setminus K_n} \geq \epsilon \middle| U_s = 1 \right] \\
& \leq \P [ \tau_G \leq m | U_s = 1  ] + \P \left[ \alpha_n \sum_{j=1}^{m-1} \sum_{\i \in \hat{A}_j} U_{\i} \hat{\mu}_{\phi}(\i) + \sum_{\i \in \hat{A}_m} U_{\i} \hat{\mu}_{\phi}(\i) \geq \epsilon  \right] . \label{eqTailBoundTwoSummands} 
\end{align}

Hence, our approach to bounding \eqref{eqApproxFails} will be to further bound the two summands in \eqref{eqTailBoundTwoSummands}. Since \eqref{eqTailBoundTwoSummands} holds for any $m \in \N$, our final step will be to choose $m$ to optimize the sum of these bounds. In particular, we will choose $m$ to balance the two bounds. This is because the first summand is increasing in $m$, while the second is decreasing in $m$.

\subsection{Coupling failure (Step 3)} \label{secCouplingFailure}

Our bound for the first summand in \eqref{eqTailBoundTwoSummands} is given in Lemma \ref{lemCouplingFailure}. This result is similar to Lemma 5.4 of \cite{chen2017generalized}, and we simply modify their techniques to prove the bound. However, Assumption \ref{assDegSeq} is different than the assumption required for the result in \cite{chen2017generalized}. This difference arises because the result in \cite{chen2017generalized} requires the entire $m$-step neighborhood to be treelike, while we only require the $m$-step neighborhood of label $D$ nodes to be treelike. This allows us to relax the assumption from \cite{chen2017generalized}, which requires $\frac{1}{n} \sum_{h=1}^n N_h^2$ to converge (we only require $\frac{1}{n} \sum_{h=1}^n N_h^2 U_h$ to converge). In fact, the example degree sequence presented after Theorem \ref{thmSublinear} in Section \ref{secMainResultPresent} satisfies
\begin{equation}
\frac{1}{n} \sum_{h=1}^n N_h^2 U_h = O(1) , \quad \frac{1}{n} \sum_{h=1}^n N_h^2 = O(\sqrt{n}) .
\end{equation}
Hence, there are degree sequences for which the existing lemma does not apply, but for which our version of the lemma does apply. This is why we do not directly use Lemma 5.4 from \cite{chen2017generalized}.

\begin{lem} \label{lemCouplingFailure}
Given Assumption \ref{assDegSeq}, for any $m_n \rightarrow \infty$ as $n \rightarrow \infty$ s.t.\ $m_n = O(n^{\gamma})$, we have
\begin{equation}
\P [ \tau_G \leq m_n | U_s = 1 ] = O \left( n^{-\delta} + \zeta^{m_n} / \sqrt{n} \right) ,
\end{equation}
where $\gamma, \delta, \zeta$ are defined in Assumption \ref{assDegSeq}.
\end{lem}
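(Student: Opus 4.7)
The plan is to adapt the proof of Lemma 5.4 in \cite{chen2017generalized}, exploiting the fact that only the label-$D$ subgraph (not the full BFS-explored region) needs to be treelike. This is what allows us to work with the weaker assumption $\Omega_{n,3}$ on $\frac{1}{n}\sum_h U_h N_h^2$, rather than the convergence of $\frac{1}{n}\sum_h N_h^2$ required in \cite{chen2017generalized}. The first step is to split on $\Omega_n$: since $\P[\Omega_n^C] = O(n^{-\delta})$ by Assumption \ref{assDegSeq} and $\P[U_s=1] \to 1$ under the sublinear $|K_n|$ regime of Theorem \ref{thmSublinear}, the contribution from $\Omega_n^C$ is absorbed into the $O(n^{-\delta})$ term, reducing the problem to bounding $\P[\tau_G \leq m_n, \Omega_n \mid U_s = 1]$ by $O(\zeta^{m_n}/\sqrt{n})$.

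The core of the argument is a branching-process dominance combined with a union bound. Let $T_m$ denote the total number of outstubs paired in the first $m$ iterations of Algorithm \ref{algGraphConstruction}, and let $C_m$ denote the total number of instubs belonging to already-visited nodes with $U_v = 1$. On $\Omega_n$, any non-root reached node is sampled with probability proportional to $N_h$ (via one of its instubs), so its expected out-degree is $\frac{1}{L_n}\sum_h N_h D_h \to \zeta$ using $\Omega_{n,1},\Omega_{n,2}$, while its expected contribution to $C_m$ is $\frac{1}{L_n}\sum_h U_h N_h^2 \to \lambda$ using $\Omega_{n,1},\Omega_{n,3}$. Iterating a branching-process recursion yields $\E[T_{m_n} \mid \Omega_n] = O(\zeta^{m_n})$ and $\E[C_{m_n} \mid \Omega_n] = O(\zeta^{m_n})$. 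A union bound over the two triggers of $\tau_G$ gives: a resampling at the $k$-th pairing has probability $\leq (k-1)/L_n$, summing to $O(T_{m_n}^2/L_n) = O(\zeta^{2m_n}/n)$; and each label-$D$ outstub hits a label-$C$/$D$ instub with conditional probability $\leq C_{m_n}/L_n$, summing over the $O(\zeta^{m_n})$ label-$D$ outstubs to another $O(\zeta^{2m_n}/n)$. Since $\zeta^{2m_n}/n = (\zeta^{m_n}/\sqrt{n})^2$, this is $O(\zeta^{m_n}/\sqrt{n})$ when $\zeta^{m_n} \leq \sqrt{n}$, and otherwise trivially $\leq 1 = O(\zeta^{m_n}/\sqrt{n})$.

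The main technical obstacle is making the branching-process dominance rigorous: the true BFS differs from a pure branching process because instubs are sampled without replacement and attributes of reached nodes come from a finite degree sequence rather than an i.i.d.\ source. The clean remedy, dual to Lemma \ref{lemMuCoupling}, is to construct an explicit coupling between the BFS and the tree of Algorithm \ref{algTreeConstruction} that is tight on $\{\tau_G > m\}$, and to inherit the moment bounds on $T_m$ and $C_m$ from the tree side (where the branching is genuine with offspring drawn from $f_n$). A secondary subtlety is the $\E[T_{m_n}^2]$ factor in the resampling bound, which requires either a second-moment computation (as in \cite{chen2017generalized}) or a Markov-style truncation conditioning on $\{T_{m_n} \leq C\zeta^{m_n}\}$ for large $C$ and bounding the complement via the first moment.
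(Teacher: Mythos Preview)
Your outline matches the paper's proof closely: both pass to the simultaneous construction to work on the tree side, separate the two triggers for $\tau_G$ (resampling a paired instub, or a label-$D$ outstub hitting a label-$C/D$ instub), and bound each via branching-process moments derived from $\Omega_{n,1},\Omega_{n,2},\Omega_{n,3}$.

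The gap is in your handling of the $T_{m_n}^2$ term, and neither of your proposed remedies delivers the stated exponent under Assumption~\ref{assDegSeq} alone. A genuine second-moment bound $\E[T_{m_n}^2]=O(\zeta^{2m_n})$ would require control of $\frac{1}{L_n}\sum_h N_h D_h^2$ (the offspring second moment for the tree), which Assumption~\ref{assDegSeq} does not provide. And a symmetric truncation---bound $T_{m_n}^2\leq y_n^2\zeta^{2m_n}$ on $\{T_{m_n}\leq y_n\zeta^{m_n}\}$ and the complement by $O(1/y_n)$ via Markov---only optimizes to $O\big((\zeta^{2m_n}/n)^{1/3}\big)$, not the required $O\big((\zeta^{2m_n}/n)^{1/2}\big)$. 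The paper instead uses an \emph{asymmetric} truncation: it defines the per-generation event $F_m=\{\max_{l\leq m}\hat Z_l/\zeta^{l-1}\leq \zeta^* y_n\}$ and bounds $\P[F_m^C\mid\Omega_n]=O(1/y_n)$ via Doob's maximal inequality for the martingale $\hat Z_l/(\zeta_n^*\zeta_n^{l-1})$. On $F_m$, the number of outstubs at generation $l$ is bounded \emph{deterministically} by $\zeta^* y_n\zeta^{l-1}$, while the number of bad instubs is bounded only \emph{in expectation} by $O(\zeta^l/n)$; the product is therefore linear in $y_n$, giving $O(1/y_n + y_n\zeta^{2m_n}/n)$ and hence $O(\zeta^{m_n}/\sqrt{n})$ after optimizing. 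You can salvage your formulation with the same trick---write $T_{m_n}^2=T_{m_n}\cdot T_{m_n}$, truncate one factor at $y_n\zeta^{m_n}$, keep the other under the expectation---but this asymmetric step is exactly what your sketch omits.
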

\begin{proof}
See Appendix \ref{secProofLemCouplingFailure}.
\end{proof}

\subsection{Tail bound on branching process quantity (Step 4)} \label{secBranchingTailBound}

Our final step is to bound the second summand in \eqref{eqTailBoundTwoSummands}. Our approach is to bound the probability that either $\alpha_n \sum_{j=1}^{m-1} \sum_{\i \in \hat{A}_j} U_{\i} \hat{\mu}_{\phi}(\i)$ or $\sum_{\i \in \hat{A}_m} U_{\i} \hat{\mu}_{\phi}(\i)$ exceeds $\epsilon/2$. For the first term, the recursive definition of $\hat{\mu}_{\phi}$ yields a martingale structure that allows us to use an approach similar to the method of bounded differences (see, for example, Section 5.4 of \cite{dubhashi2009concentration}). The second term arises from the tail of the $m$-step neighborhood approximation from Appendix \ref{secErrorMstep}, and it is not a sum of terms; hence, its expected value decays geometrically fast in $m$, so Markov's inequality suffices. 
\begin{lem} \label{lemTreeTailBound}
Given Assumption \ref{assDegSeq}, for any $\epsilon > 0$, any $m_n \rightarrow \infty$ as $n \rightarrow \infty$ s.t.\ $m_n = O(n^{\gamma})$, and any $\alpha_n \rightarrow 0$ as $n \rightarrow \infty$, we have
\begin{equation}
\P \left[ \alpha_n \sum_{j=1}^{m-1} \sum_{\i \in \hat{A}_j} U_{\i} \hat{\mu}_{\phi}(\i) + \sum_{\i \in \hat{A}_m} U_{\i} \hat{\mu}_{\phi}(\i) \geq \epsilon  \right] = O \left( n^{-\delta} + p^{m_n} + e^{ - ((1-p) \epsilon )^2 / ( 2 \alpha_n ) } \right) ,
\end{equation}
where $p, \delta$ are defined in Assumption \ref{assDegSeq}.
\end{lem}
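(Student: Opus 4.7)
The plan is to split the event via a union bound and control each piece separately, exactly as hinted. Define $Y_j := \sum_{\i \in \hat{A}_j} U_{\i} \hat{\mu}_{\phi}(\i)$ and $R_m := \sum_{j=1}^{m-1} Y_j$. Then
\begin{equation}
\P\bigl[ \alpha_n R_m + Y_m \geq \epsilon \bigr] \leq \P\bigl[Y_m \geq \epsilon/2\bigr] + \P\bigl[\alpha_n R_m \geq \epsilon/2 \bigr],
\end{equation}
and I would work throughout on $\Omega_n$; the complement contributes the $O(n^{-\delta})$ term directly via Assumption \ref{assDegSeq}. Two observations drive the analysis. First, by a direct induction on the recursion \eqref{eqMuHatRecursion}, $\sum_{\i \in \hat{A}_j} \hat{\mu}_{\phi}(\i) \leq (1-\alpha_n)^j$, so $Y_j \leq (1-\alpha_n)^j$ \emph{pathwise}. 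Second, conditional on $\mathcal{F}_{j-1}$ (the $\sigma$-algebra generated by generations $0,\ldots,j-1$), the attributes $U_{(\i',k)}$ of the children of each $\i'$ with $U_{\i'}=1$ are independent Bernoulli variables with parameter $\bar{p}_n := L_n^{-1} \sum_h N_h U_h$ (the marginal of $U$ under $f_n$), yielding the affine identity $\E[Y_j \mid \mathcal{F}_{j-1}] = (1-\alpha_n)\bar{p}_n Y_{j-1}$.

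For the first tail, iterating this identity gives $\E[Y_{m_n}] = ((1-\alpha_n)\bar{p}_n)^{m_n}$. On $\Omega_{n,6}$ this is at most $(p + n^{-\gamma})^{m_n} = p^{m_n}(1+n^{-\gamma}/p)^{m_n}$, and since $m_n = O(n^{\gamma})$ the latter factor stays bounded, so Markov's inequality gives $\P[Y_{m_n} \geq \epsilon/2 \mid \Omega_n] = O(p^{m_n})$, which accounts for the $p^{m_n}$ term in the bound.

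The second tail is the crux. Set $\Delta_j := Y_j - (1-\alpha_n)\bar{p}_n Y_{j-1}$, so $\{\Delta_j\}$ is a martingale difference sequence with respect to $\{\mathcal{F}_j\}$. Solving the linear recursion yields
\begin{equation}
R_m - \E[R_m] = \sum_{i=1}^{m-1} c_{m,i}\Delta_i, \quad c_{m,i} := \sum_{j=i}^{m-1}((1-\alpha_n)\bar{p}_n)^{j-i} \leq \frac{1}{1-(1-\alpha_n)\bar{p}_n},
\end{equation}
which on $\Omega_n$ is $(1-p)^{-1}(1+o(1))$. Writing $Y_j = (1-\alpha_n)\sum_{\i' : U_{\i'}=1}\hat{\mu}_{\phi}(\i') Z_{\i'}$ with $Z_{\i'} := D_{\i'}^{-1}\sum_{k=1}^{D_{\i'}} U_{(\i',k)} \in [0,1]$ shows $Y_j \in [0, (1-\alpha_n)Y_{j-1}]$, hence $|\Delta_j| \leq (1-\alpha_n)Y_{j-1} \leq (1-\alpha_n)^j$ almost surely. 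Azuma--Hoeffding applied to $\alpha_n (R_m - \E[R_m]) = \sum_i \alpha_n c_{m,i}\Delta_i$ then yields an $\exp(-2t^2/V)$ bound with $V \leq \sum_i (\alpha_n c_{m,i}(1-\alpha_n)^i)^2 \leq \alpha_n (1-p)^{-2}(1+o(1))/2$, using the geometric sum $\sum_{i\geq 1}(1-\alpha_n)^{2i} \sim 1/(2\alpha_n)$. Since $\alpha_n \E[R_m] = O(\alpha_n) \to 0$, I can take $t = \epsilon/2 - o(1)$, giving an exponent of $(1-p)^2\epsilon^2(1-o(1))/\alpha_n$, which is eventually at least $((1-p)\epsilon)^2/(2\alpha_n)$.

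The main obstacle is producing the \emph{pathwise} range bound on $\Delta_j$ required by Azuma--Hoeffding: the natural estimate controls only the \emph{conditional} variance of $\Delta_j$, but that quantity is itself random (the tree can blow up), which would force a detour through Freedman-type inequalities with auxiliary control of a random variance proxy. The deterministic bound $Y_j \leq (1-\alpha_n)^j$ circumvents this cleanly. A secondary bookkeeping challenge is ensuring the $o(1)$ corrections from $\Omega_n$ (in $\bar{p}_n \to p$ and $\alpha_n \E[R_m] \to 0$) do not degrade the exponent below $((1-p)\epsilon)^2/(2\alpha_n)$; this is where the tight asymptotic $\sum_{i\geq 1}(1-\alpha_n)^{2i} \sim 1/(2\alpha_n)$ (rather than a looser $1/\alpha_n$) provides exactly the factor-of-two slack needed.
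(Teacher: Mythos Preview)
Your proposal is correct and follows essentially the same route as the paper: condition on $\Omega_n$, split the event, apply Markov's inequality to $Y_m$, and control $\alpha_n R_m$ via an Azuma-type bound using the martingale differences $\Delta_j$ together with the deterministic range estimate $Y_{j-1}\le(1-\alpha_n)^{j-1}$. The only differences are cosmetic: you solve the linear recursion explicitly to exhibit the weights $c_{m,i}$, whereas the paper produces the equivalent coefficients recursively via iterated conditioning inside the Chernoff bound; and the paper uses the cruder estimate $\sum_j(1-\alpha_n)^{2j}<1/\alpha_n$ (then tracks the lower-order terms explicitly), so the factor-of-two slack from your tighter $\sim 1/(2\alpha_n)$ is convenient but not strictly required.
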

\begin{proof}
See Appendix \ref{secProofLemTreeTailBound}.
\end{proof}

\subsection{Completing the proof of Lemma \ref{lemMainTailBound}} \label{secMainBound}

Finally, we can combine the results of this section to prove Lemma \ref{lemMainTailBound}. First, we substitute the results of Lemmas \ref{lemCouplingFailure} and \ref{lemTreeTailBound} into \eqref{eqTailBoundTwoSummands} to obtain (when Assumption \ref{assDegSeq} holds)
\begin{align}
\P \left[ B_s(\epsilon) \middle| U_s = 1 \right] = O \left( n^{-\delta} + \frac{\zeta^{m_n}}{\sqrt{n}} + p^{m_n} + e^{ - ((1-p) \epsilon )^2 / ( 2 \alpha_n ) }   \right) .
\end{align}
Next, we choose $m_n = \frac{ \log n }{ 2 \log ( \zeta / p ) }$ to equate the middle two terms, which gives
\begin{equation}
\frac{\zeta^{m_n}}{\sqrt{n}} = p^{m_n} = n^{ - \log (1/p) / ( 2 \log ( \zeta / p ) ) }  .
\end{equation}
For the third term, we let $\alpha_n = \rho \log(1/\tau) \log \zeta / \log n$ as in Claim \ref{clmChoiceOfAlpha} to obtain
\begin{equation}
\exp \left( - \frac{ ((1-p) \epsilon )^2 }{ 2 \alpha_n } \right) = \exp \left( - \frac{ ((1-p) \epsilon )^2 }{ 2 \rho \log(1/\tau) \log \zeta } \log n \right) = n^{ -  ((1-p) \epsilon )^2 / ( 2 \rho \log(1/\tau) \log \zeta ) } .
\end{equation}
Hence, we ultimately obtain 
\begin{equation}
\P \left[ B_s(\epsilon) \middle| U_s = 1 \right] = O ( n^{ -  c } ) , \quad c \triangleq \min \left\{ \delta ,  \frac{\log (1/p)}{ 2 \log ( \zeta / p ) } , \frac{((1-p) \epsilon )^2}{  2 \rho \log(1/\tau) \log \zeta } \right\} > 0 ,
\end{equation}
as claimed in Lemma \ref{lemMainTailBound}.

\section{Lemma \ref{lemMainTailBound} proof details} \label{appMainProofs}

\subsection{Proof of Lemma \ref{lemLinearCombo}} \label{secProofLemLinearCombo}

The lemma relates the stationary distributions of several Markov chains: those with transition matrices $P_s$, $\tilde{P}_s$, and $P_k, k \in K_n$. We will denote these Markov chains by $\{ X_i^s \}_{i=0}^{\infty}$, $\{ \tilde{X}_i^s \}_{i=0}^{\infty}$, and $\{ X_i^k \}_{i=0}^{\infty}, k \in K_n$, respectively, in this proof. Our basic approach will be to relate the stationary distributions indirectly via a renewal-reward interpretation of PPR. Hence, we begin by defining this interpretation in Appendix \ref{appRenewalReward}. We then prove the lemma in Appendix \ref{appLinearComboProof}. Recall from the main text that $\P_{G_n}[\cdot]$ and $\E_{G_n}[\cdot]$ denote probability and expectation with the DCM fixed (as in the statement of the lemma).

\subsubsection{Renewal-reward interpretation of PPR} \label{appRenewalReward}

From the dynamics of $\{ X_i^s \}_{i=0}^{\infty}$ described in Section \ref{secPprDefinition}, we can view the time instances of jumps to $s$ as forming a Bernoulli process with parameter $\alpha_n$, independent of the random walk. Furthermore, for each $v \in V_n$, we can define a reward function $1 ( X_i^s = v )$. Then, letting $L_s$ denote the time of the first jump to $s$, we define
\begin{equation}\label{eqAccRewDefn}
\tau_s(v) = \sum_{i=0}^{L_s-1} 1 ( X_i^s = v ) ,
\end{equation}
which, when $X_0^s = s$, gives the accumulated reward during the first inter-renewal interval. From the renewal-reward theorem (see, for example, Section 5.4 of \cite{gallager2013stochastic}), it follows that
\begin{equation}\label{eqRenRewThmApplication}
\lim_{t \rightarrow \infty} \frac{1}{t} \sum_{i=0}^{t-1} 1 ( X_i^s = v ) = \alpha_n \E_{G_n} [ \tau_s(v) | X_0^s = s ] ,
\end{equation}
where we have also used the fact that $L_s \sim \textrm{geometric}(\alpha_n)$. On the other hand, assuming $P_s$ is irreducible (which we will return to argue is without loss of generality), we have
\begin{equation}\label{eqCesaroAvg}
\pi_s(v) = \lim_{t \rightarrow \infty} \frac{1}{t} \sum_{i=0}^{t-1} 1 ( X_i^s = v ) .
\end{equation}
Hence, combining \eqref{eqRenRewThmApplication} and \eqref{eqCesaroAvg} yields
\begin{equation}\label{eqRenRewInterpretation}
\pi_s(v) = \alpha_n \E_{G_n} [ \tau_s(v) | X_0^s = s ]\ \forall\ v \in V_n .
\end{equation}
Similarly, for $k \in K_n$, we have $\pi_k(v) = \alpha_n \E_{G_n}[ \tau_k(v) | X_0^k = k ]$, where $\tau_k(v)$ is defined as in \eqref{eqAccRewDefn}.

For the chain $\{ \tilde{X}_i^s \}_{i=0}^{\infty}$, we have a similar (though slightly more subtle) renewal-reward interpretation. In particular, recall the dynamics of this chain are as follows: from $v \in V_n \setminus K_n$, follow the random walk with probability $1-\alpha_n$ and jump to $s$ with probability $\alpha_n$; from $k \in K_n$, jump to $s$ with probability 1. Hence, while the time instances of jumps to $s$ do not form a Bernoulli process on this chain, they still form a renewal process: inter-renewal intervals are independent (due to the Markov property) and identically-distributed (due to the time invariance of the Markov chain). Additionally, assuming $\tilde{X}_0^s = s$, the first renewal occurs at $\min \{ \tilde{L}_s, \tilde{H} + 1 \}$, where $\tilde{L}_s \sim \textrm{geometric}(\alpha_n)$ and $\tilde{H} = \inf \{ i \in \Z_+ : \tilde{X}_i^s \in K_n \}$ is the hitting time of $K_n$. It follows that
\begin{equation}
\tilde{\pi}_s(v) =\frac{ \E_{G_n} [ \tilde{\tau}_s(v) | \tilde{X}_0^s = s ] }{ \E_{G_n} [ \min \{ \tilde{L}_s, \tilde{H}+1 \} | \tilde{X}_0^s = s ] }\ \forall\ v \in V_n ,
\end{equation}
where $\tilde{\tau}_s(v) = \sum_{i=0}^{ \min \{ \tilde{L}_s - 1 , \tilde{H} \}  } 1 ( \tilde{X}_i^s = v )$.

Before proceeding, we argue the assumption of irreducibility is without loss of generality for the Markov chains at hand. Consider, for example, $\{ X_i^s \}_{i=0}^{\infty}$. If this chain is not irreducible, we can define $V_{n,s} \subset V_n$ as the states for which a path of positive probability from $s$ to $v$ exists. Then the Markov chain restricted to states $V_{n,s}$ is irreducible: for any $v, w \in V_{n,s}$, we can jump from $v$ to $s$ and then reach $w$ from $s$. We can then compute the stationary distribution $\{ \pi_s(v) \}_{v \in V_{n,s} }$ for this irreducible chain and set $\pi_s(v) = 0\ \forall\ v \in V_n \setminus V_{n,s}$ (intuitively, $v$ is unimportant to $s$ if $s$ cannot reach $v$, so its PPR should be zero). Note this is consistent with the derivation above. In particular, \eqref{eqRenRewThmApplication} and \eqref{eqCesaroAvg} hold for the chain restricted to states $V_{n,s}$, so \eqref{eqRenRewInterpretation} holds for $v \in V_{n,s}$; on the other hand, both sides of \eqref{eqRenRewInterpretation} are zero for $v \notin V_{n,s}$. 

\subsubsection{Proof of the lemma} \label{appLinearComboProof}

Equipped with this renewal-reward interpretation, we will relate $\pi_s$, $\tilde{\pi}_s$, and $\pi_k, k \in K_n$ by relating $\E_{G_n} [ \tau_s(v) | X_0^s = s ]$, $\E_{G_n} [ \tilde{\tau}_s(v) | \tilde{X}_0^s = s ]$, and $\E_{G_n} [ \tau_k(v) | X_0^k = k ], k \in K_n$. For this, we define $H = \inf \{ i \in \Z_+ : X_i^s \in K_n \}$, the quantity analogous to $\tilde{H}$ instead defined on $\{ X_i^s \}_{i=0}^{\infty}$.

Because the dynamics of $\{ X_i^s \}_{i=0}^{\infty}$ and $\{ \tilde{X}_i^s \}_{i=0}^{\infty}$ only differ when $K_n$ is reached, we can immediately obtain several relationship between the quantities computed on these chains. In particular, if $K_n$ is \textit{not} reached before the first renewal (i.e.\ if $L_s \leq H$, $\tilde{L}_s \leq \tilde{H}$), the chains have identical dynamics. Therefore, we can write
\begin{equation}
\E_{G_n} [ \tau_s(v) | L_s \leq H , X_0^s = s ] = \E_{G_n} [ \tilde{\tau}_s(v) | \tilde{L}_s \leq \tilde{H} , \tilde{X}_0^s = s ]\ \forall\ v \in V_n .
\end{equation}
Furthermore, $\tilde{\tau}_s(v) = 0$ when $v \in K_n$ and $\tilde{L}_s \leq \tilde{H}$ (i.e.\ when $K_n$ is not reached before the first renewal), so we may rewrite this as 
\begin{equation}\label{eqTauTauTildeNoVisitK}
\E_{G_n} [ \tau_s(v) | L_s \leq H , X_0^s = s ]  = U_v \E_{G_n} [ \tilde{\tau}_s(v) | \tilde{L}_s \leq \tilde{H} , \tilde{X}_0^s = s ]\ \forall\ v \in V_n .
\end{equation}

By a similar argument, if $K_n$ \textit{is} reached before the first renewal ($L_s > H$, $\tilde{L}_s > \tilde{H}$), the dynamics of the chains differ after $H, \tilde{H}$, but they remain the same up to and including $H, \tilde{H}$. Hence,
\begin{equation} \label{eqXHandLgtrHprob}
\P_{G_n} [ X_H^s = k , L_s > H | X_0^s = s ] = \P_{G_n} [ \tilde{X}_{\tilde{H}}^s = k , \tilde{L}_s > \tilde{H} | \tilde{X}_0^s = s ]\ \forall\ k \in K_n ,
\end{equation}
which also implies
\begin{equation}  \label{eqLleqH}
\P_{G_n} [ L_s \leq H | X_0^s = s ] = \P_{G_n} [\tilde{L}_s \leq \tilde{H} | \tilde{X}_0^s = s ] .
\end{equation}
We can obtain another expression for the right side of \eqref{eqXHandLgtrHprob}. In particular, since jumps from $k$ to $s$ occur with probability 1 on the $\{ \tilde{X}_i^s \}_{i = 0}^{\infty}$ chain, $k$ is visited at most one time before the first renewal, i.e.\ $\tilde{\tau}_s(k) \in \{0,1\}$. Furthermore, $\tilde{\tau}_s(k) = 1$ if and only if  $\tilde{L}_s > \tilde{H}$ and $\tilde{X}_{\tilde{H}}^s = k$. Hence,
\begin{equation}\label{eqXHandLgtrHexp}
\P_{G_n} [ X_H^s = k , L_s > H | X_0^s = s ] = \E_{G_n} [ \tilde{\tau}_s(k) | \tilde{X}_0^s = s ]\ \forall\ k \in K_n .
\end{equation}
If instead $K_n$ is reached, the dynamics of $\{ X_i^s \}_{i=0}^{\infty}$ and $\{ \tilde{X}_i^s \}_{i=0}^{\infty}$ differ. In this case, we claim
\begin{align}\label{eqVisitsBeforeAndAfterK}
\E_{G_n} [ \tau_s(v) | X_H^s = k,  L_s > H , X_0^s = s ] & = U_v \E_{G_n} [\tilde{\tau}_s(v) | \tilde{X}_{\tilde{H}}^s = k,  \tilde{L}_s > \tilde{H} , \tilde{X}_0^s = s ] \\
& \quad + \E_{G_n} [ \tau_k(v) | X_0^k = k ]   ,
\end{align}
which we will return to prove shortly. (In essence, \eqref{eqVisitsBeforeAndAfterK} counts the number visits to $v$ before and after reaching $k$ using the $\{ \tilde{X}_i^s \}_{i=0}^{\infty}$ and $\{ X_i^k \}_{i=1}^k$ chains, respectively.) 

Using \eqref{eqTauTauTildeNoVisitK}, \eqref{eqXHandLgtrHprob}, \eqref{eqLleqH}, \eqref{eqXHandLgtrHexp}, and \eqref{eqVisitsBeforeAndAfterK}, and the law of total expectation, then gives
\begin{align}
\E_{G_n} [ \tau_s(v) | X_0^s = s ]  
& = U_v \E_{G_n} [ \tilde{\tau}_s(v) | \tilde{X}_0^s = s ] + \sum_{k \in K_n} \E_{G_n} [ \tilde{\tau}_s(k) | \tilde{X}_0^s = s ] \E_{G_n} [ \tau_k(v) | X_0^k = k ] .
\end{align}
We then use the renewal-reward interpretation from Appendix \ref{appRenewalReward} to translate this equation back to stationary distributions. Specifically, multiplying by $\alpha_n$ on both sides, and multiplying and dividing by $\E_{G_n} [ \min \{ \tilde{L}_s, \tilde{H} + 1 \} | \tilde{X}_0^s = s ]$ on the right side, gives
\begin{equation}\label{eqLinearComboBeforeNormalization}
\pi_s(v) = \E_{G_n} [ \min \{ \tilde{L}_s, \tilde{H}  + 1  \} | \tilde{X}_0^s = s ] \left( \alpha_n U_v \tilde{\pi}_s(v) + \sum_{k \in K_n} \tilde{\pi}_s(k) \pi_k(v) \right) .
\end{equation}
Then, summing over $v \in V_n$ (assuming stationary distributions are normalized to sum to 1),
\begin{align}\label{eqLinearComboNormalization}
& 1 = \E_{G_n} [ \min \{ \tilde{L}_s, \tilde{H} + 1  \} | \tilde{X}_0^s = s ] \left( \alpha_n \tilde{\pi}_s ( V_n \setminus K_n ) + \tilde{\pi}_s ( K_n ) \right) \\
& \Rightarrow \E_{G_n} [ \min \{ \tilde{L}_s, \tilde{H} + 1  \} | \tilde{X}_0^s = s ] = \frac{1}{ \alpha_n + ( 1- \alpha_n ) \tilde{\pi}_s ( K_n ) } .
\end{align}
Finally, combining \eqref{eqLinearComboBeforeNormalization} and \eqref{eqLinearComboNormalization} completes the proof.

We now return to prove \eqref{eqVisitsBeforeAndAfterK}. For this, we first have by definition of $\tau_s(v)$,
\begin{align}\label{eqLastTermTwoSummands}
\E_{G_n} [ \tau_s(v) | X_H^s = k,  L_s > H , X_0^s = s ] & = \E_{G_n} \left[ \sum_{i=0}^{H-1} 1 ( X_i^s = v  ) \middle| X_H^s = k,  L_s > H , X_0^s = s \right] \\
& \quad + \E_{G_n} \left[ \sum_{i=H}^{L_s-1} 1 ( X_i^s = v  ) \middle| X_H^s = k,  L_s > H , X_0^s = s \right] .
\end{align}

Now consider the first summand in \eqref{eqLastTermTwoSummands}. By the preceding arguments, $\{ {X}_i^s \}_{i=0}^{\infty}$  and $\{ \tilde{X}_i^s \}_{i=0}^{\infty}$ have the same dynamics before $H, \tilde{H}$, so
\begin{equation}
\E_{G_n} \left[ \sum_{i=0}^{H-1} 1 ( X_i^s = v  ) \middle| X_H^s = k,  L_s > H , X_0^s = s \right] = \E_{G_n} \left[ \sum_{i=0}^{\tilde{H}-1} 1 ( \tilde{X}_i^s = v  ) \middle| \tilde{X}_{\tilde{H}}^s = k,  \tilde{L}_s > \tilde{H} , \tilde{X}_0^s = s \right] .
\end{equation}
For $v \in V_n \setminus K_n$ (i.e.\ $U_v = 1$), we can write
\begin{align}
& \E_{G_n} \left[ \sum_{i=0}^{\tilde{H}-1} 1 ( \tilde{X}_i^s = v  ) \middle| \tilde{X}_{\tilde{H}}^s = k,  \tilde{L}_s > \tilde{H} , \tilde{X}_0^s = s \right] \\
& \quad = \E_{G_n} \left[ \sum_{i=0}^{ \tilde{H} } 1 ( \tilde{X}_i^s = v  )  \middle| \tilde{X}_{\tilde{H}}^s = k,  \tilde{L}_s > \tilde{H} , \tilde{X}_0^s = s \right] \\
& \quad = \E_{G_n} \left[ \sum_{i=0}^{ \min \{ \tilde{L}_s - 1 , \tilde{H} \} } 1 ( \tilde{X}_i^s = v  ) \middle| \tilde{X}_{\tilde{H}}^s = k,  \tilde{L}_s > \tilde{H} , \tilde{X}_0^s = s \right] \\
& \quad = \E_{G_n} \left[\tilde{\tau}_s(v) \middle| \tilde{X}_{\tilde{H}}^s = k,  \tilde{L}_s > \tilde{H} , \tilde{X}_0^s = s \right] ,
\end{align}
where the first equality holds since $v \in V_n \setminus K_n$ and by conditioning on $\{ \tilde{X}_{\tilde{H}}^s = k \}$ ($k \in K_n$), the second holds by conditioning on $\{ \tilde{L}_s > \tilde{H}  \}$, and the third holds by definition of $\tilde{\tau}_s(v)$. Note that if $v \in K_n$ (i.e.\ $U_v = 0$), we simply have 
\begin{equation}
\E_{G_n} \left[ \sum_{i=0}^{\tilde{H}-1} 1 ( \tilde{X}_i^s = v  ) \middle| \tilde{X}_{\tilde{H}}^s = k,  \tilde{L}_s > \tilde{H} , \tilde{X}_0^s = s \right] = 0 ,
\end{equation}
which holds by definition of $\tilde{H}$. To summarize, we have shown
\begin{equation}\label{eqFirstSummandCompleteLC}
\E_{G_n} \left[ \sum_{i=0}^{H-1} 1 ( X_i^s = v  ) \middle| X_H^s = k,  L_s > H , X_0^s = s \right] = U_v \E_{G_n} \left[\tilde{\tau}_s(v) \middle| \tilde{X}_{\tilde{H}}^s = k,  \tilde{L}_s > \tilde{H} , \tilde{X}_0^s = s \right] .
\end{equation}

Next, consider the second summand in \eqref{eqLastTermTwoSummands}. We rewrite this term as
\begin{equation}\label{eqRewriteSecondSummand}
\frac{ \E_{G_n} [ \sum_{i=H}^{L_s-1} 1 ( X_i^s = v , X_H^s = k, X_0^s = s ) 1 (  L_s > H  ) ] }{ \P_{G_n} [ X_H^s = k,  L_s > H , X_0^s = s ] } ,
\end{equation}
and we focus on the numerator. First, we note $1 (  L_s > H  )  = \sum_{l > h} 1 ( L_s = l, H = h)$, where the sum is taken over $\{ (l,h) \in \Z_+ \times \Z_+ : l > h \}$. Substituting and using linearity gives
\begin{align}
& \sum_{l > h} \E_{G_n} \left[ \sum_{i=H}^{L_s-1} 1 ( X_i^s = v  , X_H^s = k, X_0^s = s ) 1 ( L_s = l , H = h ) \right] \\
& \quad = \sum_{l > h} \E_{G_n} \left[ \sum_{i=h}^{l-1} 1 ( X_i^s = v  , X_h^s = k, X_0^s = s ) 1 ( L_s = l , H = h ) \right] \\
& \quad = \sum_{l > h} \sum_{i=h}^{l-1} \P_{G_n} \left[ X_i^s = v  , X_h^s = k, X_0^s = s , L_s = l , H = h \right]  \\
& \quad = \sum_{l > h} \sum_{i=h}^{l-1} \P_{G_n} \left[ X_i^s = v , X_0^s = s , L_s = l , H = h \middle| X_h^s = k \right] \P_{G_n} \left[ X_h^s = k \right] . \label{eqPreMarkovProperty}
\end{align}

We next aim to apply the Markov property to the conditional probability in the last equation. For this, we write $\{ L_s = l \} = A_{s,l} \cap ( \cap_{j=0}^{l-1} A_{s,j}^C )$, where $A_{s,j}$ denotes the event that a jump to $s$ occurs at step $j$ of the random walk. We then have
\begin{equation}
\{ X_i^s = v , X_0^s = s , L_s = l , H = h \} = \left\{ X_i^s = v , A_{s,l} , \cap_{j=h+1}^{l-1} A_{s,j}^C \right\} \cap \left\{ H = h , \cap_{j =0}^{h} A_{s,j}^C , X_0^s = s \right\}
\end{equation}
where on the right side, the first event is the future and the second event is the past, when $h$ is viewed as the present. Hence, the Markov property implies
\begin{align}
& \P_{G_n} \left[ X_i^s = v , X_0^s = s , L_s = l , H = h \middle| X_h^s = k \right] \\
& \quad = \P_{G_n} \left[ X_i^s = v , A_{s,l} , \cap_{j=h}^{l-1} A_{s,j}^C \middle| X_h^s = k \right] \P_{G_n} \left[ H = h , \cap_{j =0}^{h-1} A_{s,j}^C , X_0^s = s \middle| X_h^s = k \right] . \label{eqPostMarkovProperty}
\end{align}
Furthermore, by the time invariance of the Markov chain,
\begin{align}
\P_{G_n} \left[ X_i^s = v , A_{s,l} , \cap_{j=h}^{l-1} A_{s,j}^C \middle| X_h^s = k \right] & = \P_{G_n} \left[ X_{i-h}^s = v , A_{s,l-h} , \cap_{j=0}^{l-h-1} A_{s,j}^C \middle| X_0^s = k \right] \\
& = \P_{G_n} \left[ X_{i-h}^s = v , L_s = l-h \middle| X_0^s = k \right] . \label{eqTimeInvariance}
\end{align}
Finally, by definition of $A_{s,j}$, we have
\begin{equation}\label{eqDefinitionOfAsj}
\P_{G_n} \left[ H = h , \cap_{j =0}^{h-1} A_{s,j}^C , X_0^s = s \middle| X_h^s = k \right] = \P_{G_n} \left[ H = h , L_s > h , X_0^s = s \middle| X_h^s = k \right] .
\end{equation}

Combining \eqref{eqPreMarkovProperty}, \eqref{eqPostMarkovProperty}, \eqref{eqTimeInvariance}, and \eqref{eqDefinitionOfAsj} then yields
\begin{align}
& \sum_{l > h} \E_{G_n} \left[ \sum_{i=H}^{L_s-1} 1 ( X_i^s = v  , X_H^s = k, X_0^s = s ) 1 ( L_s = l , H = h ) \right] \\
& \quad = \sum_{l > h} \sum_{i=h}^{l-1} \P_{G_n} \left[ X_{i-h}^s = v , L_s = l-h \middle| X_0^s = k \right] \P_{G_n} \left[ H = h , L_s > h , X_0^s = s  , X_h^s = k \right] \\
& \quad = \sum_{h \in \Z_+} \P_{G_n} \left[ H = h , L_s > h , X_0^s = s  , X_h^s = k \right]  \sum_{l = h+1}^{\infty} \sum_{i=0}^{l-h-1} \P_{G_n} \left[ X_{i}^s = v , L_s = l-h \middle| X_0^s = k \right] , \label{eqNumeratorAlmostDone}
\end{align}
where in the second equality we have simply rearranged terms and rewritten summation indices. For the inner double summation, we can obtain
\begin{align}
& \sum_{l = h+1}^{\infty} \sum_{i=0}^{l-h-1} \P_{G_n} \left[ X_{i}^s = v , L_s = l-h \middle| X_0^s = k \right] \\
& \quad =  \sum_{l = h+1}^{\infty} \E_{G_n} \left[  \sum_{i=0}^{l-h-1}  1 ( X_i^s = v ) 1 ( L_s = l-h , X_0^s = k ) \right] \frac{1}{ \P_{G_n} [ X_0^s = k ] } \\
& \quad =  \sum_{l = h+1}^{\infty} \E_{G_n} \left[  \sum_{i=0}^{L_s-1}  1 ( X_i^s = v ) 1 ( L_s = l-h , X_0^s = k ) \right] \frac{1}{ \P_{G_n} [ X_0^s = k ] } \\
& \quad =  \E_{G_n} \left[  \sum_{i=0}^{L_s-1}  1 ( X_i^s = v ) 1 ( X_0^s = k ) \sum_{l = h+1}^{\infty} 1 ( L_s = l-h )  \right] \frac{1}{ \P_{G_n} [ X_0^s = k ] } \\
& \quad =  \E_{G_n} \left[  \sum_{i=0}^{L_s-1}  1 ( X_i^s = v ) \middle| X_0^s = k \right] = \E_{G_n} [ \tau_s(v) | X_0^s = k ] = \E_{G_n} [ \tau_k(v) | X_0^k = k ]  ,
\end{align}
where the first three steps are straightforward, the fourth step uses the fact that $L_s$ is integer-valued and \textit{a.s.}\ finite, and the fifth step follows by definition. The final inequality follows because $\tau_s(v)$ and $\tau_k(v)$ count the number of visits to $v$ on the $\{ X_i^s \}_{i=0}^{\infty}$ and $\{ X_i^k \}_{i=0}^{\infty}$ chains before jumps occur, and before jumps occur, these chains have the same dynamics (since they only differ in jump locations, $s$ versus $k$). Substituting into \eqref{eqNumeratorAlmostDone} gives
\begin{align}
& \sum_{l > h} \E_{G_n} \left[ \sum_{i=H}^{L_s-1} 1 ( X_i^s = v  , X_H^s = k, X_0^s = s ) 1 ( L_s = l , H = h ) \right] \\
& \quad =\E_{G_n} [ \tau_k(v) | X_0^k = k ]  \sum_{h \in \Z_+} \P_{G_n} \left[ H = h , L_s > h , X_0^s = s  , X_h^s = k \right] \\
& \quad = \E_{G_n} [ \tau_k(v) | X_0^k = k ]  \P_{G_n} \left[ L_s > H , X_0^s = s  , X_H^s = k \right] . \label{eqNumeratorDone}
\end{align}
Hence, combining \eqref{eqRewriteSecondSummand} and \eqref{eqNumeratorDone} yields
\begin{equation}\label{eqSecondSummandCompleteLC}
\E_{G_n} \left[ \sum_{i=H}^{L_s-1} 1 ( X_i^s = v  ) \middle| X_H^s = k,  L_s > H , X_0^s = s \right] = \E_{G_n} [ \tau_k(v) | X_0^k = k ]  .
\end{equation}
Finally, we substitute \eqref{eqFirstSummandCompleteLC} and \eqref{eqSecondSummandCompleteLC} into \eqref{eqLastTermTwoSummands} to complete the proof of \eqref{eqVisitsBeforeAndAfterK}.

\subsection{Proof of Lemma \ref{lemL1bound}} \label{secProofLemL1bound}

We aim to bound $\| \pi_s - ( \alpha_n e_s^{\trans} + \sum_{k \in K_n} \beta_s(k) \pi_k ) \|_1$, where for each $k \in K_n$,
\begin{equation}
\beta_{s}(k) = \frac{\tilde{\pi}_s(k)}{\alpha_n + (1-\alpha_n) \tilde{\pi}_s(K_n) } .
\end{equation}
Using Lemma \ref{lemLinearCombo}, we can write
\begin{equation}\label{eqLinearComboIntoL1}
\left\| \pi_s - \left( \alpha_n e_s^{\trans} + \sum_{k \in K_n} \beta_s(k) \pi_k \right) \right\|_1 = \sum_{v \in V_n} \left| \frac{\alpha_n U_v \tilde{\pi}_s(v)  }{\alpha_n + (1-\alpha_n) \tilde{\pi}_s(K_n)} - \alpha_n 1(v = s) \right| .
\end{equation}
We next claim that each summand in \eqref{eqLinearComboIntoL1} is non-negative. This is obvious for $v \neq s$. For $v = s$, since $U_s = 1$ by assumption, it suffices to show
\begin{equation}\label{eqDropAbsValS}
\tilde{\pi}_s(s) \geq \alpha_n + (1-\alpha_n) \tilde{\pi}_s(K_n) .
\end{equation}
To this end, first note that since $\tilde{\pi}_s = \tilde{\pi}_s \tilde{P}_s$ and $\tilde{\pi}_s 1_n = 1$, we can write
\begin{equation}
\tilde{\pi}_s = (1-\alpha_n) \tilde{\pi}_s \left( \tilde{P} + e_{K_n} e_s^{\trans} \right) + \alpha_n e_s^{\trans} ,
\end{equation}
which implies
\begin{equation} \label{eqPiKsNeumann}
\tilde{\pi}_s = \alpha_n e_s^{\trans} \left( I - (1-\alpha_n) \left( \tilde{P} + e_{K_n} e_s^{\trans} \right) \right)^{-1} = \alpha_n e_s^{\trans} \sum_{i=0}^{\infty} (1-\alpha_n)^i \left( \tilde{P} + e_{K_n} e_s^{\trans} \right)^i .
\end{equation}
Using \eqref{eqPiKsNeumann}, we have
\begin{align}
\tilde{\pi}_s(s) & = \alpha_n e_s^{\trans} \sum_{i=0}^{\infty} (1-\alpha_n)^i \left( \tilde{P} + e_{K_n} e_s^{\trans} \right)^i e_s = \alpha_n + \alpha_n e_s^{\trans} \sum_{i=1}^{\infty} (1-\alpha_n)^i \left( \tilde{P} + e_{K_n} e_s^{\trans} \right)^i e_s \\
& = \alpha_n + \alpha_n ( 1-\alpha_n ) e_s^{\trans} \sum_{i=0}^{\infty} (1-\alpha_n)^i \left( \tilde{P} + e_{K_n} e_s^{\trans} \right)^i \left( \tilde{P} + e_{K_n} e_s^{\trans} \right) e_s \\
& \geq  \alpha_n + \alpha_n ( 1-\alpha_n ) e_s^{\trans} \sum_{i=0}^{\infty} (1-\alpha_n)^i \left( \tilde{P} + e_{K_n} e_s^{\trans} \right)^i  e_{K_n} e_s^{\trans} e_s  = \alpha_n + (1-\alpha_n) \tilde{\pi}_s e_{K_n} ,
\end{align}
where for the inequality we simply dropped a nonnegative term. This establishes \eqref{eqDropAbsValS}, since $\tilde{\pi}_s e_{K_n} = \tilde{\pi}_s(K_n)$. Hence, the expression in \eqref{eqLinearComboIntoL1} simplifies to
\begin{equation}\label{eqLinearComboIntoL1NoAbsVal}
\left\| \pi_s - \left( \alpha_n e_s^{\trans} + \sum_{k \in K_n} \beta_s(k) \pi_k \right) \right\|_1 = \alpha_n \left( \frac{\tilde{\pi}_s(V_n \setminus K_n)}{\alpha_n + (1-\alpha_n) \tilde{\pi}_s(K_n)} - 1 \right) .
\end{equation}
We next bound the right side of \eqref{eqLinearComboIntoL1NoAbsVal} in terms of $\mu_s^{(m)}$, as in the statement of the lemma. We begin by establishing a relationship between $\tilde{\pi}_s$ and $\mu_s$, where
\begin{equation}\label{eqMuDefn}
\mu_s = \lim_{m \rightarrow \infty} \mu_s^{(m)} = e_s^{\trans} \sum_{i=0}^{\infty} (1-\alpha_n)^i \tilde{P}^i =e_s^{\trans}  \left( I - (1-\alpha_n) \tilde{P} \right)^{-1} .
\end{equation} 
To this end, consider the matrix inversion in \eqref{eqPiKsNeumann}. By the Sherman-Morrison-Woodbury formula (see, for example, Section 6.4 of \cite{laub2005matrix}),
\begin{align}
& \left( I - (1-\alpha_n) \left( \tilde{P} + e_{K_n} e_s^{\trans} \right) \right)^{-1} = \left( \left( I - (1-\alpha_n) \tilde{P} \right) - (1-\alpha_n) e_{K_n} e_s^{\trans} \right)^{-1} \\
& \quad = \left( I - (1-\alpha_n) \tilde{P} \right)^{-1} + \frac{ \left( I - (1-\alpha_n) \tilde{P} \right)^{-1} (1-\alpha_n) e_{K_n} e_s^{\trans} \left( I - (1-\alpha_n) \tilde{P} \right)^{-1} }{ 1 -  e_s^{\trans} \left( I - (1-\alpha_n) \tilde{P} \right)^{-1} (1-\alpha_n) e_{K_n} } \label{eqShermanApplication} .
\end{align}
It follows that, for each $v \in V_n$,
\begin{align}
\tilde{\pi}_s(v) & = \alpha_n e_s^{\trans} \left( I - (1-\alpha_n) \left( \tilde{P} + e_{K_n} e_s^{\trans} \right) \right)^{-1} e_v \\
&  = \alpha_n e_s^{\trans} \left(  \left( I - (1-\alpha_n) \tilde{P} \right)^{-1} + \frac{ \left( I - (1-\alpha_n) \tilde{P} \right)^{-1} (1-\alpha_n) e_{K_n} e_s^{\trans} \left( I - (1-\alpha_n) \tilde{P} \right)^{-1} }{ 1 -  e_s^{\trans} \left( I - (1-\alpha_n) \tilde{P} \right)^{-1} (1-\alpha_n) e_{K_n} } \right) e_v \\
& = \alpha_n  \mu_s(v) \left( 1 + \frac{(1-\alpha_n) \mu_s(K_n)}{1-(1-\alpha_n) \mu_s(K_n) } \right)  = \frac{ \alpha_n \mu_s(v) }{ 1 - (1-\alpha_n) \mu_s(K_n) } \label{eqPiMuRelationship} ,
\end{align}
where the first three equalities follow from \eqref{eqPiKsNeumann}, \eqref{eqShermanApplication}, and \eqref{eqMuDefn}, respectively, and the fourth involves simple manipulations. We can then combine \eqref{eqLinearComboIntoL1NoAbsVal} and \eqref{eqPiMuRelationship} to obtain
\begin{equation}
\left\| \pi_s - \left( \alpha_n e_s^{\trans} + \sum_{k \in K} \beta_s(k) \pi_k \right) \right\|_1 =  \alpha_n \left( \mu_s(V_n \setminus K_n) - 1 \right) . \label{eqL1BoundNoM}
\end{equation}

Next, we observe
\begin{align}
\mu_s(V_n \setminus K_n) & = \mu_s^{(m)}(V_n \setminus K_n) + e_s^{\trans} \sum_{i=m+1}^{\infty} (1-\alpha_n)^i \tilde{P}^i e_{V_n \setminus K_n} \\
& = \mu_s^{(m)}(V_n \setminus K_n) + e_s^{\trans} (1-\alpha_n)^m \tilde{P}^m \sum_{i=1}^{\infty} (1-\alpha_n)^i \tilde{P}^i e_{V_n \setminus K_n} \\
& = \mu_s^{(m)}(V_n \setminus K_n) + \left( \mu_s^{(m)} - \mu_s^{(m-1)}   \right) \sum_{i=1}^{\infty} (1-\alpha_n)^i \tilde{P}^i e_{V_n \setminus K_n} , \label{eqMuStoMuSmBound}
\end{align}
where we have used \eqref{eqMuMDefn} and \eqref{eqMuDefn}. We next claim $\tilde{P} e_{V_n \setminus K_n} \leq e_{V_n \setminus K_n}$, where the inequality is taken componentwise. To see this, let $( \tilde{P} e_{V_n \setminus K_n} )(i)$ denote the $i$-th component of $\tilde{P} e_{V_n \setminus K_n}$. Then
\begin{equation}\label{eqPeVminKcomp}
( \tilde{P} e_{V_n \setminus K_n} )(i)= \sum_{j=1}^n \tilde{P}(i,j) e_{V_n \setminus K_n}(j) = U_i  \sum_{j=1}^n P(i,j) e_{V_n \setminus K_n}(j) \leq U_i \sum_{j=1}^n P(i,j) = U_i = e_{V_n \setminus K_n}(i) ,
\end{equation}
where the second equality uses the definition of $\tilde{P}$, the third equality holds because $P$ is row stochastic, and the remaining steps are straightforward. It follows that
\begin{equation}\label{eqComponentwiseIterate}
\sum_{i=1}^{\infty} (1-\alpha_n)^i \tilde{P}^i e_{V_n \setminus K_n} \leq \left( \sum_{i=1}^{\infty} (1-\alpha_n)^i \right) e_{V_n \setminus K_n} = \left( \frac{1-\alpha_n}{\alpha_n}  \right) e_{V_n \setminus K_n} ,
\end{equation}
where the inequality is again componentwise. Combining \eqref{eqMuStoMuSmBound} and \eqref{eqComponentwiseIterate} gives
\begin{align}
\mu_s(V_n \setminus K_n) & \leq \mu_s^{(m)}(V_n \setminus K_n) + \left( \mu_s^{(m)} - \mu_s^{(m-1)}  \right) \left( \frac{1-\alpha_n}{\alpha_n}  \right) e_{V_n \setminus K_n} \\
& = \frac{1}{ \alpha_n } e_s^{\trans} (1-\alpha_n)^m \tilde{P}^m e_{V_n \setminus K_n} + \mu_s^{(m-1)}(V_n \setminus K_n), \label{eqMuStoMuSmFinal}
\end{align}
where we have also used $\mu_s^{(m)} \geq \mu_s^{(m-1)}$ (componentwise). Finally, \eqref{eqL1BoundNoM} and \eqref{eqMuStoMuSmFinal} imply
\begin{equation}
\left\| \pi_s - \left( \alpha_n e_s^{\trans} + \sum_{k \in K} \beta_s(k) \pi_k \right) \right\|_1 \leq \alpha_n \left( \mu_s^{(m-1)}(V_n \setminus K_n) - 1 \right) + e_s^{\trans} (1-\alpha_n)^m \tilde{P}^m e_{V_n \setminus K_n} ,
\end{equation}
which is what we set out to prove.

\subsection{Proof of Lemma \ref{lemMuCoupling}} \label{secProofLemMuCoupling}
 
We will use Algorithm \ref{algSimulConstruction} in Appendix \ref{secAppSimultaneous}, which simultaneously constructs a graph and a tree. We will let $H_n$ and $\hat{H}_n$ denote this graph and this tree, respectively. From $H_n$, we define
\begin{equation}\label{eqNuMDefn}
\nu_s^{(m)} = e_s^{\trans} \sum_{j=0}^m (1-\alpha_n)^j  \tilde{Q}^j ,
\end{equation}
where $\tilde{Q}(i,j) = U_i Q(i,j)$ and $Q$ is the adjacency matrix of $H_n$, normalized to be row stochastic. Note this is simply \eqref{eqMuMDefn}, i.e.\ the definition as $\mu_s^{(m)}$, but computed on $H_n$ (while $\mu_s^{(m)}$ is computed on $G_n$). Similarly, using $\hat{H}_n$, recursively define
\begin{equation}\label{eqNuHatRecursion}
\hat{\nu}_{\phi}(\phi) = 1, \quad \hat{\nu}_{\phi}( (\i,j) ) = \hat{\nu}_{\phi}(\i) \frac{(1-\alpha_n) U_{\i}}{D_{\i}}, (\i,j) \in \hat{A}_l, l > 0 ,
\end{equation}
which is \eqref{eqMuHatRecursion} but computed on $\hat{H}_n$ instead of $\hat{G}_n$. With this notation in place, we will show
\begin{gather}
\mu_s^{(m)}(V_n \setminus K_n) | \{ \tau_G > m, U_s = 1 \} \stackrel{\mathcal{D}}{=} \nu_s^{(m)}(V_n \setminus K_n) | \{ \tau_S > m \}, \label{eqGraphGraphCoupling}  \\
\nu_s^{(m)}(V_n \setminus K_n) = \sum_{j=0}^m \sum_{\i \in \hat{A}_j} U_{\i} \hat{\nu}_{\phi}(\i) \textrm{ when $\tau_S > m$} , \label{eqGraphTreeCoupling} \\
 \sum_{j=0}^m \sum_{\i \in \hat{A}_j} U_{\i} \hat{\nu}_{\phi}(\i) | \{ \tau_S > m \} \stackrel{\mathcal{D}}{=} \sum_{j=0}^m \sum_{\i \in \hat{A}_j} U_{\i} \hat{\mu}_{\phi}(\i) , \label{eqTreeTreeCoupling}
\end{gather}
which, taken together, establish the lemma. (We remind the reader that $\tau_G$ and $\tau_S$, respectively, denote the first iteration at which certain events occur in Algorithm \ref{algGraphConstruction} and Algorithm \ref{algSimulConstruction}, respectively. Specifically, these events are the following: an instub belonging to $v$ with label $g(v) \in \{C,D\}$ is sampled for pairing to an oustub of $v'$ with label $g(v') = D$, or an instub $e$ with label $g(e) = 0$ is sampled for pairing with \textit{any} outstub.)

We begin with \eqref{eqGraphGraphCoupling}. First, observe that by definition $\mu_s^{(m)}(V_n \setminus K_n)$ and $\nu_s^{(m)}(V_n \setminus K_n)$ depend only the $m$-step neighborhood out of $s$ (i.e.\ the subgraph with nodes $\cup_{j=0}^m A_j$) in $G_n$ and $H_n$, respectively. When $\tau_G > m, U_s = 1$ in Algorithm \ref{algGraphConstruction} and $\tau_S > m$ in Algorithm \ref{algSimulConstruction}, these neighborhoods are constructed by the same procedure. Thus, \eqref{eqGraphGraphCoupling} follows.

We next consider \eqref{eqTreeTreeCoupling}, which holds by a similar argument. Specifically, the left and right sides of \eqref{eqTreeTreeCoupling} depend on the first $m$ generations of $\hat{G}_n$ and $\hat{H}_n$, respectively. In Algorithm \ref{algTreeConstruction}, these first $m$ generations of $\hat{G}_n$ are constructed as follows: the root node $\phi$ has attributes $(N_{\phi}, D_{\phi}) \sim f_n^*$ and $U_{\phi} = 1$, non-root nodes $\i$ have attributes $(N_{\i},D_{\i},U_{\i}) \sim f_n$, and $D_{\i}$ offspring are born to $\i$ if and only if $U_{\i} = 1$. In Algorithm \ref{algSimulConstruction}, the root node in $\hat{H}_n$ also has attributes has attributes $(N_{\phi}, D_{\phi}) \sim f_n^*$ and $U_{\phi} = 1$; furthermore, with $\tau_S > m$, non-root nodes $\i$ have attributes $(N_{\i},D_{\i},U_{\i}) \sim f_n$ and $D_{\i}$ offspring are born for either value of $U_{\i}$. Hence, when $\tau_S > m$, modifying the construction of the first $m$ generations of $\hat{H}_n$ such that offspring are born only when $U_{\i} = 1$ yields the construction of the first $m$ generations of $\hat{G}_n$. But, by \eqref{eqNuHatRecursion}, the left side of \eqref{eqTreeTreeCoupling} remains unchanged when this modification occurs. Therefore, \eqref{eqTreeTreeCoupling} follows.

It only remains to prove \eqref{eqGraphTreeCoupling}. For this, we begin with two claims. These claims use the mapping $\Phi$ from graph nodes to tree nodes defined in Algorithm \ref{algSimulConstruction} in Appendix \ref{secAppSimultaneous}. Claim \ref{clmCouplingCopies} states that tree nodes that do not map back to graph nodes do not contribute to the right side of \eqref{eqGraphTreeCoupling}. Claim \ref{clmCouplingNuIso} states that a tree node that does map back to a graph node contributes to the right side of \eqref{eqGraphTreeCoupling} the same value that the corresponding graph node contributes to the left side of \eqref{eqGraphTreeCoupling}. Taken together, these claims will allow us to prove the lemma.

\begin{clm} \label{clmCouplingCopies}
If $\tau_S > m$, $\i \in \hat{A}_j$ for some $j \in \{0,1,\ldots,m\}$, and $\Phi^{-1}(\i) = \emptyset$, then $U_{\i} \hat{\nu}_{\phi}(\i) = 0$.
\end{clm}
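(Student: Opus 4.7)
My strategy is to reduce the claim to a purely structural statement about the root-to-$\i$ path in $\hat{H}_n$, and then prove that statement by induction on the generation index $j$, exploiting the lockstep coupling between $H_n$ and $\hat{H}_n$ built into Algorithm~\ref{algSimulConstruction}.

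Unrolling the recursion \eqref{eqNuHatRecursion} along the unique path $\phi = \i^{(0)}, \i^{(1)}, \ldots, \i^{(j)} = \i$ from the root to $\i$ gives
\begin{equation}
\hat{\nu}_\phi(\i) = \prod_{l=0}^{j-1} \frac{(1-\alpha_n)\, U_{\i^{(l)}}}{D_{\i^{(l)}}} .
\end{equation}
Consequently $U_{\i}\hat{\nu}_\phi(\i) \neq 0$ if and only if $U_{\i^{(l)}} = 1$ for every $l \in \{0,1,\ldots,j\}$. It therefore suffices to establish the contrapositive of the claim: whenever the entire root-to-$\i$ path carries label $U = 1$, one has $\Phi^{-1}(\i) \neq \emptyset$.

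I would prove this contrapositive by induction on $j$. The base case $j = 0$ is immediate, since Algorithm~\ref{algSimulConstruction} creates $\phi$ in correspondence with the graph seed $s$ (which satisfies $U_s = 1$ by assumption), and this correspondence is recorded as $s \in \Phi^{-1}(\phi)$. For the inductive step, suppose the statement holds through generation $j-1$, and let $\i \in \hat{A}_j$ have $U_{\i^{(l)}} = 1$ for all $l \leq j$. The parent $\i' = \i^{(j-1)}$ lies in $\hat{A}_{j-1}$ and itself has all ancestors (including itself) labelled $U = 1$, so by the inductive hypothesis there exists $v' \in \Phi^{-1}(\i')$. Since $\tau_S > m \geq j$, no coupling-breaking event has occurred through iteration $j$; this implies that when the outstub of $v'$ whose pairing gives rise to the tree child $\i$ is processed, it is paired in $H_n$ with an instub of a fresh node $v$, and Algorithm~\ref{algSimulConstruction} records $v \in \Phi^{-1}(\i)$, yielding $\Phi^{-1}(\i) \neq \emptyset$.

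The main obstacle is verifying operationally that, while $\tau_S > m$ holds, Algorithm~\ref{algSimulConstruction} really does maintain a bijective correspondence between fresh label-$D$ graph nodes and fresh tree offspring along $U = 1$ paths, so that the $(j-1) \to j$ step above is legitimate. This reduces to checking that the two events absorbed into $\tau_S$ (resampling an already-paired instub, and a label-$D$ outstub being paired with a label-$C$ or label-$D$ instub) are precisely what could otherwise destroy this correspondence by either preventing the simultaneous creation of a new graph node or identifying two distinct tree positions with the same graph node. With the simultaneous construction explicitly written down, this check is routine bookkeeping.
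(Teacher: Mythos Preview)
Your overall strategy --- unroll $\hat{\nu}_\phi$, take the contrapositive, and induct on $j$ --- is sound and is essentially a forward-induction repackaging of the paper's argument, which instead isolates the last ancestor $\i|l^*$ with $\Phi^{-1}(\i|l^*) \neq \emptyset$ and derives a contradiction at the edge from $\i|l^*$ to $\i|(l^*+1)$. Both arguments ultimately come down to the same label check at the critical edge.

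However, your inductive step contains a real gap. You assert that $\tau_S > m \geq j$ ``implies that when the outstub of $v'$ whose pairing gives rise to the tree child $\i$ is processed, it is paired in $H_n$ with an instub of a fresh node $v$.'' This is false as stated. The $\tau_S$ trigger in Algorithm~\ref{algSimulConstruction} fires only when $g(e) = 0$, or when $g(e) = 1$, $g(v') = D$, and $g(v) \in \{C,D\}$. Thus $\tau_S > j$ by itself does \emph{not} exclude $g(v) = B$: the outstub of $v'$ can be paired with an unpaired instub of a node already in the graph having $U_v = 0$, no $\tau_S$ event fires, $v$ is not fresh, and $\Phi^{-1}(\i)$ remains empty. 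What actually rules this out is your standing hypothesis $U_{\i} = 1$, since the tree child inherits $U_{(\i',j)} = U_v$. You also need to verify that $g(v') = D$ (not merely that $v' = \Phi^{-1}(\i')$ exists and $U_{v'}=1$); otherwise the second $\tau_S$ clause is vacuous and $g(v) \in \{C,D\}$ is not excluded either. That requires observing that the graph nodes $\Phi^{-1}(\i^{(0)}), \ldots, \Phi^{-1}(\i^{(j-1)})$ form an actual path in $H_n$ with all $U$-values equal to $1$, hence $g(v') = D$. These two checks --- ruling out $g(v) = B$ via $U_{\i} = 1$ and ruling out $g(v') = C$ via the path structure --- are exactly the case analysis the paper carries out explicitly, and contrary to your last paragraph they are \emph{not} subsumed by the events absorbed into $\tau_S$. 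Once you add them, your induction goes through.
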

\begin{proof}
We begin with some notation. We denote $\i$ by $\i = (i_1,i_2,\ldots,i_j)$, and for $l \leq j$, we let $\i | l = (i_1,i_2,\ldots,i_l)$, with $\i | 0 = \phi$ by convention. Define $l^* = \max \{ l \in \{0,1,\ldots,j\} : \Phi^{-1} ( \i | l ) \neq \emptyset \}$. Note the set over which the maximum is taken is nonempty, since $\Phi^{-1} ( \i | 0 ) = \Phi^{-1} ( \phi ) = s$; furthermore, since $\Phi^{-1}(\i | j) = \Phi^{-1}(\i) = \emptyset$ by assumption, $l^* < j$. In words, $\i | l^*$ is the youngest ancestor of $\i$ that maps to a node in the tree; we let $v' = \Phi^{-1}(\i | l^* )$ denote this node.

We observe $\Phi^{-1}(\i|l) \neq \emptyset\ \forall\ l \in \{0,1,\ldots,l^*-1\}$. To see this, suppose instead that $\Phi^{-1}(\i|l) = \emptyset$ for some such $l$. Then, from the second inner for loop in Algorithm \ref{algSimulConstruction}, the offspring $\i | (l+1)$ was born without adding a node to the graph, which implies $\Phi^{-1}(\i|(l+1)) = \emptyset$. Repeating this argument eventually gives $\Phi^{-1}(\i|l^*) = \emptyset$, a contradiction.

Now suppose $U_{\i} \hat{\nu}_{\phi}(\i) > 0$; we seek a contradiction. First, by \eqref{eqNuHatRecursion}, $U_{\i} \hat{\nu}_{\phi}(\i) > 0$ implies
\begin{equation}\label{eqCouplingCopiesRecusionAssumption}
U_{\i|0} = U_{\i|1} = \cdots = U_{\i} = 1
\end{equation}
which further implies $U_{\Phi^{-1}(\i|l)} = U_{\i|l} = 1\ \forall\ l \in \{0,1,\ldots,l^*\}$, i.e.\ the graph $H_n$ contains a path of length $l^*$ from $s = \Phi^{-1}(\i|0)$ to $v' = \Phi^{-1}(\i|l^*)$ that avoids $K_n$.

Next, note that $\Phi^{-1}(\i|l^*) \neq \emptyset$, $\Phi^{-1}(\i | (l^*+1 )) = \emptyset$ implies that, during the $(l^*+1)$-th iteration of Algorithm \ref{algSimulConstruction}, an outstub of $v'$ was paired with an instub of some $v \in V_n$ that already belonged to the graph, and so a copy of $v$ (namely, $\i | (l^*+1)$) was added to the tree. Consider the following cases for the labels of these nodes at the moment of pairing:
\begin{itemize}
\item If $g(v') = A$ or $g(v) = A$, we have a contradiction, since by assumption, both $v'$ and $v$ already belonged to the graph at the moment of pairing.
\item If $g(v') = B$ or $g(v) = B$, $U_{\i | l^*} = U_{v'} = 0$ or $U_{\i | (l^*+1)} = U_v = 0$, contradicting \eqref{eqCouplingCopiesRecusionAssumption}.
\item If $g(v') = D$, $g(v) \in \{C,D\}$, then $\tau_S = l^* \leq m$ in Algorithm \ref{algSimulConstruction}, a contradiction.
\end{itemize}
The only remaining case is $g(v') = C$ at the moment of pairing. But this contradicts the earlier statement that the graph contains a path from $s$ to $v'$ of length $l^*$ that avoids $K_n$ (since this path was present at start of the $(l^*+1)$-th iteration, it was present at the moment of pairing). 
\end{proof}

\begin{clm} \label{clmCouplingNuIso}
If $\tau_S> m$, then $U_v \nu_s^{(m)}(v) = U_{\Phi(v)} \hat{\nu}_{\phi}(\Phi(v))\ \forall\ v \in \cup_{j=0}^m A_j$. 
\end{clm}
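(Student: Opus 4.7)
The plan is to proceed by induction on the level $j \in \{0,1,\ldots,m\}$ for which $v \in A_j$. The base case $j=0$ is immediate: $v = s$ and $\Phi(s) = \phi$, the hypothesis $\tau_S > m$ rules out any cycle from $s$ back to itself through the label-$D$ subgraph in at most $m$ steps, so $\nu_s^{(m)}(s) = 1$; on the tree side, $\hat{\nu}_{\phi}(\phi) = 1$ by definition, and $U_s = U_\phi = 1$ by the root attributes assigned in Algorithm~\ref{algTreeConstruction} and the coupling.

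For the inductive step with $v \in A_{j+1}$, $j+1 \leq m$, I split on $U_v$. If $U_v = 0$ both sides vanish, since the simultaneous construction assigns $\Phi(v)$ the same attributes $(N_v, D_v, U_v)$ as $v$, giving $U_{\Phi(v)} = 0$. If $U_v = 1$, the crucial structural input is that $\tau_S > m$ forces the label-$D$ subgraph on $\cup_{l \leq m} A_l$ to be treelike: no multi-edges between label-$D$ nodes and no ``back edges'' from a label-$D$ node into a label-$C$ or label-$D$ node already present. Reading this off Algorithm~\ref{algSimulConstruction}, one sees that every outstub of a label-$D$ node in the first $m$ iterations is paired with an instub of a fresh (previously label-$A$) node, and the resampling clause is never triggered. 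Consequently $v$ has a unique predecessor $v' \in A_j$ in the label-$D$ subgraph joined to $v$ by a single edge, and the bookkeeping of the simultaneous construction gives $\Phi(v) = (\Phi(v'), k)$, where $k$ indexes the outstub of $v'$ that was paired with an instub of $v$.

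With this structural information the two recursions line up. On the graph side the treelike structure collapses the sum defining $\nu_s^{(m)}(v)$ to a single path of length $j+1$ from $s$ to $v$, yielding $\nu_s^{(m)}(v) = \tfrac{1-\alpha_n}{D_{v'}} \, \nu_s^{(m)}(v')$ (no contribution from shorter or longer path lengths, and no multi-edge factors). On the tree side, \eqref{eqNuHatRecursion} gives $\hat{\nu}_{\phi}(\Phi(v)) = \tfrac{(1-\alpha_n)U_{\Phi(v')}}{D_{\Phi(v')}} \, \hat{\nu}_{\phi}(\Phi(v'))$, and the coupling supplies $U_{\Phi(v')} = U_{v'} = 1$ and $D_{\Phi(v')} = D_{v'}$. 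Combining these with the inductive hypothesis $U_{v'} \nu_s^{(m)}(v') = U_{\Phi(v')} \hat{\nu}_{\phi}(\Phi(v'))$ and with $U_v = U_{\Phi(v)} = 1$ yields the desired equality. The main obstacle is not the induction or the calculation, but rather the structural claim about the label-$D$ subgraph; once that is justified directly from the definition of $\tau_S$ in Algorithm~\ref{algSimulConstruction}, the remainder is a routine induction driven by the two parallel recursions.
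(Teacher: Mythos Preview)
Your inductive organization (on the level $j$ of $v$, with $m$ held fixed) differs from the paper's, which inducts on $m$ and at each stage splits according to whether $v \in A_j$ for some $j < m$ or $v \in A_m$. Both routes rest on the same structural consequence of $\tau_S > m$, and once the gap below is closed yours is a legitimate alternative. The paper's version is slightly cleaner in one respect: in its $v \in A_m$ case it works with the \emph{discovery} parent $v'$ (the first node whose outstub was paired with an instub of $v$) and applies the inductive hypothesis to $v'$ regardless of $v'$'s label, which lets it avoid the case distinction you are missing.

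The gap is this. In your inductive step you pass from $U_v = 1$ directly to ``$v$ has a unique predecessor $v' \in A_j$ in the label-$D$ subgraph, and $\Phi(v) = (\Phi(v'),k)$ with that $v'$.'' This implicitly assumes $g(v) = D$. But $U_v = 1$ also allows $g(v) = C$, which occurs exactly when the discovery parent of $v$ carries label $B$ or $C$. In that case $v$ has \emph{no} incoming edge from the label-$D$ subgraph, and $\Phi(v)$'s tree parent is not the image of a label-$D$ graph node, so your recursion cannot be invoked. You must argue separately that both sides vanish: $\nu_s^{(m)}(v) = 0$ because any $\tilde{Q}$-path from $s$ to $v$ would witness $g(v) = D$; and $\hat{\nu}_\phi(\Phi(v)) = 0$ via your own inductive hypothesis applied to the discovery parent $v'$ (either $U_{v'} = 0$, or $U_{v'} = 1$ with $g(v') = C$, whence $\nu_s^{(m)}(v') = 0$, hence $\hat{\nu}_\phi(\Phi(v')) = 0$ by the hypothesis, hence $\hat{\nu}_\phi(\Phi(v)) = 0$ by \eqref{eqNuHatRecursion}). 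A smaller imprecision: under $\tau_S > m$ an outstub of a label-$D$ node may also be paired with an instub of a node already carrying label $B$, not only with a fresh label-$A$ node; only targets with label $C$ or $D$ are excluded.
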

\begin{proof}
We proceed by induction. For the base of induction, we note $A_0 = \{ s \}$, so the statement only needs to be verified for $v = s$. But this is immediate, since $\Phi(s) = \phi$ and $U_s = U_{\phi} = 1$ in Algorithm \ref{algSimulConstruction}, and since $\nu_s^{(0)}(s) = \hat{\nu}_{\phi}(\phi) = 1$ by \eqref{eqNuMDefn} and \eqref{eqNuHatRecursion}.

Now assume $\tau_S > m$ and let $v \in \cup_{j=0}^m A_j$. We consider two cases. 

First, if $v \in A_j$ for some $j \in \{0,1,\ldots,m-1\}$, we can use the inductive hypothesis to write
\begin{equation}
U_v \nu_s^{(m)} ( v ) = U_v \left( \nu_s^{(m)}(v) - \nu_s^{(m-1)}(v) \right) + U_v \nu_s^{(m-1)}(v) = U_v e_s^{\trans} (1-\alpha_n)^m \tilde{Q}^m e_v + U_{\Phi(v)} \hat{\nu}_{\phi}(\Phi(v)) 
\end{equation}
and so it suffices to show $U_v e_s^{\trans} \tilde{Q}^m e_v = 0$. Clearly, this holds when $U_v = 0$. If instead $U_v = 1$, suppose $e_s^{\trans} \tilde{Q}^m e_v > 0$. First, note that $U_v = 1$ and $v \in A_j, j < m$ imply $g(v) \in \{C,D\}$ at the start of the $m$-th iteration of Algorithm \ref{algSimulConstruction}. Furthermore, $e_s^{\trans} \tilde{Q}^m e_v > 0$ implies there exists a path of length $m$ from $s$ to $v$, with every node $w$ along the path satisfying $U_w = 1$. Let $v'$ be the node immediately preceding $v$ on this path, so that an outstub of $v'$ was paired with instub of $v$ during the $m$-th iteration. Then we have $e_s^{\trans} \tilde{Q}^{m-1} e_{v'} > 0$, which implies $g(v') = D$ at the start of the $m$-th iteration of Algorithm \ref{algSimulConstruction}. But $g(v') = D, g(v) \in \{C,D\}$ contradicts $\tau_S > m$ in Algorithm \ref{algSimulConstruction}. Therefore, we must have $e_s^{\trans} \tilde{Q}^m e_v = 0$.

Now suppose $v \in A_m$. Then $U_v \nu_s^{(m-1)}(v) = 0$ (else, $v$ is at most $m-1$ steps from $s$, contradicting $v \in A_m$), so we aim to show $U_ve_s^{\trans} (1-\alpha_n)^m \tilde{Q}^m = U_{\Phi(v)} \hat{\nu}_{\phi}(\Phi(v))$. Since $U_v = U_{\Phi(v)}$ in Algorithm \ref{algSimulConstruction}, this is trivial when $U_v = 0$, and when $U_v = 1$, it suffices to show
\begin{equation}\label{eqCouplingFinalCase}
e_s^{\trans} (1-\alpha_n)^m \tilde{Q}^m = \hat{\nu}_{\phi}(\Phi(v)) .
\end{equation}
Towards this end, let $v' \in \cup_{j=0}^{m-1} A_j$ be the first node whose outstub was paired with an instub of $v$ during the $m$-th iteration (which occurs by assumption $v \in A_m$); by the inductive hypothesis,
\begin{equation}\label{eqCouplingIndHypFinalCase}
U_{v'} \nu_s^{(m-1)} ( v' ) = U_{\Phi(v')} \hat{\nu}_{\phi} ( \Phi(v') ) .
\end{equation}
Now since $D_{v'} = D_{\Phi(v')}$, and since $\Phi(v)$ is an offspring of $\Phi(v')$, we can use \eqref{eqNuHatRecursion} to obtain
\begin{equation}\label{eqCouplingLowerBound}
\frac{(1-\alpha_n) U_{v'} \nu_s^{(m-1)} ( v' )}{ D_{v'} } = \frac{(1-\alpha_n) U_{\Phi(v')} \hat{\nu}_{\phi}(\Phi(v')) }{ D_{\Phi(v')} } = \hat{\nu}_{\phi}(\Phi(v) ) .
\end{equation}
Next, observe the left side of \eqref{eqCouplingLowerBound} is at most $e_s^{\trans} (1-\alpha_n)^m \tilde{Q}^m$ by \eqref{eqNuMDefn}, so we must show this inequality is actually an equality. Suppose instead that the inequality is strict. Then, later in the $m$-th iteration, we must have paired an outstub of some $v''$ s.t.\ $g(v'') = D$ with another instub of $v$. But $g(v) \in \{C,D\}$ after the $v'$ outstub was paired with the $v$ instub, and $g(v'') = D, g(v) \in \{C,D\}$ contradicts $\tau_S > m$ in Algorithm \ref{algSimulConstruction}.
\end{proof}

Having established these claims, we turn to the proof of \eqref{eqGraphTreeCoupling}. Assume $\tau_S > m$. Observe that by Lines \ref{algSimulAddToAm}-\ref{algSimulAddToHatAm} of Algorithm \ref{algSimulConstruction}, $\{ \Phi(v) : v \in A_j \} \subset \hat{A}_j$, so the right side of \eqref{eqGraphTreeCoupling} satisfies
\begin{equation}
\sum_{j=0}^m \sum_{\i \in \hat{A}_j} U_{\i} \hat{\nu}_{\phi}(\i) = \sum_{j=0}^m \left( \sum_{\i \in \hat{A}_j : \Phi^{-1}(\i) = \emptyset} U_{\i} \hat{\nu}_{\phi}(\i) + \sum_{v \in A_j} U_{\Phi(v)} \hat{\nu}_{\phi}(\Phi(v)) \right) .
\end{equation}
Now since $U_{\i} \hat{\nu}_{\phi}(\i) \geq 0$ by definition, Claim \ref{clmCouplingCopies} implies
\begin{equation}
\sum_{\i \in \hat{A}_j : \Phi^{-1}(\i) = \emptyset} U_{\i} \hat{\nu}_{\phi}(\i)  = 0\ \forall\ j \in \{0,1,\ldots,m\} .
\end{equation}
Furthermore, since $\nu_s^{(m)}(v) = 0\ \forall\ v \notin \cup_{j=0}^m A_j$ (which holds by \eqref{eqNuMDefn}), Claim \ref{clmCouplingNuIso} implies 
\begin{equation}
\nu_s^{(m)} ( V_n \setminus K_n ) = \sum_{j=0}^m \sum_{v \in A_j} U_v \nu_s^{(m)}(v) = \sum_{j=0}^m \sum_{v \in A_j} U_{\Phi(v)} \hat{\nu}_{\phi}(\Phi(v)) .
\end{equation}
Finally, combining the previous three equations yields \eqref{eqGraphTreeCoupling}.

\subsection{Proof of Lemma \ref{lemCouplingFailure}} \label{secProofLemCouplingFailure}

We begin with some initial definitions that will be used throughout the proof. Specifically, let $\zeta_n = \E_n [ D_{\i} ]$ and $\lambda_n = \E_n [ N_{\i} U_{\i} ]$, where $(N_{\i}, D_{\i}, U_{\i}) \sim f_n$ are the attributes for a non-root node in the tree. Then, conditioned on $\Omega_n$,
\begin{gather}
\zeta_n = \frac{1}{L_n} \sum_{h=1}^n N_h D_h = \frac{ \eta_2 ( 1 + O(n^{-\gamma}) ) }{ \eta_1 ( 1 + O(n^{-\gamma}) ) } = \zeta ( 1 + O(n^{-\gamma}) ) , \\
\lambda_n = \frac{1}{L_n} \sum_{h=1}^n N_h^2 U_h = \frac{ \eta_3 ( 1 + O(n^{-\gamma}) ) }{ \eta_1 ( 1 + O(n^{-\gamma}) ) } = \lambda ( 1 + O(n^{-\gamma}) ) .
\end{gather}

Similarly, let $\zeta_n^* = \E_n [ D_{\phi} ]$ and $\lambda_n^* = \E_n [ N_{\phi} ]$, where $(N_{\phi}, D_{\phi}) \sim f_n^*$ are the attributes for the root node of the tree, so that conditioned on $\Omega_n$,
\begin{gather}
\zeta_n^* = \frac{1}{\sum_{h=1}^n U_h} \sum_{h=1}^n D_h U_h = \zeta^* ( 1 + O ( n^{-\gamma} ) ) , \quad \lambda_n^* = \frac{1}{\sum_{h=1}^n U_h} \sum_{h=1}^n N_h U_h = \lambda^* ( 1 + O ( n^{-\gamma} ) )  .
\end{gather}

We now explain our approach for bounding $\P [ \tau_G \leq m | U_s = 1 ]$. First, observe that, conditioned on $U_s = 1$, the graphs in Algorithms \ref{algGraphConstruction} and \ref{algSimulConstruction} (the graph and simultaneous constructions, respectively) are constructed by the same procedure until $\tau_G = m$ or $\tau_S = m$; further, $\tau_G$ is assigned in Algorithm \ref{algGraphConstruction} by the same procedure $\tau_S$ is assigned in Algorithm \ref{algSimulConstruction}. This implies 
\begin{equation}
\P [ \tau_G \leq m | U_s = 1 ] = \P [ \tau_S \leq m ] .
\end{equation}
Next, for $i \in \{0,1\}$, define 
\begin{equation}
E_i = \{ g(e) = i \textrm{ at the moment $\tau_S$ is assigned in Algorithm \ref{algSimulConstruction}} \} .
\end{equation}
In other words, $E_0$ is the event that the coupling breaks because a paired instub was sampled, while $E_1$ is the event that the coupling breaks because an unpaired instub that forms an edge $v' \rightarrow v$ s.t.\ $g(v') = D, g(v) \in \{C,D\}$ was sampled. Furthermore, for $l \in \{1,2,\ldots,m\}$, define
\begin{equation}\label{eqZhatLdefn}
\hat{Z}_l = \sum_{\i \in \hat{A}_{l-1}} D_{\i} ,
\end{equation}
which is the total number of outstubs in generation $l-1$ of the tree; note $\hat{Z}_l = | \hat{A}_l |$. Finally, let $\{ y_n : n \in \N \}$ be a sequence tending to infinity (which we will choose later), and let
\begin{equation}
F_m = \left\{  \max_{1 \leq l \leq m} \frac{\hat{Z}_l}{\zeta^{l-1}} \leq \zeta^* y_n \right\} .
\end{equation}
We can then use the previous four equations to write
\begin{equation}\label{eqCouplingFailureApproach}
\P [ \tau_G \leq m | U_s = 1 ]  \leq O ( n^{-\delta} ) + \P [ F_m^C | \Omega_n ] + \sum_{i=0}^1 \sum_{l=1}^m \P [ \tau_S = l  , E_i , F_m | \Omega_n ] .
\end{equation}
where we have also used $\P[ \Omega_n^C ] = O(n^{-\delta})$ by Assumption \ref{assDegSeq}. In the remainder of this appendix, we bound each term in \eqref{eqCouplingFailureApproach}.

\subsubsection{$\P [ F_m^C | \Omega_n ]$ bound}

First, note that $\{ D_{\i} \}_{\i \in \hat{A}_{l-1} }$ are identically distributed and independent of $\hat{Z}_{l-1} = | \hat{A}_{l-1} |$. Hence,
\begin{equation}
\E_n [ \hat{Z}_l ] = \E_n [ \E_n [ \hat{Z}_l | \hat{Z}_{l-1} ] ] = \E_n [ \hat{Z}_{l-1} \E_n [ D_{\i} | \hat{Z}_{l-1} ] ] = \E_n [ \hat{Z}_{l-1} ] \E_n [ D_{\i} ] = \E_n [ \hat{Z}_{l-1} ] \zeta_n ,
\end{equation}
and so applying recursively gives
\begin{equation}
\E_n [ \hat{Z}_l ] = \E_n [ \hat{Z}_1 ] \zeta_n^{l-1} = \E_n [  D_{\phi} ] \zeta_n^{l-1} = \zeta_n^* \zeta_n^{l-1} .
\end{equation}
Now let $X_l = \hat{Z}_l / ( \zeta_n^* \zeta_n^{l-1} )$, so that $\E_n [ X_l ] = 1$. Furthermore, define
\begin{equation}
\mathcal{G}_l = \sigma(  \{ N_h, D_h, U_h : 1 \leq h \leq n \}\cup \{ D_{\i} : \i \in \hat{A}_j , 0 \leq j < l \} ). 
\end{equation}
Then for $j > 0$,
\begin{align}
\E [ X_{l+j} | \mathcal{G}_l ] & = \frac{\E [ \hat{Z}_{l+j} | \mathcal{G}_l ] }{ \zeta_n^* \zeta_n^{l+j-1} } = \frac{\E [ \hat{Z}_{l+j-1} | \mathcal{G}_l ] \E [  D_{\i} | \mathcal{G}_l ] }{ \zeta_n^* \zeta_n^{l+j-1} } = \frac{\E [ \hat{Z}_{l+j-1} | \mathcal{G}_l ]  }{ \zeta_n^* \zeta_n^{l+j-2} } = \E [ X_{l+j-1} | \mathcal{G}_l ] \\
& = \cdots = \E[ X_l | \mathcal{G}_l ] = X_l ,
\end{align}
so $\{ X_l : l \in \N \}$ is a martingale with respect to $\{ \mathcal{G}_l : l \in \N \}$. This implies, by Doob's inequality,
\begin{equation}
\P_n \left[  \max_{1 \leq l \leq m} X_l > \frac{  y_n}{ ( 1 + O (n^{-\gamma}))^{m} }  \right] \leq \frac{ ( 1 + O (n^{-\gamma}))^{m}}{  y_n } ,
\end{equation}
where we have used $\E_n [ X_m ] = 1$. Using this bound, we can obtain
\begin{align}
\P [ F_m^C | \Omega_n ] & = \P \left[  \max_{1 \leq l \leq m} \frac{\hat{Z}_l}{\zeta^{l-1}} > \zeta^* y_n  \middle| \Omega_n \right] = \P \left[ \max_{1 \leq l \leq m} \frac{X_l \zeta_n^* \zeta_n^{l-1}}{\zeta^* \zeta^{l-1} } > y_n \middle|  \Omega_n \right] \\
& = \P \left[ \max_{1 \leq l \leq m} X_l  ( 1 + O ( n^{-\gamma} ) )^l >  y_n \middle|  \Omega_n \right]  \leq \P \left[ \max_{1 \leq l \leq m} X_l  > \frac{y_n}{ ( 1 + O ( n^{-\gamma} ) )^{m}} \middle|  \Omega_n \right] \\
& = \frac{1}{\P[\Omega_n]} \E \left[ 1(\Omega_n) \P_n \left[  \max_{1 \leq l \leq m} X_l  > \frac{y_n}{( 1 + O ( n^{-\gamma} ) )^{m}}    \right] \right] \\
& \leq  \frac{1}{\P[\Omega_n]} \E \left[ 1(\Omega_n) \frac{  ( 1 + O (n^{-\gamma}))^{m}}{  y_n } \right] = \frac{ ( 1 + O (n^{-\gamma}))^{m}}{ y_n } = O ( y_n^{-1} ),
\end{align}
where in the third line we used the tower property and the fact that $1(\Omega_n)$ is fixed given the degree sequence, and where the final equality holds by the assumption $m = O(n^{\gamma})$ in the statement of the lemma, since then $( 1 + O (n^{-\gamma}))^m = ( 1 + \frac{O(1)}{m} )^m = e^{O(1)} = O(1)$. 

\subsubsection{$\P [ \tau_S = l  , E_0 , F_m | \Omega_n ]$ bound} \label{appCouplingFailureE0Fm}

Let $l \in \{1,2,\ldots,m\}$. We observe
\begin{equation}\label{eqE0proofTower}
\P [ \tau_S = l , E_0, F_m | \Omega_n ] =   \E \left[ 1 \left( F_m  \right) \P_n \left[  \tau_S = l , E_0  \middle| \{ \hat{Z}_{j} \}_{j=1}^{m+1}  \right] \middle| \Omega_n \right]
\end{equation}
which holds because $1(\Omega_n)$ and $1(F_m)$ are fixed given the degree sequence and $\{ \hat{Z}_{j} \}_{j=1}^{m}$. Next, observe $\{ \tau_S = l , E_0 \}$ occurs if and only if, during the $l$-th iteration, we sample an instub that has already been paired while attempting to pair an outstub belonging to a node $v' \in A_{l-1}$. We aim to bound the probability of this event.

Consider any such outstub. Since we sample instubs uniformly from the set of all $L_n$ instubs, the probability of sampling a paired instub is the fraction of paired instubs at the moment we attempt to pair the outstub under consideration. This fraction is clearly bounded by the fraction of paired instubs at the end of the $l$-th iteration. Furthermore, since each time we pair an instub of $v \in V$ in the graph, we also add a node to the tree with the same attributes as $v$, the numerator of this fraction is further bounded by the number of nodes in the tree at the end of the $l$-th iteration, which by definition is
\begin{equation}\label{eqNumberOfNodesInTree}
\frac{1}{L_n} \sum_{j=1}^{l+1} \hat{Z}_{j} .
\end{equation}

Now consider the number of such outstubs. By definition, this is $\sum_{v' \in A_{l-1} } D_{v'}$. Furthermore, since each time we add a node to $A_{l-1}$ in the graph, we also add a node with the same attributes to $\hat{A}_{l-1}$ in the tree, we have the bound
\begin{equation}
\sum_{v' \in A_{l-1} } D_{v'} \leq \sum_{\i \in \hat{A}_{l-1}} D_{\i} \triangleq \hat{Z}_l .
\end{equation}

Combining these arguments, letting $\textrm{Bin}$ denote a binomial random variable, and using Markov's inequality, we can write
\begin{align}
\P_n \left[ \tau_S = l, E_0 \middle| \{ \hat{Z}_{j} \}_{j=1}^{m+1} \right] & \leq \P_n \left[ \textrm{Bin} \left( \hat{Z}_l , \frac{\sum_{j=1}^{l+1} \hat{Z}_{j}}{L_n} \right) \geq 1 \middle|  \{ \hat{Z}_{j} \}_{j=1}^{m+1} \right] \\
& \leq \E_n \left[ \textrm{Bin} \left( \hat{Z}_l , \frac{\sum_{j=1}^{l+1} \hat{Z}_{j}}{L_n} \right)  \middle|  \{ \hat{Z}_{j} \}_{j=1}^{m+1} \right] = \hat{Z}_l \frac{\sum_{j=1}^{l+1} \hat{Z}_{j}}{L_n} .  \label{eqE0proofTail}
\end{align}

Next, we recognize $1(F_m) \hat{Z}_l \leq \zeta^* \zeta^{l-1} y_n$ by definition of $F_m$, and we combine \eqref{eqE0proofTower} and \eqref{eqE0proofTail},
\begin{equation}
\P [ \tau_S = l , E_0, F_m | \Omega_n ] \leq \E \left[ 1(F_{m}) \hat{Z}_l \frac{\sum_{j=1}^{l+1} \hat{Z}_{j}}{L_n}  \middle| \Omega_n \right] \leq \zeta^* \zeta^{l-1} y_n \sum_{j=1}^{l+1} \E \left[ \frac{\hat{Z}_{j}}{L_n}  \middle| \Omega_n \right] .
\end{equation}

Furthermore, by definition of $\Omega_n$, we have
\begin{equation}\label{eqNumberOfNodesInTreeExp}
\E \left[ \frac{\hat{Z}_{j}}{L_n}  \middle| \Omega_n \right] = \E \left[ \frac{ \E_n [ \hat{Z}_{j} ] }{L_n}  \middle| \Omega_n \right] = \E \left[ \frac{ \zeta_n^* \zeta_n^{j-1} }{L_n}  \middle| \Omega_n \right] = \frac{\zeta^* \zeta^{j-1}}{ n \eta_1 } ( 1 + O(n^{-\gamma}) )^j = O \left( \frac{\zeta^{j-1} }{ n } \right) ,
\end{equation}
where $( 1 + O(n^{-\gamma}) )^j = O(1)$ again follows from $m = O(n^{-\gamma})$. We have therefore shown
\begin{equation}
\P [ \tau_S = l , E_0, F_m | \Omega_n ] = O \left( \frac{y_n}{n} \zeta^{l-1} \sum_{j=0}^l \zeta^j \right) .
\end{equation}

\subsubsection{$\P [ \tau_S = l  , E_1 , F_m | \Omega_n ]$ bound}

We will use the same approach to bound this term as we used to bound the previous one. Observe $\{ \tau_S = l, E_1 \}$ occurs if and only if, during the $l$-th iteration, we sample an instub belonging to $v$ s.t.\ $g(v) \in \{C,D\}$ while attempting to pair an outstub belonging to a node $v' \in A_{l-1}$ s.t.\ $g(v') = D$. The key step in the derivation will be bounding the number of such instubs and outstubs.

First, the number of such outstubs is clearly bounded the number of \textit{all} outstubs paired during the $l$-th iteration. As we argued previously, this is further bounded by $\hat{Z}_l$.

Next, for $j \in \{1,2,\ldots,m+1\}$, define $\hat{V}_j = \sum_{\i \in \hat{A}_{j-1} } N_{\i} U_{\i}$. Similar to the previous argument, the number of such instubs while pairing any such outstub is bounded by the number of instubs belonging to $U_v = 1$ nodes in the graph at the end of the $l$-th iteration. Since each time we add a node to the graph, we also add a node to the tree with the same attributes, the former quantity is bounded by the same quantity computed on the tree, i.e.\
\begin{equation}
\sum_{j=1}^{l+1} \sum_{\i \in \hat{A}_{l-1} } N_{\i} U_{\i} = \sum_{j=1}^{l+1} \hat{V}_j .
\end{equation}

Hence, as in the analysis of $\P [ \tau_S = l  , E_0 , F_m | \Omega_n ]$,
\begin{align}
\P [ \tau_S = l , E_1, F_m | \Omega_n ] & = \E \left[ 1(F_m) \P_n \left[  \tau_S = l , E_1  \middle| \{ \hat{Z}_j \}_{j=1}^m , \{ \hat{V}_j \}_{j=1}^{l+1}  \right] \middle| \Omega_n \right] \\
& \leq \E \left[ 1(F_m) \hat{Z}_{l} \frac{\sum_{j=1}^{l+1} \hat{V}_j}{L_n} \middle| \Omega_n \right] \leq \zeta^* \zeta^{l-1} y_n \sum_{j=1}^{l+1}  \E \left[ \frac{ \E_n [ \hat{V}_j ] }{ L_n } \middle| \Omega_n \right] .
\end{align}

Our final step is to compute $\E_n [ \hat{V}_j ]$. For $j > 1$, we have
\begin{equation}
\E_n [ \hat{V}_j ] = \E_n [ \hat{Z}_{j-1} ] \E_n [ N_{\i} U_{\i} ] = \zeta_n^* \zeta_n^{j-2} \lambda_n ,
\end{equation}
where the first inequality holds since $| \hat{A}_{j-1} | = \hat{Z}_{j-1}$ and since $\{ N_{\i} U_{\i} : \i \in \hat{A}_{l-1} \}$ are identically distributed and independent of $\hat{Z}_{j-1}$; the second inequality follows from previous derivations. Therefore,
\begin{equation}
\E \left[ \frac{ \E_n [ \hat{V}_j ] }{ L_n } \middle| \Omega_n \right] = \E \left[ \frac{\zeta_n^* \zeta_n^{j-2} \lambda_n}{L_n} \middle| \Omega_n \right] = \frac{ \zeta^* \zeta^{j-2} \lambda }{ n \eta_1 } ( 1 + O(n^{-\gamma}) )^j = O \left( \frac{ \zeta^{j-2} }{ n } \right) .
\end{equation}
For $j = 1$, since $\hat{A}_0 = \{ \phi \}$ with $U_{\phi} = 1$, we simply have $\E_n [ \hat{V}_1 ] = \E_n [ N_{\phi} ] = \lambda_n^*$, so
\begin{equation}
\E \left[ \frac{ \E_n [ \hat{V}_1 ] }{ L_n } \middle| \Omega_n \right] = \E \left[ \frac{\lambda_n^*}{L_n} \middle| \Omega_n \right] = \frac{ \lambda^* }{ n \eta_1 } ( 1 + O(n^{-\gamma}) ) = O \left( \frac{ 1 }{ n } \right) .
\end{equation}

Combining previous arguments, we obtain
\begin{equation}
\P [ \tau_S = l , E_1, F_m | \Omega_n ] = O \left( \frac{ y_n }{ n } \zeta^{l-1} \sum_{j=0}^{l-1} \zeta^j \right) .
\end{equation}

\subsubsection{Overall bound}

Combining the bounds from the previous sections, we obtain
\begin{equation}
\P [ \tau_G \leq m | U_s = 1 ] = O \left( n^{-\delta} + y_{n}^{-1} + \frac{y_n}{n} \sum_{l=1}^m \zeta^{l-1} \sum_{j=0}^l \zeta^j  \right) .
\end{equation}

By Assumption \ref{assDegSeq}, we have $\zeta > 1$, which implies
\begin{equation}
\sum_{l=1}^m \zeta^{l-1} \sum_{j=0}^l \zeta^j = \sum_{l=1}^m \zeta^{l-1} \frac{\zeta^{l+1}-1}{\zeta-1} \leq \frac{1}{\zeta-1} \sum_{l=1}^m \zeta^{2l} = \frac{\zeta^2 ( \zeta^{2m} - 1 )}{(1-\zeta)^2} \leq \left( \frac{\zeta}{\zeta-1} \right)^2 \zeta^{2m} = O ( \zeta^{2m} ) .
\end{equation}

We thus obtain
\begin{equation}
\P [ \tau_G \leq m | U_s = 1 ]= O \left( n^{-\delta} + y_n^{-1} + y_n \zeta^{2m} / n \right) 
\end{equation}

Finally, we choose $y_n$ to minimize the bound. This yields
\begin{equation}
\P [ \tau_G \leq m | U_s = 1 ] =  O \left( n^{-\delta} + \zeta^m / \sqrt{n} \right) ,
\end{equation}
which is what we set out to prove.

\subsubsection{Remark}

Towards bounding $\P [ \tau_S = l  , E_1 , F_m | \Omega_n ]$, we bounded the number of outstubs belonging to $v' \in A_{l-1}$ s.t.\ $g(v') = D$ by $\hat{Z}_l$. We can in fact obtain a tighter bound, suggesting that the bound above was unnecessarily loose. However, we show here that this modified approach ultimately yields the same result. For this, suppose in Assumption \ref{assDegSeq} we add the event
\begin{equation}
\Omega_{n,7} =\left\{ \left| \frac{\sum_{h=1}^n N_h D_h U_h}{n} - \eta_4 \right| \leq n^{-\gamma}  \right\} ,
\end{equation}
and we set $\zeta' = \eta_4 / \eta_1$. Let $\{ y_n' : n \in \N \}$ be a sequence tending to infinity, and define
\begin{equation}
\hat{Z}_l' = \sum_{\i \in \hat{A}_{l-1}} \left(  \prod_{j=0}^{l-1} U_{ \i | j }   \right) D_{\i} , \quad F_m' = \left\{  \max_{1 \leq l \leq m} \frac{\hat{Z}_l'}{ (\zeta')^{l-1}} \leq \zeta^* y_n' \right\} .
\end{equation}
Using the same approach we used to bound $\P[ F_m^C | \Omega_n ]$, it is possible to show
\begin{equation}
\P \left[ (F_m')^C \middle| \Omega_n \right] = O \left( (y_n')^{-1} \right).
\end{equation}
Then, when analyzing $\{ \tau_S = l, E_1 \}$, we could bound the number of outstubs belonging to $v' \in A_{l-1}$ s.t.\ $g(v') = D$ by $\hat{Z}_l'$, which is tighter  than $\hat{Z}_l$ (used above). This would ultimately yield
\begin{equation}
\P [ \tau_S = l, E_1, F_m' | \Omega_n ] = O \left( \frac{y_n'}{n} (\zeta' )^{l-1} \sum_{j=0}^{l-1} \zeta^j \right)  ,
\end{equation}
which would imply
\begin{equation}
\P [ \tau_S \leq m, E_1, F_m' | \Omega_n ] = O \left( \frac{y_n'}{n} \sum_{l=1}^m (\zeta' )^{l-1} \sum_{j=0}^{l-1} \zeta^j \right) = O \left( \frac{y_n'}{n} ( \zeta \zeta' )^m \right) .
\end{equation}
Using this approach, we could write
\begin{align}
& \P [ \tau_G \leq m | U_s = 1 ] \\
& \quad = O ( n^{-\delta} ) + \P \left[ F_m^C \middle| \Omega_n \right] + \P [ \tau_S \leq m , E_0, F_m | \Omega_n ] + \P \left[ (F_m')^C \middle| \Omega_n \right] + \P [ \tau_S \leq m , E_1, F_m' | \Omega_n ] \\
& \quad = O \left( n^{-\delta} + y_n^{-1} + \frac{y_n}{n} \zeta^{2m} + ( y_n' )^{-1} + \frac{y_n'}{n} ( \zeta \zeta' )^m \right) = O \left( n^{-\delta} + \frac{ \zeta^m }{\sqrt{n}} + \frac{ ( \zeta \zeta' )^{m/2} }{\sqrt{n}} \right) ,
\end{align}
where in the final step we chose $y_n, y_n'$ to minimize the bound. However, since $\zeta' \leq \zeta$ by definition, this bound is ultimately $O( n^{-\delta} + \zeta^m / \sqrt{n} )$, which is the same bound we obtained above.

\subsection{Proof of Lemma \ref{lemTreeTailBound}} \label{secProofLemTreeTailBound}

For $j \in \{1,2,\ldots,m\}$, let $X_j = \sum_{\i \in \hat{A}_j} U_{\i} \hat{\mu}_{\phi}(\i)$. Additionally, for each $n \in \N$, we define
\begin{equation}
\hat{p}_n = \frac{\sum_{h=1}^n U_h N_h}{L_n}   .
\end{equation}
Note that, by Assumption \ref{assDegSeq}, we have $| \hat{p}_n - p | < n^{-\gamma}$ when $\Omega_n$ holds.

Before proceeding, we present some intermediate results required for our analysis.

\begin{clm} \label{clmMartingale}
For any $i,j \in \N$ s.t.\ $j \geq i$, $\E_n [ X_j | X^i ] = ( (1-\alpha_n) \hat{p}_n )^{j-i}  X_i$, where $X^i = \{ X_l \}_{l=1}^i$.
\end{clm}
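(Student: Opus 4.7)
The plan is to establish the one-step identity $\E_n[X_{i+1} | \mathcal{F}_i] = (1-\alpha_n) \hat{p}_n X_i$, where $\mathcal{F}_i$ denotes the $\sigma$-algebra generated by the attributes of all nodes in the first $i$ generations of the tree, and then iterate via the tower property to obtain the general statement.

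To prove the one-step identity, I would first expand $X_{i+1}$ using the recursion \eqref{eqMuHatRecursion}. Every element of $\hat{A}_{i+1}$ is an offspring $(\i,l)$ of some $\i \in \hat{A}_i$, and by Algorithm \ref{algTreeConstruction} offspring are born only when $U_{\i} = 1$. Using this and $U_{\i}^2 = U_{\i}$, the sum defining $X_{i+1}$ can be rewritten as
\begin{equation}
X_{i+1} = (1-\alpha_n) \sum_{\i \in \hat{A}_i} U_{\i} \, \hat{\mu}_{\phi}(\i) \, \frac{1}{D_{\i}} \sum_{l=1}^{D_{\i}} U_{(\i,l)} ,
\end{equation}
with the convention that the inner sum is zero when $U_{\i} = 0$ (which is fine since the factor $U_{\i}$ outside already kills that term). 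Conditioned on $\mathcal{F}_i$, the quantities $\hat{A}_i$, $\{U_{\i}\}_{\i \in \hat{A}_i}$, $\{\hat{\mu}_{\phi}(\i)\}_{\i \in \hat{A}_i}$, and $\{D_{\i}\}_{\i \in \hat{A}_i}$ are all determined. Moreover, by Algorithm \ref{algTreeConstruction}, the offspring attributes $\{(N_{(\i,l)}, D_{(\i,l)}, U_{(\i,l)})\}$ are drawn i.i.d.\ from $f_n$ and are independent of $\mathcal{F}_i$. Taking conditional expectation and using the marginal of $f_n$ on its last coordinate,
\begin{equation}
\E_n\!\left[ \frac{1}{D_{\i}} \sum_{l=1}^{D_{\i}} U_{(\i,l)} \,\middle|\, \mathcal{F}_i \right] = \E_n[U_{(\i,1)}] = \sum_{h=1}^n \frac{N_h}{L_n} U_h = \hat{p}_n ,
\end{equation}
which is a deterministic constant under $\E_n[\cdot]$. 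Substituting gives $\E_n[X_{i+1} | \mathcal{F}_i] = (1-\alpha_n)\hat{p}_n X_i$.

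With the one-step identity in hand, the tower property yields $\E_n[X_j | \mathcal{F}_i] = (1-\alpha_n)\hat{p}_n \, \E_n[X_{j-1} | \mathcal{F}_i]$ for any $j > i$, and iterating $j-i$ times gives $\E_n[X_j | \mathcal{F}_i] = ((1-\alpha_n)\hat{p}_n)^{j-i} X_i$. Since $X^i = \{X_l\}_{l=1}^i$ is $\mathcal{F}_i$-measurable, one final application of the tower property converts the conditioning from $\mathcal{F}_i$ to $X^i$ and delivers the claim.

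There is no real obstacle here; the only point requiring care is the bookkeeping around the indicator $U_{\i}$, i.e.\ recognizing that the ``$U_{\i}$ inside $\hat{\mu}_{\phi}((\i,l))$'' and the ``$U_{\i}$ implicit in whether $(\i,l)$ exists at all'' collapse (via $U_{\i}^2 = U_{\i}$) into a single factor that makes the martingale-style computation transparent.
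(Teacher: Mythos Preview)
Your proposal is correct and follows essentially the same approach as the paper: both expand $X_{j}$ (or $X_{i+1}$) via the recursion \eqref{eqMuHatRecursion} to isolate the fresh indicators $U_{(\i,l)}$, use that offspring attributes are drawn i.i.d.\ from $f_n$ independently of earlier generations to get $\E_n[U_{(\i,l)}\mid\text{past}]=\hat p_n$, and then iterate via the tower property before reducing the conditioning to $X^i$. The only cosmetic difference is that you package the conditioning in a filtration $\mathcal{F}_i$, whereas the paper writes out the generating variables explicitly.
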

\begin{proof}
We first observe
\begin{equation}\label{eqMuHatGenRecursion}
X_j = \sum_{\i \in \hat{A}_j} U_{\i} \hat{\mu}_{\phi}(\i) = \sum_{\i \in \hat{A}_j} \prod_{l=0}^{j-1} \frac{(1-\alpha_n) U_{\i | l} }{  D_{\i | l} } U_{\i} = \sum_{\i \in \hat{A}_{j-1}} \prod_{l=0}^{j-1} \frac{(1-\alpha_n) U_{\i | l} }{  D_{\i | l} } \sum_{k=1}^{D_{\i}} U_{ (\i,k) } ,
\end{equation}
where the first equality follows from \eqref{eqMuHatRecursion} and the second follows since, by Algorithm \ref{algTreeConstruction},
\begin{equation}
\hat{A}_j = \left\{ (\i,k) : \i \in \hat{A}_{j-1}, U_{\i} = 1, k \in \{ 1,2,\ldots,D_{\i} \} \right\} .
\end{equation}
Next, let $\i \in \hat{A}_{j-1}$ s.t.\ $U_{\i} = 1$. For each $k \in \{1,2,\ldots,D_{\i} \}$, observe
\begin{align}
& \E \left[ U_{(\i,k)} \middle| \{ N_h, D_h, U_h : 1 \leq h \leq n \} \cup \{ U_{\i'} , D_{\i'} : \i' \in \hat{A}_s, s < j   \}   \right] \\
& \quad = \E \left[ U_{(\i,k)} \middle| \{ N_h, D_h, U_h : 1 \leq h \leq n \} \right]  = \frac{\sum_{h=1}^n U_h N_h}{L_n}  = \hat{p}_n , \label{eqTreeUnknownSampleProb}
\end{align}
which follows since in Algorithm \ref{algTreeConstruction}, the attributes $(N_{(\i,k)}, D_{(\i,k)}, U_{(\i,k)})$ are sampled from $f_n$, independent of the attributes of nodes in previous generations. Combining \eqref{eqMuHatGenRecursion} and \eqref{eqTreeUnknownSampleProb} gives

\begin{align}
& \E \left[ X_j \middle| \{ N_h, D_h, U_h : 1 \leq h \leq n \} \cup \{ U_{\i} , D_{\i} : \i \in \hat{A}_s, s < j   \}   \right] = \sum_{\i \in \hat{A}_{j-1}} \prod_{l=0}^{j-1} \frac{(1-\alpha_n) U_{\i | l} }{  D_{\i | l} } \sum_{k=1}^{D_{\i}} \hat{p}_n \\
& \quad = \sum_{\i \in \hat{A}_{j-1}} \prod_{l=0}^{j-2} \frac{(1-\alpha_n) U_{\i | l} }{  D_{\i | l} } \frac{(1-\alpha_n) U_{\i } }{  D_{\i } } ( D_{\i} \hat{p}_n ) = (1-\alpha_n) \hat{p}_n \sum_{\i \in \hat{A}_{j-1}} \prod_{l=0}^{j-2} \frac{(1-\alpha_n) U_{\i | l} }{  D_{\i | l} } U_{\i} \\
& \quad\quad = (1-\alpha_n) \hat{p}_n \sum_{\i \in \hat{A}_{j-1}} \hat{\mu}_{\phi}(\i) U_{\i} = (1-\alpha_n) \hat{p}_n X_{j-1} \label{eqRecursionSetup} .
\end{align}
Note that $X^i$ is a function of $\{ U_{\i} , D_{\i} : \i \in \hat{A}_s, s < j \}$, so we can also write
\begin{equation}
\E \left[ X_j \middle| \{ N_h, D_h, U_h : 1 \leq h \leq n \} \cup \{ U_{\i} , D_{\i} : \i \in \hat{A}_s, s < j   \} \cup  X^{i}   \right] = (1-\alpha_n) \hat{p}_n X_{j-1} .
\end{equation}
Then, taking conditional expectation with respect to $\{ N_h, D_h, U_h : 1 \leq h \leq n \} \cup  X^i $ on both sides,
\begin{equation}
\E_n \left[ X_j \middle|  X^{i}  \right] = (1-\alpha_n) \hat{p}_n \E_n \left[ X_{j-1} | X^i \right] ,
\end{equation}
and so applying recursively gives
\begin{equation}
\E_n [ X_j | X^i ] = ( (1-\alpha_n) \hat{p}_n )^{j-i} X_i ,
\end{equation}
which completes the proof.
\end{proof}

\begin{clm} \label{clmAzumaLemma}
Let $Z$ be a random variable satisfying $\E [ Z ] = 0$ and $a \leq Z \leq b$ a.s. Then 
\begin{equation}
\E \left[ e^{\lambda Z} \right] \leq e^{\lambda^2 (b-a)^2/8}\ \forall\ \lambda > 0 .
\end{equation}
\end{clm}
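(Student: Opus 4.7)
The plan is to prove this via the classical Hoeffding lemma argument, which leverages convexity of the exponential function together with a Taylor expansion bound on a certain auxiliary function. This is a standard result, so the proof will be a matter of assembling well-known pieces in the right order.

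First, I would use the convexity of $x \mapsto e^{\lambda x}$ on $[a,b]$ to write, for any $z \in [a,b]$,
\begin{equation}
e^{\lambda z} \leq \frac{b-z}{b-a} e^{\lambda a} + \frac{z-a}{b-a} e^{\lambda b}.
\end{equation}
Taking expectations and using $\E[Z] = 0$ gives
\begin{equation}
\E\left[ e^{\lambda Z} \right] \leq \frac{b}{b-a} e^{\lambda a} - \frac{a}{b-a} e^{\lambda b}.
\end{equation}
(Here I use that $a \leq 0 \leq b$, which follows from $\E[Z] = 0$ together with $a \leq Z \leq b$ a.s.)

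Next, I would reparameterize: set $\theta = -a/(b-a) \in [0,1]$ and $u = \lambda(b-a) \geq 0$, so that the right-hand side equals $e^{-\theta u}\bigl( (1-\theta) + \theta e^u \bigr) = e^{L(u)}$, where
\begin{equation}
L(u) = -\theta u + \log\left( 1 - \theta + \theta e^u \right).
\end{equation}
It then suffices to show $L(u) \leq u^2/8$ for $u \geq 0$. A direct computation gives $L(0) = 0$ and $L'(0) = 0$, and
\begin{equation}
L''(u) = \frac{\theta(1-\theta) e^u}{\bigl( 1 - \theta + \theta e^u \bigr)^2} = \frac{(1-\theta)\theta e^u \cdot (1-\theta)}{ \bigl( (1-\theta) + \theta e^u \bigr)^2 } \cdot \frac{1}{1-\theta} ,
\end{equation}
which, by the AM--GM inequality applied to $(1-\theta)$ and $\theta e^u$, satisfies $L''(u) \leq 1/4$ uniformly in $u$ and $\theta$. (More explicitly, $(1-\theta) + \theta e^u \geq 2\sqrt{(1-\theta)\theta e^u}$, whence $L''(u) \leq 1/4$.) A second-order Taylor expansion of $L$ around $u = 0$ then yields $L(u) \leq u^2/8$.

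Combining these bounds gives $\E[e^{\lambda Z}] \leq e^{u^2/8} = e^{\lambda^2 (b-a)^2/8}$, as required. The main (minor) obstacle is the bound $L''(u) \leq 1/4$; everything else is bookkeeping. I do not anticipate any serious difficulty here, as this is a textbook result used in the subsequent concentration argument for $\sum_{\i \in \hat{A}_j} U_{\i} \hat{\mu}_{\phi}(\i)$.
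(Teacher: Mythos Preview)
Your proof is correct and is the standard Hoeffding-lemma argument. The paper itself does not prove this claim at all: its ``proof'' is a one-line citation to Lemma~5.1 of Dubhashi--Panconesi, which contains exactly the convexity-plus-Taylor argument you have written out. So you are simply supplying the details the paper defers to a textbook; there is no difference in approach. (One cosmetic remark: the intermediate rewriting of $L''(u)$ with the extra factor of $(1-\theta)/(1-\theta)$ is unnecessary and slightly confusing---you can go directly from $L''(u) = \theta(1-\theta)e^u / (1-\theta+\theta e^u)^2$ to the AM--GM bound.)
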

\begin{proof}
See, for example, Lemma 5.1 in \cite{dubhashi2009concentration}.
\end{proof}

\begin{clm} \label{clmAzumaApplication}
For any $j \in \N$ and any $c_j > 0$, define $Y_j =  c_j ( X_j - (1-\alpha_n) \hat{p}_n X_{j-1} ) )$. Then
\begin{equation}
\E_n [ \exp ( \lambda Y_j ) | X_{j-1} ] \leq \exp \left( \frac{\lambda^2}{8} ( c_j ( 1-\alpha_n )^j )^2 \right) .
\end{equation}
\end{clm}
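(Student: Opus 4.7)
The approach is to apply Hoeffding's lemma (Claim \ref{clmAzumaLemma}) conditionally on a richer $\sigma$-algebra than $\sigma(X_{j-1})$, and then descend via the tower property. Let
\begin{equation}
\mathcal{F}_{j-1} = \sigma \left( \{ N_h, D_h, U_h : 1 \leq h \leq n \} \cup \{ U_{\i}, D_{\i} : \i \in \hat{A}_s, 0 \leq s \leq j-1 \} \right) .
\end{equation}
Observe that $X_{j-1}$, being a function of $\{U_{\i}, D_{\i} : \i \in \hat{A}_s, s \leq j-1 \}$ via \eqref{eqMuHatRecursion}, is $\mathcal{F}_{j-1}$-measurable; hence $\sigma(X_{j-1}) \subseteq \mathcal{F}_{j-1}$. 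Meanwhile, given $\mathcal{F}_{j-1}$, the only randomness remaining in $X_j$ is the collection $\{U_{\i} : \i \in \hat{A}_j\}$, each of which has conditional mean $\hat{p}_n$ by \eqref{eqTreeUnknownSampleProb}.

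Starting from the expansion \eqref{eqMuHatGenRecursion}, I would write $X_j = \sum_{\i \in \hat{A}_{j-1}} \hat{\mu}_{\phi}(\i) U_{\i} \frac{1-\alpha_n}{D_{\i}} \sum_{k=1}^{D_{\i}} U_{(\i,k)}$ and extract two facts. First, pulling the $\mathcal{F}_{j-1}$-measurable factors out of $\E[\cdot \mid \mathcal{F}_{j-1}]$ yields $\E[X_j \mid \mathcal{F}_{j-1}] = (1-\alpha_n) \hat{p}_n X_{j-1}$, and therefore $\E[Y_j \mid \mathcal{F}_{j-1}] = 0$. Second, since $0 \leq \tfrac{1}{D_{\i}} \sum_{k} U_{(\i,k)} \leq 1$ pointwise, we have the pathwise sandwich $0 \leq X_j \leq (1-\alpha_n) X_{j-1}$ given $\mathcal{F}_{j-1}$, so $Y_j$ almost surely lies in an interval whose endpoints are $\mathcal{F}_{j-1}$-measurable and whose length is at most $c_j (1-\alpha_n) X_{j-1}$. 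A conditional application of Claim \ref{clmAzumaLemma} (the $\mathcal{F}_{j-1}$-measurable endpoints act as constants inside $\E[\cdot \mid \mathcal{F}_{j-1}]$) then gives
\begin{equation}
\E_n \left[ \exp(\lambda Y_j) \mid \mathcal{F}_{j-1} \right] \leq \exp \left( \frac{\lambda^2}{8} c_j^2 (1-\alpha_n)^2 X_{j-1}^2 \right) .
\end{equation}

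To recover the deterministic bound stated in the claim, I would iterate the almost-sure inequality $X_j \leq (1-\alpha_n) X_{j-1}$ just established, using $X_0 = U_{\phi} \hat{\mu}_{\phi}(\phi) = 1$ as the base case, to conclude $X_{j-1} \leq (1-\alpha_n)^{j-1}$ a.s.; this upgrades the right-hand side to $\exp(\lambda^2 (c_j (1-\alpha_n)^j)^2 / 8)$. Because this bound is now deterministic and $\sigma(X_{j-1}) \subseteq \mathcal{F}_{j-1}$, the tower property immediately transfers it to $\E_n[\exp(\lambda Y_j) \mid X_{j-1}]$. The only step worth spelling out is the conditional invocation of Hoeffding's lemma with $\mathcal{F}_{j-1}$-measurable rather than constant endpoints; this is routine but requires replaying the convexity argument underlying Claim \ref{clmAzumaLemma} (see Section 5.1 of \cite{dubhashi2009concentration}) inside the conditional expectation.
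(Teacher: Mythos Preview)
Your proposal is correct and follows essentially the same approach as the paper: establish that $Y_j$ has conditional mean zero, bound it using $X_j \in [0,(1-\alpha_n)X_{j-1}]$, apply Hoeffding's lemma conditionally, and finish with $X_{j-1} \leq (1-\alpha_n)^{j-1}$. The only difference is cosmetic: the paper conditions directly on $X_{j-1}$ (citing Claim~\ref{clmMartingale} for the zero mean and treating the $X_{j-1}$-measurable endpoints as constants), whereas you first condition on the richer $\sigma$-algebra $\mathcal{F}_{j-1}$ and then descend via the tower property---your version is slightly more explicit about why the conditional Hoeffding step is legitimate, but the substance is identical.
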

\begin{proof}
Note $\E_n [ Y_j | X_{j-1} ] = 0$ by Claim \ref{clmMartingale}. Furthermore, $X_j \in [ 0 , (1-\alpha_n) X_{j-1} ]$ by \eqref{eqMuHatRecursion}, so
\begin{equation}
Y_j \leq  c_j ( 1- \alpha_n ) ( 1 - \hat{p}_n ) X_{j-1} \triangleq b_j , \quad Y_j \geq - c_j ( 1- \alpha_n ) \hat{p}_n  X_{j-1} \triangleq a_j .
\end{equation}
Therefore, applying Claim \ref{clmAzumaLemma} gives
\begin{equation}
\E_n [ \exp ( \lambda Y_j ) | X_{j-1} ] \leq \exp \left( \frac{\lambda^2}{8} ( c_j ( 1-\alpha_n ) X_{j-1} )^2 \right) ,
\end{equation}
and using $X_{j-1} \leq (1-\alpha_n)^{j-1}$ (which again follows from \eqref{eqMuHatRecursion}) completes the proof.
\end{proof}

We now turn to the proof of the lemma. First, we write
\begin{equation}\label{eqWBPtailTwoCases}
\P \left[ \alpha_n \sum_{j=1}^{m-1} X_j + X_{m} \geq \epsilon \right] \leq \P \left[ \alpha_n \sum_{j=1}^{m-1} X_j + X_{m} \geq \epsilon \middle| {\Omega}_n \right] + \P [ {\Omega}_n^C ] .
\end{equation}
Recall $\P [ {\Omega}_n^C ] = O(n^{-\delta})$ by Assumption \ref{assDegSeq}, so it remains to bound the first summand. First,
\begin{align}\label{eqWBPtailPtoPn}
\P \left[ \alpha_n \sum_{j=1}^{m-1} X_j + X_{m} \geq \epsilon \middle| \Omega_n \right] = \frac{1}{ \P [ {\Omega}_n ] } \E \left[ 1 \left( {\Omega}_n \right) \P_n \left[ \alpha_n \sum_{j=1}^{m-1} X_j + X_{m} \geq \epsilon \right] \right]  .
\end{align}

For the term inside the expectation, we have
\begin{align}\label{eqPnBoundTwoTerms}
\P_n \left[ \alpha_n \sum_{j=1}^{m-1} X_j + X_{m} > \epsilon \right] & \leq \P_n \left[ \alpha_n \sum_{j=1}^{m-1} X_j  > \frac{\epsilon}{2} \right]  + \P_n \left[ X_{m} > \frac{\epsilon}{2} \right] .
\end{align}

For the second summand, we use Markov's inequality to write
\begin{equation}
 \P_n \left[ X_{m} > \frac{\epsilon}{2} \right] \leq \frac{2 \E_n [ X_m ]}{\epsilon} =  \frac{2 (1-\alpha_n)^m \hat{p}_n^m }{\epsilon} < \frac{2 \hat{p}_n^m}{\epsilon} .
\end{equation}

Recall that $\hat{p}_n \leq p + n^{-\gamma}$ when $\Omega_n$ holds. Therefore, by assumption $m = O(n^{\gamma})$, we obtain
\begin{equation}\label{eqSecondSummandComplete}
 1 (\Omega_n) \P_n \left[ X_{m} > \frac{\epsilon}{2} \right]  < \frac{2 ( p + n^{-\gamma} )^m }{ \epsilon } = p^m \frac{ 2 ( 1 + \frac{1/p}{n^{\gamma}} )^m }{ \epsilon } = O \left( p^m \right) .
\end{equation}

We next consider the first summand in \eqref{eqPnBoundTwoTerms}. First, we use the Chernoff bound to write
\begin{equation} \label{eqChernoffExpression}
\P_n \left[ \alpha_n \sum_{j=1}^{m-1} X_j  > \epsilon \right] \leq \min_{\lambda > 0} e^{ - \lambda \epsilon } \E_n \left[ \prod_{j=1}^{m-1} \exp \left( \lambda \alpha_n X_j \right) \right] .
\end{equation}

To analyze \eqref{eqChernoffExpression}, we require a definition: for $j = 0, 1, \ldots , m-1$, let
\begin{equation}\label{eqCjDefn}
c_j = \E_n \left[ \alpha_n \sum_{ i=0 }^{ m-j-1 } X_i  \right] =  \alpha_n \sum_{ i=0 }^{ m-j-1 } ( (1-\alpha_n) \hat{p}_n )^i = \frac{ \alpha_n ( 1 - ( (1-\alpha_n) \hat{p}_n )^{m-j} ) }{ 1 - (1-\alpha_n) \hat{p}_n } ,
\end{equation}
where we have used Claim \ref{clmMartingale} and $X_0 = \sum_{ \i \in \hat{A}_0 } U_{ \i } \mu_{\phi}(\i) = U_{\phi} \mu_{\phi}(\phi) = 1$ by definition. From \eqref{eqCjDefn}, it is straightforward to show the following:
\begin{equation}\label{eqCjObservations}
c_0 - \alpha_n = \E_n \left[ \alpha_n \sum_{i=1}^{m-1} X_i \right] , \quad c_{m-1} = \alpha_n , \quad c_{j} = \alpha_n + (1-\alpha_n) \hat{p}_n c_{j+1} 
\end{equation}

Now for $j \in \{1,2,\ldots,m-1\}$, we use Claim \ref{clmAzumaApplication} and \eqref{eqCjObservations} to obtain
\begin{align}
\E_n [ \exp ( \lambda c_j X_j ) | X^{j-1} ] & = \E_n [ \exp ( \lambda c_j ( X_j - (1-\alpha_n) \hat{p}_n X_{j-1} ) ) | X^{j-1} ] \exp ( \lambda c_j (1-\alpha_n) \hat{p}_n X_{j-1} ) \\
& \leq \exp \left( \frac{\lambda^2}{8} ( c_j ( 1-\alpha_n )^j )^{2} \right) \exp ( \lambda (c_{j-1} -\alpha_n) X_{j-1} ) .
\end{align}

We can then recursively apply to the expectation in \eqref{eqChernoffExpression}, i.e.\
\begin{align}
& \E_n \left[ \prod_{j=1}^{m-1} \exp \left( \lambda \alpha_n X_j \right) \right] = \E_n \left[ \prod_{j=1}^{m-2} \exp \left( \lambda \alpha_n X_j \right)  \exp \left( \lambda c_{m-1} X_{m-1} \right) \right] \\
& \quad = \E_n \left[ \prod_{j=1}^{m-2} \exp \left( \lambda \alpha_n X_j \right) \E_n \left[ \exp \left( \lambda c_{m-1} X_{m-1} \right) \middle| X^{m-2} \right] \right] \\
& \quad \leq \E_n \left[ \prod_{j=1}^{m-2} \exp \left( \lambda \alpha_n X_j \right) \exp (  \lambda ( c_{m-2} - \alpha_n ) X_{m-2} ) \right] \exp \left( \frac{\lambda^2}{8} \left( c_{m-1} ( 1-\alpha_n )^{m-1} \right)^2 \right) \\
& \quad = \E_n \left[ \prod_{j=1}^{m-3} \exp \left( \lambda \alpha_n X_j \right) \E_n \left[ \exp \left( \lambda c_{m-2} X_{m-2} \right) \middle| X^{m-3} \right] \right] \exp \left( \frac{\lambda^2}{8} \left( c_{m-1} ( 1-\alpha_n )^{m-1} \right)^2 \right) \\
& \quad\quad \vdots \\
& \quad  \leq \E_n [ \exp ( \lambda ( c_0 - \alpha_n ) X_0 ) ] \exp \left( \frac{\lambda^2}{8} \sum_{j=1}^{m-1} \left( c_j ( 1-\alpha_n )^j \right)^2 \right) \\
& \quad = \exp \left( \lambda \E_n  \left[ \alpha_n \sum_{i=1}^{m-1} X_i \right] + \frac{\lambda^2}{8} \sum_{j=1}^{m-1} \left( c_j ( 1-\alpha_n )^j \right)^2 \right) ,
\end{align}
where we have used $X_0 = 1$ and \eqref{eqCjObservations} for the final equality. Substituting into \eqref{eqChernoffExpression} yields
\begin{equation}\label{eqChernoffAfterAzuma}
\P_n \left[ \alpha_n \sum_{j=1}^{m-1} X_j  > \frac{\epsilon}{2} \right] \leq \min_{\lambda > 0 } \exp \left( - \lambda \left( \frac{\epsilon}{2} - \E_n \left[ \alpha_n \sum_{i=1}^{m-1} X_i \right] \right) + \frac{\lambda^2}{8} \sum_{j=1}^{m-1} \left( c_j ( 1-\alpha_n )^j \right)^2 \right) .
\end{equation}
It is straightforward to show the global minimizer of \eqref{eqChernoffAfterAzuma} is 
\begin{equation}
\lambda^* = \frac{4 \left( \frac{\epsilon}{2} - \E_n \left[ \alpha_n \sum_{i=1}^{m-1} X_i \right] \right) }{ \sum_{j=1}^{m-1} \left( c_j ( 1-\alpha_n )^j \right)^2 } ,
\end{equation}
which is positive when $\E_n \left[ \alpha_n \sum_{i=1}^{m-1} X_i \right] < \frac{\epsilon}{2}$. Substituting into \eqref{eqChernoffAfterAzuma},
\begin{equation}\label{eqChernoffAfterMin}
\P_n \left[ \alpha_n \sum_{j=1}^{m-1} X_j  > \frac{\epsilon}{2} \right] \leq \exp \left( - \frac{ 2 \left( \frac{\epsilon}{2} - \E_n \left[ \alpha_n \sum_{i=1}^{m-1} X_i \right] \right)^2 }{ \sum_{j=1}^{m-1} \left( c_j ( 1-\alpha_n )^j \right)^2 } \right) .
\end{equation}

We now derive bounds for the denominator and numerator in the exponential in \eqref{eqChernoffAfterMin}. To (coarsely) approximate the denominator, we have
\begin{align}
& c_j < \frac{ \alpha_n  }{ 1 - (1-\alpha_n) \hat{p}_n } < \frac{ \alpha_n }{ 1 - \hat{p}_n }  , \quad \sum_{j=1}^{m-1} \left(  1-\alpha_n  \right)^{2j} < \sum_{j=0}^{\infty} ( 1 - \alpha_n )^j  = \frac{1}{ \alpha_n } \\
& \Rightarrow \sum_{j=1}^{m-1} \left( c_j ( 1-\alpha_n )^j \right)^2 <  \frac{ \alpha_n }{ (1 - \hat{p}_n)^2 }  < \frac{ \alpha_n }{ ( 1 - p - n^{-\gamma} )^2 } , \label{eqDenominatorBound}
\end{align}
where the final inequality holds assuming $\Omega_n$ and $n$ is sufficiently large (so that $p+n^{-\gamma} < 1$). For the numerator, first observe that, when $\Omega_n$ holds, we have
\begin{equation}
\E_n \left[ \alpha_n \sum_{i=1}^{m-1} X_i \right] = \frac{ \alpha_n ( (1-\alpha_n) \hat{p}_n - ( (1-\alpha_n) \hat{p}_n )^{m} ) }{ 1 - (1-\alpha_n) \hat{p}_n } < \frac{ \alpha_n }{ 1 - ( p + n^{-\gamma} ) } ,
\end{equation}
and so $\E_n [ \alpha_n \sum_{i=1}^{m-1} X_i ] < \epsilon/2$ for $n$ sufficiently large (as required), assuming $\alpha_n \rightarrow 0$ as $n \rightarrow \infty$. Therefore, when $\Omega_n$ holds, $\alpha_n \rightarrow 0$, and $n$ is large,
\begin{equation}\label{eqNumeratorBound}
\left( \frac{\epsilon}{2} - \E_n \left[ \alpha_n \sum_{i=1}^{m-1} X_i \right] \right)^2 > \left( \frac{\epsilon}{2} - \frac{\alpha_n}{1 - (p+n^{-\gamma}) } \right)^2 = \frac{\epsilon^2}{4} - \frac{\alpha_n}{1 - (p+n^{-\gamma}) } \left( \epsilon - \frac{\alpha_n}{1 - (p+n^{-\gamma}) } \right) .
\end{equation}

Thus, under these assumptions, \eqref{eqDenominatorBound} and \eqref{eqNumeratorBound} give
\begin{equation}
\frac{ 2 \left( \frac{\epsilon}{2} - \E_n \left[ \alpha_n \sum_{i=1}^{m-1} X_i \right] \right)^2 }{ \sum_{j=1}^{m-1} \left( c_j ( 1-\alpha_n )^j \right)^2 } > \frac{ ( 1 - p - n^{-\gamma} )^2  \epsilon^2 }{ 2 \alpha_n } - 2 ( 1 - p - n^{-\gamma} ) \left( \epsilon - \frac{\alpha_n}{1 - (p+n^{-\gamma}) } \right) .
\end{equation}

To summarize, we have shown that for $n$ sufficiently large, assuming $\alpha_n \rightarrow 0$ and $\Omega_n$ holds,
\begin{align}
\P_n \left[ \alpha_n \sum_{j=1}^{m-1} X_j  > \frac{\epsilon}{2} \right] & \leq \exp \left( - \frac{ ( 1 - p - n^{-\gamma} )^2  \epsilon^2 }{ 2 \alpha_n } \right) \exp \left(2 ( 1 - p - n^{-\gamma} ) \left( \epsilon - \frac{\alpha_n}{1 - (p+n^{-\gamma}) } \right)  \right) \\
& = O \left( \exp \left( - \frac{ ( (1-p) \epsilon )^2 }{ 2 \alpha_n } \right) \right) \label{eqFirstSummandComplete}
\end{align}
where the equality holds because the second exponential term is $O(1)$ for $p \in (0,1)$. Finally, we combine \eqref{eqWBPtailTwoCases}, \eqref{eqWBPtailPtoPn},  \eqref{eqPnBoundTwoTerms}, \eqref{eqSecondSummandComplete}, and \eqref{eqFirstSummandComplete} to obtain
\begin{equation}
\P \left[ \alpha_n \sum_{j=1}^{m} X_j + X_m  > \epsilon \right] = O \left( n^{-\delta} + p^m + e^{ - ((1-p) \epsilon )^2 / ( 2 \alpha_n ) } \right) ,
\end{equation}
which completes the proof.

\subsection{Simultaneous construction of graph and tree} \label{secAppSimultaneous}

For the proofs of Lemmas \ref{lemMuCoupling} and \ref{lemCouplingFailure}, we use Algorithm \ref{algSimulConstruction}, which simultaneously constructs a graph and a tree. Algorithm \ref{algSimulConstruction} uses similar notation as Algorithms \ref{algGraphConstruction} and \ref{algTreeConstruction} in Appendix \ref{secCoupling}. However, there are some differences, which we explain before presenting the algorithm.
\begin{itemize}
\item In Algorithm \ref{algGraphConstruction}, we chose $s \sim V_n$ uniformly, which is the standard DCM construction. In Algorithm \ref{algSimulConstruction}, we instead choose $s \sim V_n \setminus K_n$ uniformly. This is because in the statement of Lemma \ref{lemMuCoupling} involves $\mu_s^{(m)}(V_n \setminus K_n)$, conditioned on $U_s = 1$ (i.e.\ $s \in V_n \setminus K_n$); similarly, the statement of Lemma \ref{lemCouplingFailure} involves $\{ \tau_G \leq m \}$, conditioned on $U_s = 1$.
\item Algorithm \ref{algSimulConstruction} uses a function $\Phi : V_n \rightarrow \mathcal{U}$, where $\mathcal{U} = \cup_{j=0}^{\infty} \N^j$ and $\N^0 = \{ \phi \}$ by convention. The function $\Phi$ will be used to map nodes in the graph (which have labels in the set $V_n$, as in Algorithm \ref{algGraphConstruction}) to nodes in the tree (which have labels in the set $\mathcal{U}$, as in Algorithm \ref{algTreeConstruction}).
\item The variable $\tau_S$ in Algorithm \ref{algSimulConstruction} denotes the first iteration at which events that break the coupling occur (analogous to $\tau_G$ in Algorithm \ref{algGraphConstruction}). Once these events occur, the simultaneous construction terminates, and the graph and tree constructions are continued separately using Algorithms \ref{algGraphConstruction} and \ref{algTreeConstruction}, respectively.
\end{itemize}

For illustrative purposes, we include an example of the simultaneous construction in Figure \ref{figSimulExample}. The basic idea is as follows. Whenever a new node is added to the graph, (which occurs when outstub $(v',j)$ is paired with an instub belonging to $v \in V_n$ s.t.\ $g(v) = A$) a new offspring (with the same attributes as $v$) is added to the tree, and a map between the graph node and tree offspring is defined. In particular, Figure \ref{figSimulExample} has the following mapping:
\begin{gather}
\Phi(s) = \phi, \quad \Phi(1) = (1), \quad \Phi(2) = (2), \quad \Phi(3) = (1,2), \\
 \quad \Phi(4) = (1,3),  \quad \Phi(5) = (2,1), \quad \Phi(6) = (2,3), \quad \Phi(7) = (2,4) .
\end{gather}

 If an edge is added between two nodes already in the graph (which occurs when outstub $(v',j)$ is paired with an instub belonging to $v \in V_n$ s.t.\ $g(v) \in \{B,C,D\}$), a new offspring with the same attributes as $v$ is added to the tree. This is illustrated in Figure \ref{figSimulExample} by the following examples:
\begin{itemize}
\item Node 1 in the graph adds an edge to itself; offspring $(1,1)$ in the tree has the attributes of 1
\item Node 1 in the graph adds an edge to 2; offspring $(1,4)$ in the tree has the attributes of 2
\item Node 2 in the graph adds a multi-edge to 5; offspring $(2,2)$ in the tree has the attributes of 5
\end{itemize}
These offspring can be thought of as copies of nodes already in the tree: $(1,1)$, $(1,4)$, and $(2,2)$ are copies of $(1)$, $(2)$, and $(2,1)$, respectively. Furthermore, note that for $\i \in \{ (1,1), (1,4), (2,2) \}$, $\Phi^{-1}(\i) = \emptyset$. In other words, copies of nodes in the tree do not map back to nodes in the graph. This implies that we may have more nodes in the tree than in the graph. For this reason, after pairing all outstubs belonging to all $v' \in A_{m-1}$ (which map to nodes in the tree), we must separately add offspring to nodes $\i \in \hat{A}_{m-1}$ s.t.\ $\Phi^{-1}(\i) = \emptyset$ (which do \textit{not} map to nodes in the tree). This is done in Lines \ref{algSimulReturnToUnassignedBegin}-\ref{algSimulReturnToUnassignedEnd} in Algorithm \ref{algSimulConstruction}.


\begin{figure}
\centering
\begin{subfigure}{.5\textwidth}
  \centering
  \includegraphics[height=1.5in]{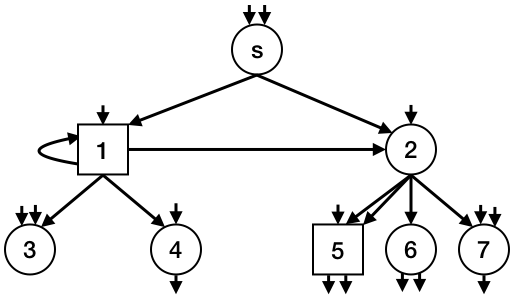}
  \caption{Graph}
  \label{figSimulGraphExample}
\end{subfigure}%
\begin{subfigure}{.5\textwidth}
  \centering
  \includegraphics[height=1.5in]{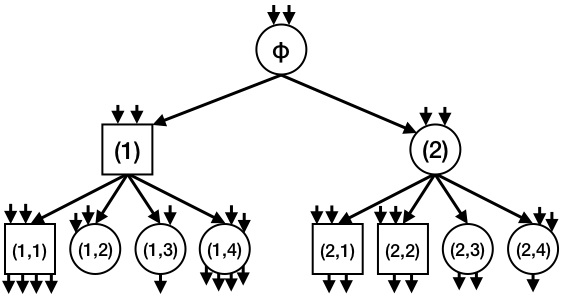}
  \caption{Tree}
  \label{figSimulTreeExample}
\end{subfigure}
\caption{Simultaneous construction example}
\label{figSimulExample}
\end{figure}

\begin{algorithm}[H]
\DontPrintSemicolon
\caption{Simultaneous Construction} \label{algSimulConstruction}

Choose $s$ from $V_n \setminus K_n$ uniformly, set $g(s) = D$, set $A_0 = \{ s \}$

Set $g(e) = 1\ \forall\ e \in S$, set $g(v) = A\ \forall\ v \in V_n \setminus \{ s \}$

Set $(N_{\phi},D_{\phi},U_{\phi}) = (N_s,D_s,U_s)$, set $\hat{A}_0 = \{ \phi \}$

Set $\Phi(s) = \phi$, set $\tau_S = \infty$

\For{$m=1$ \KwTo $\infty$}{

Set $A_m = \hat{A}_m = \emptyset$

\For{$v' \in A_{m-1}$}{

Let $\i = \Phi(v')$

\For{$j = 1$ \KwTo $D_{v'}$}{ 


// find instub for pairing, check if failure has occurred

Uniformly sample instub $e$, denote instub node by $v$

\If{$g(e) = 0$ or $g(e) = 1, g(v') = D, g(v) \in \{C,D\}$}{ \label{algSimulBreak}

Set $\tau_S = m$

Continue constructing graph as in Algorithm \ref{algGraphConstruction}

Continue constructing tree as in Algorithm \ref{algTreeConstruction}

\bf{return} }

// update graph, tree, and map

Pair $(v',j)$ with $e$, set $g(e) = 0$

\lIf{$g(v) = A$}{set $A_m = A_m \cup \{ v \}$, set $\Phi(v) = (\i,j)$} \label{algSimulAddToAm}

Add offspring $( \i ,j)$ to $\i$, set $(N_{( \i ,j)} , D_{( \i ,j)} , U_{( \i ,j)}) = (N_v, D_v, U_v)$, set $\hat{A}_m = \hat{A}_m \cup \{ ( \i ,j) \}$ \label{algSimulAddToHatAm} \label{algSimulOffspringAttr}

// update node label in graph

\lIf{$U_v = 0, g(v) = A$}{set $g(v) = B$}
\lElseIf{$U_v = 1, g(v) = A, g(v') = B$}{set $g(v) = C$}
\lElseIf{$U_v = 1, g(v) = A, g(v') \in \{C,D\}$}{set $g(v) = g(v')$}





}

\lIf{$g(e') = 0\ \forall\ e' \in I_n$}{{\bf{return}}}

}

// generate offspring for tree nodes not mapped to a graph node

\For{$\i \in \hat{A}_{m-1}$ s.t.\ $\Phi^{-1}(\i) = \emptyset$}{ \label{algSimulReturnToUnassignedBegin}

\For{$j = 1$ \KwTo $D_{\i}$}{

Add offspring $( \i ,j)$ to $\i$, sample $(N_{( \i ,j)} , D_{( \i ,j)} , U_{( \i ,j)})$ from $f_n$, set $\hat{A}_m = \hat{A}_m \cup \{ (\i,j) \}$




}

} \label{algSimulReturnToUnassignedEnd}

}

\end{algorithm}

\section{Other proofs} \label{appOtherProofs}

\subsection{Proof of Claim \ref{clmChoiceOfAlpha}} \label{appProofChoiceOfAlpha}

First, note that $\forall\ n \in \N$, $\forall\ l \in \N$, and $\forall\ l' \leq l$, we have (a.s.)
\begin{equation}\label{eqPiSconcentrateLstep}
\pi_s(V_{n,s}(l)) \geq  \alpha_n \sum_{j=0}^l (1-\alpha_n)^j \left( e_s^{\trans} P^j 1_n \right) = \alpha_n \sum_{j=0}^l (1-\alpha_n)^j = 1 - (1-\alpha_n)^{l+1} \geq 1 - (1-\alpha_n)^{l'} ,
\end{equation}
where the first inequality follows from \eqref{eqPprPowerIter} and by definition of $V_{n,s}(l)$, and the first equality holds since $P^j$ is row stochastic (the remaining steps are simple manipulations). Therefore, when $\alpha_n = \frac{\rho \log(1/\tau) \log \zeta }{ \log n }$, we can define $c = \rho \log \zeta$ and use \eqref{eqPiSconcentrateLstep} to write
\begin{align}
\liminf_{n \rightarrow \infty} \pi_s \left( V_{n,s} \left( \ceil*{ \tfrac{\log(1/\tau)}{\alpha_n} } \right) \right) & \geq  1 - \lim_{n \rightarrow \infty} \left(1 + \frac{\log (1/\tau) }{ \log (n) / c } \right)^{\log (n) / c} = 1 - \tau\ a.s.,
\end{align}
which is the desired bound. If instead $\alpha_n = \alpha$ is a constant, we have more simply
\begin{equation}
\liminf_{n \rightarrow \infty} \pi_s \left( V_{n,s} \left( \ceil*{ \tfrac{ \log(1/\tau) }{ \log(1/(1-\alpha)) } } \right) \right) \geq 1 - (1-\alpha)^{ \frac{ \log(1/\tau) }{ \log(1/(1-\alpha)) } }  = 1 -  (1-\alpha)^{ \log_{(1-\alpha)} (\tau) } = 1 - \tau\ a.s. ,
\end{equation}
which is the other desired bound. Next, to bound the expected size of $V_{n,s}(l)$, we use the analysis of Appendix \ref{appCouplingFailureE0Fm}. First, for $l \in \N$, the argument preceding \eqref{eqNumberOfNodesInTree} in Appendix \ref{appCouplingFailureE0Fm} implies
\begin{equation}
| V_{n,s}(l) | \leq \sum_{j=0}^l \hat{Z}_j ,
\end{equation}
where $\hat{Z}_j$ is defined in \eqref{eqZhatLdefn}. Furthermore, by  \eqref{eqNumberOfNodesInTreeExp} in Appendix \ref{appCouplingFailureE0Fm}, we have for $j \in \N$,
\begin{equation}
\E \left[ \hat{Z}_{j} \middle| \Omega_n \right] = O \left( \zeta^{j-1}  \right) ,
\end{equation}
while $\hat{Z}_0 = 1$ by definition. Combining the previous two equations gives for $l \in \N$,
\begin{equation}
\E \left[ V_{n,s}(l) | \Omega_n \right] = O \left( 1 + \sum_{j=0}^{l-1} \zeta^{j} \right) = O \left( \zeta^l \right) .
\end{equation}
Therefore, when $\alpha_n = \frac{\rho \log(1/\tau) \log \zeta }{ \log n }$, we have
\begin{equation}
\E \left[ \left| V_{n,s} \left( \ceil*{ \tfrac{\log(1/\tau)}{\alpha_n} } \right) \right| \middle| \Omega_n \right] = O \left( \zeta^{ \log(1/\tau) / \alpha_n} \right) = O \left( \zeta^{ \log_{\zeta} ( n^{1/\rho} ) } \right) = O \left( n^{1/\rho} \right) .
\end{equation}
Similarly, if $\alpha_n = \alpha$ is a constant,
\begin{equation}
\E \left[ \left| V_{n,s} \left( \ceil*{ \tfrac{ \log(1/\tau) }{ \log(1/(1-\alpha)) } } \right) \right| \middle| \Omega_n \right] = O \left( \zeta^{ \log(1/\tau) / \log(1/(1-\alpha)) } \right) = O(1) .
\end{equation}

\subsection{Proof of Theorem \ref{thmSublinear}} \label{appProofOfThmSublinear}

First, we observe
\begin{align}
\E [ \Delta_{\psi_{n,\kappa}}(\epsilon) ] & = \E \left[ | K_n | + | \{ v \in V_n \setminus K_n : B_v(\epsilon) \textrm{ holds} \} | \right]  \\
& = n^{\kappa} + \sum_{v \in V_n} \E \left[ 1 ( B_v(\epsilon) , U_v = 1 ) \right] \\
& = n^{\kappa} +  n \E \left[ 1 ( B_s(\epsilon), U_s = 1 ) \right] \\
& \leq  n^{\kappa} + n \P [ B_s(\epsilon) | U_s = 1 ] \\
& = n^{\kappa} + n O \left( n^{-c(\epsilon) } \right)  \\
& = O \left( n^{ \max \{ \kappa, 1-c(\epsilon) \} } \right)  ,
\end{align}
where the successive steps hold by definition of $\Delta_{\psi_{n,\kappa}}(\epsilon)$, by definition of $\psi_{n,\kappa}$, since $1(B_v(\epsilon) , U_v = 1 )$ are identically distributed before the degree sequence is realized, since $\P [ U_v = 1 ] \leq 1$, and by Lemma \ref{lemMainTailBound}, respectively. Hence, by Markov's inequality,
\begin{equation}
\P \Big[ \Delta_{\psi_{n,\beta}} ( \epsilon ) \geq C n^{\bar{c}} \Big] \leq \frac{ \E [ \Delta_{\psi_{n,\kappa}}(\epsilon) ] }{ C n^{\bar{c}} }  = O \left( n^{ \max \{ \kappa, 1-c(\epsilon) \} - \bar{c} } \right) .
\end{equation}

\section{Experimental details} \label{appExpDetails}

\subsection{Dataset details} \label{appDatasets}

The following table shows details of the datasets used for experiments in Section \ref{secAlgorithms}. All datasets are available from the Stanford Network Analysis Platform \cite{snapnets}. The $\alpha_n$ values shown are used for all experiments conducted on the corresponding graph. We note that, while these are smaller than $\alpha_n$ values typically used, they are the same order of magnitude ($\alpha_n = 0.15$ is a common choice in the literature).

\begin{center}
\begin{tabular}{|l|l|c|c|c|}
\hline 
\textbf{Dataset} & \textbf{Description} & $n$ & $|E_n|$ & $\alpha_n = 1/  \log n$ \\ 
\hline 
soc-LiveJournal1 & Blogging social network & 4847571 & 68993773 & 0.065 \\ 
\hline 
soc-pokec & Slovakian social network & 1632803 & 30622564 & 0.070 \\ 
\hline 
web-Google & Partial web graph from Google & 875713 & 5105039 & 0.073 \\ 
\hline 
web-BerkStan & berkley.edu, stanford.edu web graph & 685230 & 7600595 & 0.074 \\ 
\hline 
web-Stanford & stanford.edu web graph & 281903 & 2312497 & 0.080 \\ 
\hline 
\end{tabular} 
\end{center}

\subsection{Scheme to bound estimation error} \label{appEstErrBound}

To bound $\| \pi_v - ( \alpha_n e_v^{\trans}  + \sum_{k \in K_n} \beta_v(k) \pi_k ) \|_1$, where $\beta_v(k)$ are defined in \eqref{eqChoiceOfBetaV}, we employ a power iteration scheme: we initialize $x_v^{(0)} = e_v^{\trans}$, and given $x_v^{(i-1)}$ for $i \geq 1$, we set
\begin{equation}
x_v^{(i)} = \alpha_n e_v^{\trans} + (1-\alpha_n) x_v^{(i-1)} \tilde{P} ,
\end{equation}
where $\tilde{P}$ is defined in \eqref{eqModifiedMCtransition}. We claim
\begin{equation}\label{eqNaiveIter}
x_v^{(i)} = \alpha_n \mu_v^{(i-1)} + (1-\alpha_n)^i e_v^{\trans} \tilde{P}^i ,
\end{equation}
where $\mu_v^{(i-1)} = e_v^{\trans} \sum_{j=0}^{i-1} (1-\alpha_n)^j \tilde{P}^j$ (as in Appendix \ref{secMainProof}-\ref{appMainProofs}). \eqref{eqNaiveIter} is easily proven inductively: the base of induction holds by definition; assuming true for $i-1$, we have
\begin{align}
x_v^{(i)} & = \alpha_n e_v^{\trans} + (1-\alpha_n) \left( \alpha_n \mu_v^{(i-2)} + (1-\alpha_n)^{i-1} e_v^{\trans} \tilde{P}^{i-1} \right) \tilde{P} \\
& = \alpha_n e_v^{\trans} + \alpha_n \sum_{j=1}^{i-1} (1-\alpha_n)^j \tilde{P}^j + ( 1-\alpha_n )^i e_v^{\trans} \tilde{P}^i = \alpha_n \mu_v^{(i-1)} + ( 1-\alpha_n )^i e_v^{\trans} \tilde{P}^i
\end{align}
as claimed. Now by Lemma \ref{lemL1bound} in Appendix \ref{secMainProof}, for any $i \in \N$ we obtain the following bound:
\begin{align}
\left\| \pi_v - \left( \alpha_n e_v^{\trans} + \sum_{k \in K_n} \beta_v(k) \pi_k \right) \right\|_1 & \leq \alpha_n  \mu_v^{(i-1)}(V_n \setminus K_n)  +  (1-\alpha_n)^i e_v^{\trans} \tilde{P}^i e_{V_n \setminus K_n} -\alpha_n  \label{eqNaiveIterErrorBoundpre} \\
& = x_v^{(i)} ( V_n \setminus K_n ) - \alpha_n . \label{eqNaiveIterErrorBound} 
\end{align}

From this bound, we can prove two other claims from Section \ref{secAlgorithms}. First, we note
\begin{align}
x_v^{(i)} ( V_n \setminus K_n ) & = \alpha_n e_v^{\trans} \sum_{j=0}^{i-1} (1-\alpha_n)^j \tilde{P}^j + (1-\alpha_n)^i e_v^{\trans} \tilde{P}^i e_{V_n \setminus K_n}  \leq \alpha_n \sum_{j=0}^{i-1} (1-\alpha_n)^j + (1-\alpha_n)^i = 1
\end{align}
where the inequality follows since $\tilde{P}$ is nonnegative with row sums bounded by 1. Hence, from \eqref{eqNaiveIterErrorBound}, the estimation error is bounded by $(1-\alpha_n)$ (as claimed in Section \ref{secAlgorithms}). Next, suppose $v \in V_{n,0}$, with $V_{n,0}$ given by \eqref{eqVn0defn}. Then $e_v^{\trans} \tilde{P}^j e_{V_n \setminus K_n} = 0$ by definition, so $x_v^{(i)}(V_n \setminus K_n) = \alpha_n$, and the estimation error is zero (as claimed in Section \ref{secAlgorithms}).

We can also bound the gap in the inequality \eqref{eqNaiveIterErrorBoundpre}: use \eqref{eqL1BoundNoM} in Appendix \ref{secProofLemL1bound} and \eqref{eqNaiveIterErrorBoundpre} to write
\begin{align}
& \left\| \pi_v - \left( \alpha_n e_v^{\trans} + \sum_{k \in K_n} \beta_v(k) \pi_k \right) \right\|_1 - \left( x_v^{(i)} ( V_n \setminus K_n ) - \alpha_n \right) \\
& \quad = \left( \alpha_n e_v^{\trans} \sum_{j=0}^{\infty} (1-\alpha_n)^i \tilde{P}^i e_{V_n \setminus K_n} - \alpha_n \right) - \left( x_v^{(i)} ( V_n \setminus K_n ) - \alpha_n \right) \\
& \quad = \alpha_n \mu_v^{(i-1)} ( V_n \setminus K_n ) + \alpha_n e_v^{\trans} \sum_{j=i}^{\infty} (1-\alpha_n)^i \tilde{P}^i e_{V_n \setminus K_n} - x_v^{(i)} ( V_n \setminus K_n )  \\
& \quad = \alpha_n e_v^{\trans} \sum_{j=i}^{\infty} (1-\alpha_n)^i \tilde{P}^i e_{V_n \setminus K_n} -  (1-\alpha_n)^i e_v^{\trans} \tilde{P}^i e_{V_n \setminus K_n} \geq - (1-\alpha_n)^i
\end{align}
where the inequality holds by dropping a nonnegative term and since $e_v^{\trans} \tilde{P}^i e_{V_n \setminus K_n} \leq 1$. Hence, if we let $i^* \geq \log_{ ( 1-\alpha_n) } ( tol ) $ for some desired tolerance $tol$, the bound $x_v^{(i^*)}(V_n \setminus K_n) - \alpha_n$ is tight within additive error $tol$. (For all experiments, we set $tol = 0.05$.)

To bound average error across $V_n \setminus K_n$, we instead use the iteration
\begin{equation}
x_{ V_n \setminus K_n }^{(0)} = \frac{e_{V_n \setminus K_n}^{\trans}}{ | V_n \setminus K_n |} , \quad x_{ V_n \setminus K_n }^{(i)} = \alpha_n \frac{e_{V_n \setminus K_n}^{\trans} }{ | V_n \setminus K_n |} + ( 1 - \alpha_n ) x_{ V_n \setminus K_n }^{(i-1)} \tilde{P} .
\end{equation}
Note $x_{ V_n \setminus K_n }^{(i)} = \frac{1}{ | V_n \setminus K_n | } \sum_{v \in V_n \setminus K_n} x_v^{(i)}$ when $i = 0$ by definition; assuming true for general $i-1$,
\begin{align}
x_{ V_n \setminus K_n }^{(i)} & = \alpha_n \frac{e_{V_n \setminus K_n}^{\trans} }{ | V_n \setminus K_n |} + ( 1 - \alpha_n ) \left( \frac{1}{ | V_n \setminus K_n | } \sum_{v \in V_n \setminus K_n} x_v^{(i-1)} \right) \tilde{P} \\
& = \frac{1}{|V_n \setminus K_n|} \sum_{v \in V_n \setminus K_n} \left( \alpha_n e_v^{\trans} + (1-\alpha_n) x_v^{(i-1)} \tilde{P} \right) = \frac{1}{|V_n \setminus K_n|} \sum_{v \in V_n \setminus K_n} x_v^{(i)} ,
\end{align}
i.e.\ $x_{ V_n \setminus K_n }^{(i)} = \frac{1}{ | V_n \setminus K_n | } \sum_{v \in V_n \setminus K_n} x_v^{(i)}\ \forall\ i \in \N$. It follows from above that
\begin{equation}
\frac{1}{|V_n \setminus K_n|} \sum_{v \in V_n \setminus K_n} \left\| \pi_v - \left( \alpha_n e_v^{\trans} + \sum_{k \in K_n} \beta_v(k) \pi_k \right) \right\|_1 \leq x_{V_n \setminus K_n}^{(i)} ( V_n \setminus K_n ) - \alpha_n ,
\end{equation}
which is the average error bound we compute for Figure \ref{figAvgErrPlots}. The argument above also implies this bound is tight within $tol$ when $i^* \geq \log_{ ( 1-\alpha_n) } ( tol )$.

\subsection{Details on Figure \ref{figAllErrPlots} experiment} \label{appAllErrorExp}

In addition to the histograms of $l_1$ error shown in Figure \ref{figAllErrPlotsHist}, we include a more detailed set of plots for the same experiment. Specifically, we estimate the error $ | \alpha_n e_v^{\trans}(w) + \sum_{k \in K_n} \beta_v(k) \pi_k(w) |$ as $x_v^{(i)}(w) - \alpha_n 1 ( w = v )$ (where $x_v^{(i)}$ is defined in Appendix \ref{appEstErrBound}), for each $w \in V_n \setminus K_n$, and for each $v$ in a subset of $V_n \setminus K_n$ of size $\approx 10^4$. (These $v$ were chosen uniformly from nodes with average error $\in (0.08,0.25)$, which corresponds to the regime of  linear decay in Figure \ref{figAllErrPlots}.) We also estimate the relative error, i.e.\ the ratio of this absolute error to an estimate of $\pi_v(w)$, for the same set of $(v,w)$. The estimate of $\pi_v$ is computed using the same power iteration scheme in Appendix \ref{appEstErrBound}, but replacing $\tilde{P}$ with $P$. Note this gives a lower bound on the true value of $\pi_v(w)$, thereby upper bounding relative error. Unfortunately, we cannot compute this relative error estimate when the estimate of $\pi_v(w)$ is zero; this occurred for only 10\% of $(v,w)$ pairs considered. Finally, for both absolute and relative error, we compute the number of error values lying in log-spaced bins and divide these values by $n$ to estimate the frequency of each error value. (We add values lying beyond the first and last bin edges to the first and last bins, respectively.)

Results are shown for the soc-Pokec dataset in Figure \ref{figPokecDetails}. (We note the spikes at left occur due to values lying beyond the first bin edge.) As an illustration for absolute error, the frequency of values above $10^{-3}$ was $\approx 10^{-5}$, i.e.\ the vast majority of nodes had estimated absolute error below $10^{-3}$. To illustrate the relative error, the frequency of values above $0.2$ was $\approx 0.09$, i.e.\ over 90\% of nodes had estimated relative error below $0.2$. The results for web-Google are shown in Figure \ref{figGoogleDetails}. For absolute error, the frequency above $10^{-3}$ was again $\approx 10^{-5}$; for relative error, again over 90 \% of nodes had error below $0.2$.

\begin{figure}[t]
\centering
\includegraphics[height=2.5in]{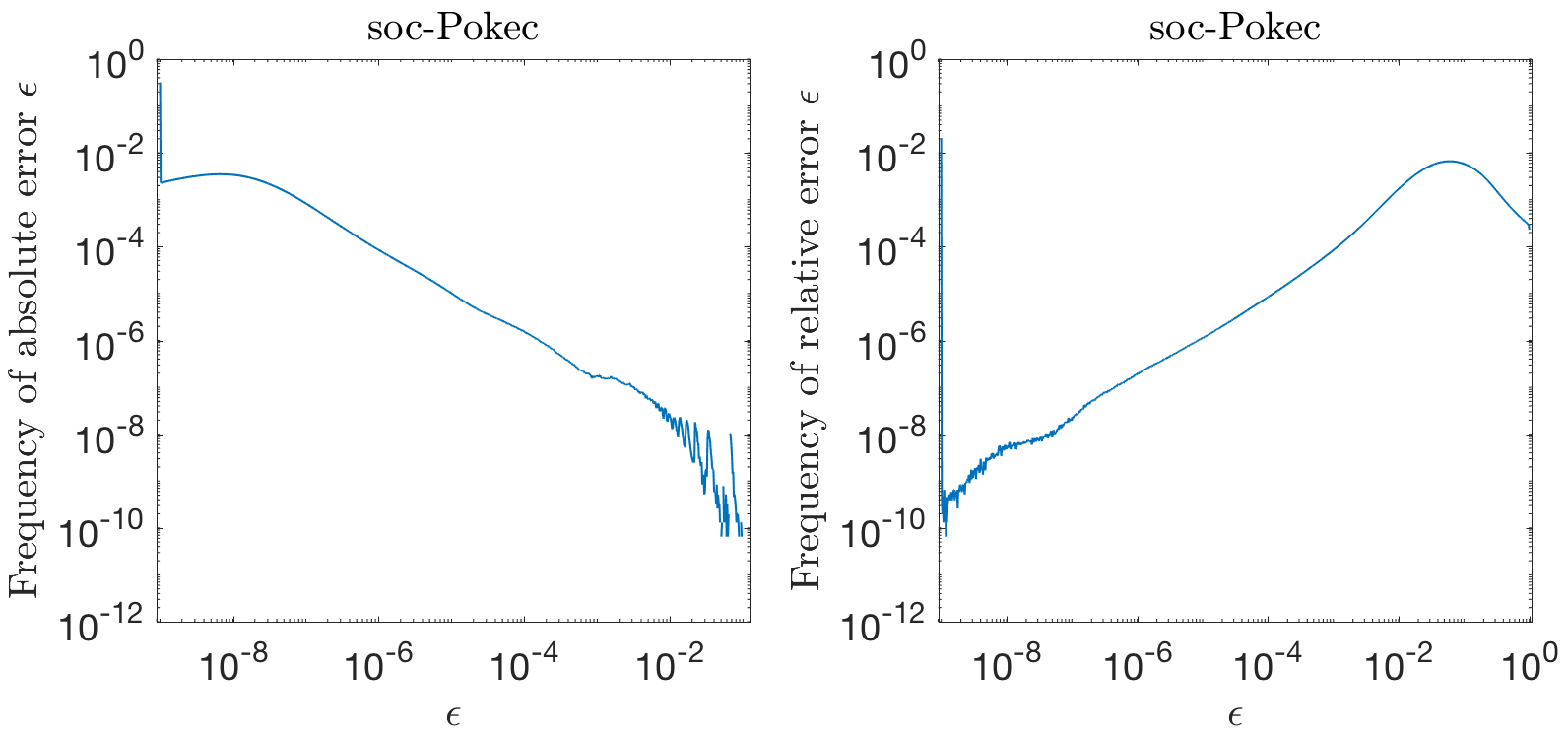}
\caption{Detailed error analysis for soc-Pokec social network.} \label{figPokecDetails}
\end{figure}

\begin{figure}[t]
\centering
\includegraphics[height=2.5in]{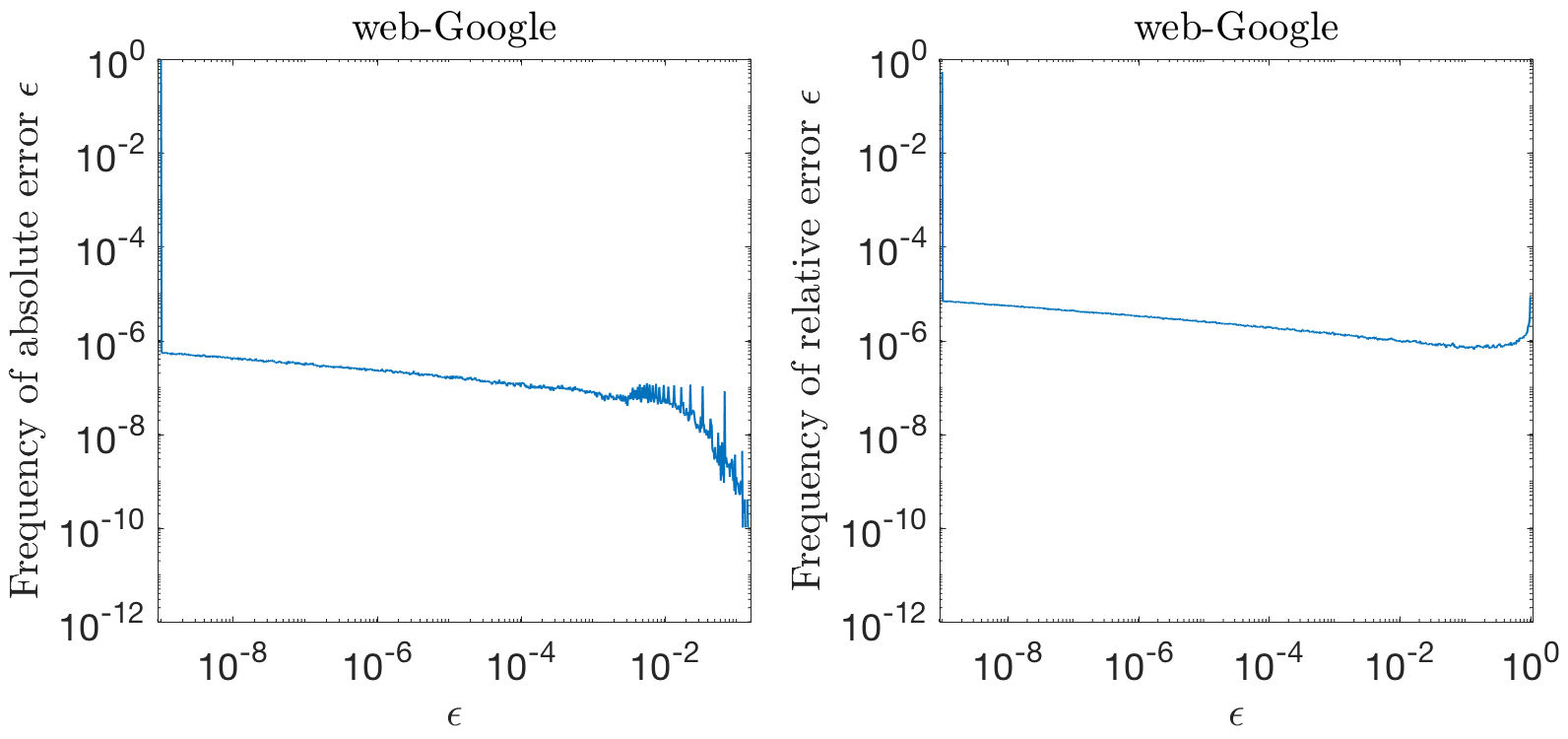}
\caption{Detailed error analysis for web-Google web graph.} \label{figGoogleDetails}
\end{figure}

\bibliographystyle{acm}
\bibliography{references}

\end{document}